\newtheorem{algorithm}{Algorithm}
\newcommand{\N}{ {\rm ne}}
\tikzstyle{int}=[draw, fill=blue!20, minimum size=2em]
\tikzstyle{init} = [pin edge={to-,thin,black}]
\tikzset{->-out/.style={decoration={
  markings,
  mark=at position #1 with {\arrow{triangle 45}}},postaction={decorate}}}
\tikzset{->-in/.style={decoration={
  markings,
  mark=at position #1 with {\arrowreversed{triangle 45}}},postaction={decorate}}}
\tikzset{--o/.style={decoration={
  markings,
  mark=at position .95 with {\arrowreversed[line width=0mm]{*}}},postaction={decorate}}}
\tikzset{o--/.style={decoration={
  markings,
  mark=at position 0 with {\arrowreversed[line width=0mm]{*}}},postaction={decorate}}}
\newcommand{\clbox}[4]{+(#1,#2) rectangle +(#3,#4);}
\newcommand{\C}{\mathbb{C}}
\newcommand{\F}{\mathbb{F}}
\newcommand{\X}{{\cal X}}
\newcommand{\G}{{\cal G}}
\newcommand\independent{\protect\mathpalette{\protect\independenT}{\perp}} 
\def\independenT#1#2{\mathrel{\rlap{$#1#2$}\mkern2mu{#1#2}}}
\newcommand{\indep}{\perp\!\!\!\!\perp}
\newtheorem{Theorem}{{Theorem}}
\newtheorem{Lemma}{{Lemma}}
\newtheorem{Prop}{{Proposition}}
\newtheorem{Example}{Example}
\begin{document}
\renewcommand\thmcontinues[1]{Continued}
\bibliographystyle{IEEEtran}

\title{Normal Factor Graphs as Probabilistic Models}

\author{
Ali Al-Bashabsheh and Yongyi Mao\\
School of Electrical Engineering and Computer Science\\
University of Ottawa, Canada\\
\{aalba059, yymao\}@site.uottawa.ca\\
}

\date{}
\maketitle	

\begin{abstract}
We present a new probabilistic modelling framework based on the recent notion of normal factor graph (NFG). We show that the proposed NFG models and
their transformations unify some existing models such as factor graphs, convolutional factor graphs, and cumulative distribution
networks. The two subclasses of the NFG models, namely the constrained and generative models,  exhibit a duality in their dependence structure.
Transformation of NFG models further extends the power of this modelling framework.
We point out the well-known NFG representations of parity and generator realizations of a linear code as generative and
constrained models, and comment on a more prevailing duality in this context. 
%
Finally, we address the algorithmic aspect of computing the exterior function of NFGs and the inference problem on NFGs.

\end{abstract}

\section{Introduction}

In the recent years, probabilistic graphical models have emerged from different disciplines as a powerful 
methodology for statistical inference and machine learning. Traditional such models, such as Bayesian networks 
\cite{Pearl:1988} and Markov random fields \cite{MarkovRF}, primarily aim at representing the joint
probability distribution (i.e., probability mass function or probability density function) of the random variables (RVs)  of interest in terms of their 
multiplicative factorization structure. Such ``multiplicative'' modelling semantics can be translated to the language of factor graphs (FGs) \cite{frank:factor},
a mathematical and graphical framework that is convenient and intuitive for representing the multiplicative factorization of a multivariate function. 
It is arguable that FGs and their variants, such as directed factor graphs \cite{FreyDirectedFG}, unify the various such multiplicative models
\cite{frank:factor, FreyDirectedFG}.  In contrast to the multiplicative models, convolutional factor graphs (CFGs) \cite{Mao:UAI2004, Mao2005:FGFT} are models which 
represent the joint distribution of interest in terms of convolutional factorizations. The CFG modelling framework has recently demonstrated
its power in a derivative of the CFG model, known as linear characteristic models (LCM) \cite{Bickson}, for inference in stable distributions.
Instead of directly representing the distribution of interest, LCM represents the characteristic function of the distribution. 
This is advantageous for stable distributions, which are only explicitly defined in the characteristic function domain.  We argue that the
philosophy of modelling in a ``transform domain'', as manifested in LCM,  should not be overlooked. This is because the unique nature of 
an inference task one is faced with may favour  a representation of some other objects than the probability distribution. Incidentally or 
not, cumulative distribution functions, which may be viewed as transformations of probability distributions, appear more favourable in 
structured ranking problems, and this recognition has led to the development of cumulative distribution networks (CDNs) \cite{Frey:CDN2008}.

In this paper, we present a new graphical model, the normal factor graph model, based on the notion of normal factor graphs  
(NFGs) \cite{Bashabsheh:HolTrans, Forney:MacWilliams2}. 
In the framework of NFGs, a powerful tool, called holographic transformation, has been developed . 
It was shown in \cite{Bashabsheh:HolTrans} that this tool unifies a duality theorem of Forney \cite{Forney2001:Normal} 
in coding theory and the Holant theorem of Valiant \cite{Valiant2004:Holographic} in complexity theory. 

The main objective of this paper is to show that the proposed NFG models, together with the holographic transformation technique, 
essentially unifies all the probabilistic models mentioned above.  We will focus on two subfamilies of NFG models, 
namely constrained and generative NFG models. We will show that constrained NFG models reduce to FGs, however, have a
different interpretation, and that generative NFG models, restricted to a special case, reduce to  CFGs. 
In addition, we reveal an interesting ``duality'' between the constrained and generative NFG models in their independence properties. 
A general model transformation technique is introduced, using which we show a CDN is equivalent to a transformed NFG model.

\section{Probabilistic Graphical Models}
\label{section:GMs}

Here we give a brief summary of the previous graphical models relevant to this work. 
As a notational convention that will be used throughout the paper, a RV is denoted by a capitalized letter, for example, 
by $X$, $Y, \ldots$, and the value it takes will be denoted by the corresponding lower-cased letter, i.e.,  $x, y, \ldots$.

\subsection{Factor Graphs}
\label{section:FGs}
A \emph{factor graph} (FG) \cite{frank:factor} is a bipartite graph $(V \cup U, E)$ with independent vertex sets $V$ and $U$, and edge set $E$, 
where each
vertex $v \in V$ is associated a variable $x_v$ from a finite alphabet $\X_v$, and each vertex $u \in U$ is associated
a complex-valued function $f_u$ on the cartesian product $\X_{{\rm ne}(u)}: = \prod \limits_{v \in {\rm ne}(u)} \X_{v}$, where 
${\rm ne}(u): = \{v \in V : \{u,v\} \in E\}$ is the set of neighbors (adjacent vertices) of  $u$.
Each function $f_u$ is referred to as a \emph{local} function and the FG is said to \emph{represent}
a function given by
$
f(x_{V}): = \prod_{u \in U} f_{u}(x_{{\rm ne}(u)}), 
$
where we use the ``variable set" notation, defined for any $A \subseteq V$, as $x_{A} := \{x_a : a \in A\}$. In the context of FGs, the function represented by the FG is often called the {\em global function}.
Fig.~\ref{fig:GM}~(a) is  an example FG.

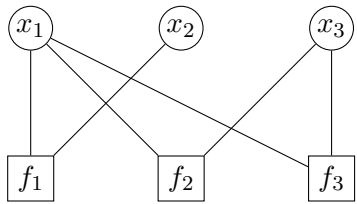
\begin{figure}[ht]
\centering
\begin{tikzpicture}[var/.style={node distance=2cm, draw, circle, inner sep=.5mm, minimum size=4mm}, fun/.style={node distance=2cm, draw, rectangle, minimum size=4mm}]
  \node(x1)[var]{$x_1$}; \node(x2)[var, right of=x1]{$x_2$}; \node(x3)[var, right of=x2]{$x_3$};
  \node(f1)[fun, below of=x1]{$f_1$}; \node(f2)[fun, right of=f1]{$f_2$}; \node(f3)[fun, right of=f2]{$f_3$};
  \path[every node/.style={}]
  (x1)edge(f1)edge(f2)edge(f3)
  (x2)edge(f1)
  (x3)edge(f2)edge(f3);
\end{tikzpicture}
\caption{An example of an FG, a CFG, and a CDN: (a) When viewed as an FG, the graph represents the
global function $f_{1}(x_1,x_2) f_{2}(x_1,x_3) f_{3}(x_1,x_3)$, (b) as
a CFG, the graph represents the global function
$f_{1}(x_1,x_2)*f_{2}(x_1,x_3)*f_{3}(x_1,x_3)$, and (c) as a CDN, the graph is understood as
an FG where each local function is a cumulative distribution, in which case,
the global function $f_{1}(x_1,x_2) f_{2}(x_1,x_3) f_{3}(x_1,x_3)$ satisfies the properties of a cumulative function, and is taken as the
joint cumulative distribution of the RVs $X_1,X_2$ and $X_3$.}
\label{fig:GM}
\end{figure}

Since independence (or conditional independence) relationships among RVs are often captured via the multiplicative factorization
of their joint probability distribution, FGs, when used to represent the joint distribution of RVs, form a convenient probabilistic model.  

The relationship between FG probabilistic model and other classical 
probabilistic models, such as Bayesian networks and Markov random fields, is well-known, see, e.g. \cite{frank:factor}.  In these models, all
featuring the "multiplicative semantics'' and aiming at representing the joint distributions, efficient inference algorithms, such as the belief
propagation or the sum-product algorithm, have been developed and demonstrated great power in various applications.


\subsection{Convolutional Factor Graphs}

Let $\X_1, \X_2$ and $\X_3$ be 
three (possibly distinct) finite sets. In general, we require the sets to have an ``abelian group'' structure, so that a notion of addition ``$+$''
and its inverse ``$-$'' are well defined. The requirement that the sets be finite is not particularly critical but only for the
convenience of argument. 

Let $f_1$ and $f_2$ be two function on $\X_1 \times \X_2$ and $\X_2 \times \X_3$, respectively. 
The \emph{convolution} of $f_1$ and $f_2$, denoted $f_1*f_2$, is defined as the function on $\X_1 \times \X_2 \times \X_3$ given by, 
$
(f_1*f_2)(x_1,x_2,x_3) := \sum_{x\in \X_2} f_1(x_1,x_2-x)f_2(x,x_3).
$
Following the convention in \cite{Mao:UAI2004}, we may write $(f_1*f_2)(x_1,x_2,x_3)$ as $f_1(x_1,x_2)*f_2(x_2,x_3)$ to emphasize the domains of the original functions. It is not hard to show that the convolution as defined above is both associative and commutative. 

A \emph{convolutional factor graph} (CFG) \cite{Mao2005:FGFT} is a bipartite graph that represents a global function that 
factors as the convolution of local functions. In fact, the representation semantics in a CFG is identical to that in an FG (often referred to as a ``multiplicative'' FG for distinction), except that the above defined notion of convolution is used as the product operation, cf. Fig.~\ref{fig:GM}~(b) for an example CFG. 
In \cite{Mao:UAI2004} CFGs were presented as
a probabilistic graphical model to represent the joint probability distribution of a set of observed RVs that are constructed from a collection of
independent sets of latent RVs via linear combinations. In addition,  the authors of   \cite{Mao2005:FGFT} presented an elegant duality result between
FGs and CFGs via the Fourier transform.  Such a duality and CFGs have recently been exploited by \cite{Bickson} in what is known as linear
characteristic model (LCM) for solving inference problems with stable distributions. 


\subsection{Cumulative Distribution Networks}
Let $X$ be a RV assuming its values from a finite ordered set $\X$. The \emph{cumulative distribution function} (CDF) of $X$ is defined as
$F_{X}(x) := \sum_{y \leq x} p_{X}(y)$, where $p_{X}$ is the probability distribution of $X$.  We note that this definition of 
CDF, as a function on $\X$, is slightly different from the classical definition of CDF, which is a function defined on the real 
line (or on the Euclidean space in the multivariate case). It nevertheless captures the same essence and is merely a different 
representation, suitable and convenient in the context of this paper.   Such a notion of CDF  can be extended to any collection
of RVs $X_1, \ldots, X_n$ assuming their values from the finite ordered sets $\X_1, \ldots, \X_n$ by defining their joint CDF
as $F_{X_1, \ldots, X_n}(x_1, \ldots, x_n) := \sum_{y_1 \leq x_1, \ldots, y_n \leq x_n} p_{X_1, \ldots, X_n}(y_1, \ldots, y_n)$,
where $p_{X_1, \ldots, X_n}$ is the joint probability distribution of $X_{1}, \ldots, X_n$. 
Note that while the marginal probability distribution is computed by \emph{summing} the joint probability distribution over the range of the
marginalized RVs, the marginal CDF is computed by \emph{evaluating} the joint CDF at the largest element of $\X_i$, for all marginalized
RVs $X_i$. (That is, if $I$ indexes the set of marginalized RVs and $X_i$ takes its values from the ordered set $\left\{ 1,
\ldots, |\X_i| \right\}$, then we evaluate the joint CDF at $|\X_i|$ for all $i \in I$.)
It is well known that CDFs satisfy 
a collection of properties as were articulated in standard textbooks and in \cite{Frey:CDN2008}. On the other hand,  any function 
satisfying such properties, which we shall refer to as ``CDF axioms'', may be regarded as a CDF and can be used to define a collection of RVs.

A \emph{cumulative distribution network} (CDN) \cite{Frey:CDN2008} is a multiplicative FG in which each local function satisfies 
the CDF axioms, then it is straightforward to show that the global function represented by the FG also satisfies the CDF axioms. 
The global function thus defines a collection of random variables, each represented by a variable node in the FG, and the CDN may 
serve as a probabilistic model. In \cite{Frey:CDN2008}, it was shown that CDNs are useful for structured ranking problems, and 
efficient inference algorithms for such problems were developed in these models. See Fig.~\ref{fig:GM}~(c) for an example CDN.

\section{Normal Factor Graphs}
\label{section:NFGs}

Now we give a quick overview of the framework of normal factor graphs (NFGs), 
and develop some notations and definitions for subsequent discussions.

\subsection{NFG and the exterior function}
A \emph{normal factor graph} (NFG) \cite{Bashabsheh:HolTrans, Forney:MacWilliams2} is a graph $(V,E)$, 
with vertex set $V$ and edge set $E$, where the edge set $E$ consists of two types of edges, a set $T$ 
of regular edges (also called internal edges), each incident on two vertices, and a set $L$ of ``half edges" 
(also called external edges or dangling edges), each incident on exactly one vertex. Every edge $e \in E$ is 
associated a variable $x_e$ from a finite alphabet $\X_e$, and every vertex $v \in V$ is associated a \emph{local} 
function $f_v$ on the cartesian-product $\X_{E(v)} := \prod\limits_{e \in E(v)} \X_e$, where $E(v)$ 
is the set of (internal and external) edges incident on $v$.
At some places we may use $T(v)$ and $L(v)$ to denote the internal and external edges incident 
on vertex $v$, respectively, and it is clear that $E(v) = T(v) \cup L(v)$ for all $v$.
We use the symbol $\G$ to refer to an NFG, and sometimes write $\G(V,E, f_{V})$, where
$f_{V} := \{f_{v} : v \in V\}$, to emphasize the NFG parameters.
 
An NFG $\G$ is associated with a function, called the {\em exterior function} and denoted by $Z_{\G}$, on the cartesian product $\X_{L} := \prod \limits_{e \in L} \X_{e}$, defined as
\[
Z_{\G}(x_{L}) := \sum_{x_{T}} \prod_{v \in V} f_v(x_{E(v)}).
\]
That is, the exterior function realized by an NFG is the product of all its local functions with the internal variables (edges) summed over.
An example NFG is shown in Fig.~\ref{fig:NFG_def}.
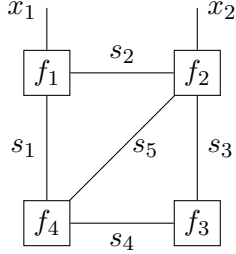
\begin{figure}[ht]
\centering
\begin{tikzpicture}[v/.style={node distance=2cm, draw, rectangle, minimum size=5mm}, d/.style={node distance=1cm}]
  \node(v1)[v]{$f_1$}; \node(v2)[v, right of=v1]{$f_2$};
  \node(v4)[v, below of=v1]{$f_4$}; \node(v3)[v, below of=v2]{$f_3$};
  \node(d1)[d, above of=v1]{}; \node(d2)[d, above of=v2]{};
  \path[every node/.style={}]
  (v1)edge node[above]{$s_2$}(v2) (v2)edge node[right]{$s_3$}(v3) (v2)edge node[right]{$s_5$}(v4) (v3)edge node[below]{$s_4$}(v4)
  (v4)edge node[left]{$s_1$}(v1)
  (v1)edge node[above left]{$x_1$}(d1) (v2)edge node[above right]{$x_2$}(d2)
  ;   
  \end{tikzpicture}
\caption{\small An NFG realizing the exterior function $\sum \limits_{s_1, \ldots, s_5} f_1(x_1,s_1,s_2) f_2(x_2,s_2,s_3,s_5) f_3(s_3,s_4)f_4(s_1,s_4,s_5)$.}
   \label{fig:NFG_def}
\end{figure}
 
Let $V = \{1, \ldots, |V|\}$, at some places for notational convenience, we may denote the right hand side of the above equation,
a ``sum-of-products form," by
$
\langle f_1, f_2, \ldots, f_{|V|} \rangle.
$
This notation is valid due to the distribution law, and the associativity and commutativity of addition and multiplication, making the bracketing and ordering of the arguments in the notation $\langle \cdot, \cdots, \cdot \rangle$ irrelevant. For example, the sum-of-products form encoded by the NFG in
Figure \ref{fig:NFG_def} may be written as $\langle f_1(x_1,s_1,s_2), f_2(x_2,s_2,s_3,s_5), f_3(s_3,s_4), f_4(s_1,s_4,s_5)\rangle$, or even $\langle f_1, f_2, f_3, f_4\rangle$ for simplicity.

We say that two NFGs are \emph{equivalent} if they realize the same exterior function.  At some places, we may extend this notion of equivalence to include other
graphical models and say, for instance, ``an FG is equivalent to an NFG,"  where we mean that the
product function of the FG is equal to the exterior function of the NFG. 

Finally, we call an NFG with no loops or parallel internal edges a \emph{simple} NFG, and an NFG with a 
bipartite underlying graph $(I \cup J, E)$ a \emph{bipartite} NFG, where $I$ and $J$ are the two independent
vertex sets. We impose no restriction on the cycle structure of NFGs.

\subsection{Special kinds of local functions}
The following (non-disjoint) classes of local functions will be of particular interest.

\noindent\textbf{\underline{Split functions}}
Let $\X_1, \ldots, \X_n$ be some finite sets, then we say a function $f$ on $\X_1 \times \cdots \times \X_n$
is a \emph{split} function via $x_1$, and refer to $x_1$ as the \emph{splitting variable}, if 
$$f(x_1, \ldots, x_n) = f_2(x_1,x_2) f_3(x_1,x_3) \ldots f_n(x_1,x_n),$$
for some bivariate functions
$f_2, \ldots, f_n$.
Note that it follows immediately that any bivariate function is trivially a split function (via any of its arguments).
Subsequently, if we do not explicitly specify the splitting variable of a split function,
then it is assumed to be the function's first argument.
Graphically, we draw a split function as in Fig.~\ref{fig:Ind}~(a), where
the in-ward directed edge is used to distinguish the splitting argument, and the
remaining arguments are successively encountered in a counter clock-wise manner with respect to the directed edge.


\noindent\textbf{\underline{Conditional functions}}
Let $\X_1, \ldots, \X_n$ be some finite sets, then a function $f$ on $\X_1 \times \cdots \times \X_n$
is said to be a \emph{conditional} function of $x_1$ given $x_2, \ldots, x_n$
if  there is a constant $c$ such that $\sum_{x_1} f(x_1, \ldots, x_n) = c,$ for all $x_2, \ldots, x_n$.
It is apparent that a non-negative real conditional function with $c = 1$ is a conditional probability distribution.
A conditional function is shown in Fig.~\ref{fig:Ind}~(b), where we use the same convention
of edge labeling as in the case of split
functions, but with an out-ward directed edge to mark the first argument. 

%
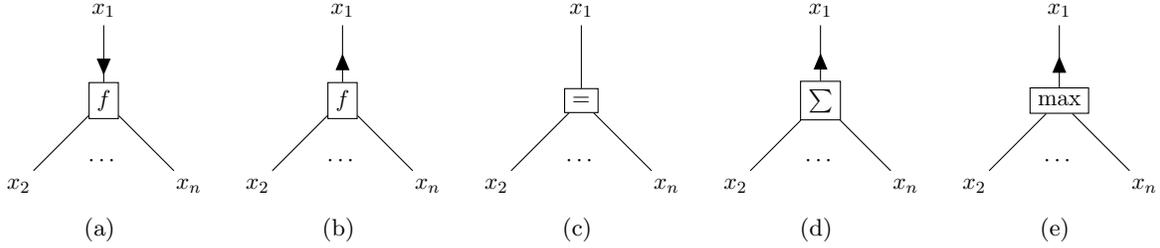
\begin{figure}[ht]
\centering
  \subfigure[]{
  \begin{tikzpicture}[scale=.8, v/.style={node distance=2cm, draw, rectangle},
    d/.style={node distance=2cm}, hdts/.style={node distance=1cm}, every node/.append style={transform shape}]
    \node(f)[v]{$f$}; 
    \node(d1)[d, node distance=1.5cm, above of=f]{$x_1$};
    \node(d2)[d, below left of=f]{$x_2$}; \node(dn)[d, below right of=f]{$x_n$};
    \path[every node/.style={transform shape}]
    (f)edge[->-in=.2](d1) edge(d2) edge(dn)
    node[below of=f]{$\cdots$};
  \end{tikzpicture} }
  \subfigure[]{
  \begin{tikzpicture}[scale=.8, v/.style={node distance=2cm, draw, rectangle},
    d/.style={node distance=2cm}, hdts/.style={node distance=1cm}, every node/.append style={transform shape}]
    \node(f)[v]{$f$}; 
    \node(d1)[d, node distance=1.5cm, above of=f]{$x_1$};
    \node(d2)[d, below left of=f]{$x_2$}; \node(dn)[d, below right of=f]{$x_n$};
    \path[every node/.style={transform shape}]
    (f)edge[->-out=.5](d1) edge(d2) edge(dn)
    node[below of=f]{$\cdots$};
  \end{tikzpicture} }
  \subfigure[]{
  \begin{tikzpicture}[scale=.8, v/.style={node distance=2cm, draw, rectangle},
    d/.style={node distance=2cm}, hdts/.style={node distance=1cm}, every node/.style={transform shape}]
    \node(f)[v]{$=$}; 
    \node(d1)[d, node distance=1.5cm, above of=f]{$x_1$};
    \node(d2)[d, below left of=f]{$x_2$}; \node(dn)[d, below right of=f]{$x_n$};
    \path[every node/.style={transform shape}]
    (f)edge(d1) edge(d2) edge(dn)
    node[below of=f]{$\cdots$};
  \end{tikzpicture} }
  \subfigure[]{
  \begin{tikzpicture}[scale=.8, v/.style={node distance=2cm, draw, rectangle},
    d/.style={node distance=2cm}, hdts/.style={node distance=1cm}, every node/.append style={transform shape}]
    \node(f)[v]{$\sum$}; 
    \node(d1)[d, node distance=1.5cm, above of=f]{$x_1$};
    \node(d2)[d, below left of=f]{$x_2$}; \node(dn)[d, below right of=f]{$x_n$};
    \path[every node/.style={transform shape}]
    (f)edge[->-out=.5](d1) edge(d2) edge(dn)
    node[below of=f]{$\cdots$};
  \end{tikzpicture} }
  \subfigure[]{
  \begin{tikzpicture}[scale=.8, v/.style={node distance=2cm, draw, rectangle},
    d/.style={node distance=2cm}, hdts/.style={node distance=1cm}, every node/.append style={transform shape}]
    \node(f)[v]{$\max$}; 
    \node(d1)[d, node distance=1.5cm, above of=f]{$x_1$};
    \node(d2)[d, below left of=f]{$x_2$}; \node(dn)[d, below right of=f]{$x_n$};
    \path[every node/.style={transform shape}]
    (f)edge[->-out=.5](d1) edge(d2) edge(dn)
    node[below of=f]{$\cdots$};
  \end{tikzpicture} }
\caption{A graphical illustration of: (a) split function (b) conditional function (c) $\delta_{=}$, (d) $\delta_{\Sigma}$, and (e) $\delta_{\max}$.}
\label{fig:Ind}
\end{figure}

One may observe a sense of ``duality'' between a conditional function and a split function through the following lemma.

\begin{Lemma}
Let $f(x_1,x_2,x_3)$ be a positive real split function, then up to a scaling factor,
$f$ may be written as $p_{X_1}(x_1)p_{X_2|X_1}(x_2|x_1)p_{X_3|X_1}(x_3|x_1)$ for some
probability distributions $p_{X_1}$, $p_{X_2|X_1}$, and $p_{X_3|X_1}$.
\label{lemma:split_prob}
\end{Lemma}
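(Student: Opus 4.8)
The plan is to use the definition of a split function directly. Since $f$ is split via its first argument $x_1$ (by the stated convention), we have $f(x_1,x_2,x_3) = f_2(x_1,x_2)\, f_3(x_1,x_3)$ for some bivariate functions $f_2$ and $f_3$. The idea is then to normalize each bivariate factor over the variable it does not share with $x_1$, turning $f_2$ and $f_3$ into conditional distributions of $x_2$ and $x_3$ given $x_1$, and to collect the leftover normalizing mass into a marginal distribution on $x_1$ together with a single global scalar.

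First I would dispose of a sign technicality. Because $f > 0$ and $f = f_2 f_3$, fixing $x_1$ we have $f_2(x_1,x_2)\, f_3(x_1,x_3) > 0$ for all $x_2, x_3$; this forces $f_2(x_1,\cdot)$ and $f_3(x_1,\cdot)$ to be nowhere zero and of constant sign, and, since their product is positive, of the \emph{same} sign. Whenever both are negative for a given $x_1$, I negate both $f_2(x_1,\cdot)$ and $f_3(x_1,\cdot)$, leaving the product unchanged; we may therefore assume $f_2, f_3 > 0$ throughout.

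With positivity in hand, finiteness of the alphabets makes $g_2(x_1) := \sum_{x_2} f_2(x_1,x_2)$ and $g_3(x_1) := \sum_{x_3} f_3(x_1,x_3)$ strictly positive, so I may set
\[
p_{X_2|X_1}(x_2|x_1) := \frac{f_2(x_1,x_2)}{g_2(x_1)}, \qquad
p_{X_3|X_1}(x_3|x_1) := \frac{f_3(x_1,x_3)}{g_3(x_1)},
\]
each of which sums to one in its free variable for every $x_1$ and is thus a genuine conditional distribution. Writing $h(x_1) := g_2(x_1) g_3(x_1)$, $c := \sum_{x_1} h(x_1)$, and $p_{X_1}(x_1) := h(x_1)/c$, substitution gives $f(x_1,x_2,x_3) = c\, p_{X_1}(x_1)\, p_{X_2|X_1}(x_2|x_1)\, p_{X_3|X_1}(x_3|x_1)$, which is the asserted form up to the scalar $c$.

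The argument is mostly bookkeeping, and the normalizations together with the nonvanishing of the denominators are immediate once positivity is secured. The one step that needs genuine care is the sign normalization: the two factors handed to us by the split definition need not individually be positive even though their product is, so one must first argue that each factor can be taken positive before the conditional distributions are well defined.
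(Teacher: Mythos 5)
Your proof is correct, but it takes a genuinely different route from the paper's. The paper argues top-down: it normalizes $f$ globally into a joint distribution of $(X_1,X_2,X_3)$, expands it by the chain rule as $p_{X_1}(x_1)p_{X_2|X_1}(x_2|x_1)p_{X_3|X_1X_2}(x_3|x_1,x_2)$, and then invokes the conditional independence $X_2 \indep X_3 \mid X_1$ implied by the split structure (forward-citing its later independence lemma, i.e.\ the standard graphical-models fact that a factorization $g(x_1,x_2)h(x_1,x_3)$ entails this conditional independence) to collapse the last factor to $p_{X_3|X_1}$. You argue bottom-up: you normalize the two bivariate split factors directly into conditional distributions and sweep the leftover mass into $p_{X_1}$ and a scalar. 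Your version is more elementary and self-contained --- it avoids the paper's slightly awkward forward reference, and in effect it re-proves the very independence fact the paper outsources. Your attention to the sign normalization is also well placed; the paper never confronts it because it never manipulates the factors themselves. One small caveat: in this paper local functions are complex-valued in general, and the split definition places no realness restriction on the bivariate factors, so strictly you should handle a common \emph{phase} rather than a common sign --- for fixed $x_1$, positivity of the product forces all values of $f_2(x_1,\cdot)$ to share one phase $e^{-i\theta}$ and all values of $f_3(x_1,\cdot)$ the opposite phase $e^{i\theta}$, and rotating the two factors by $e^{i\theta}$ and $e^{-i\theta}$ respectively reduces to your positive case; the argument is otherwise unchanged.
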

\begin{proof}
Since $f$ is a positive real function, it may be viewed (up to a scaling factor) as a probability
distribution of some RVs $X_1, X_2$ and $X_3$. Hence, $f$ can be written as
$f(x_1,x_2,x_3) = p_{X_1}(x_1)p_{X_2|X_1}(x_2|x_1)p_{X_3|X_1X_2}(x_3|x_1,x_2)$.
Since $f$ is a split function via $x_1$, as we will see
later (cf. Lemma~\ref{lemma:indep})
, we have $X_2 \indep X_3 | X_1$, and the claim follows.
\end{proof} 

Now compare a split function $f(x_1, x_2, x_3)$ with a conditional function $g(x_1, x_2, x_3)$ where 
let us assume that the respective scaling constants making the functions into distributions are 
both $1$.  If we are to draw the Bayesian networks (BN) \cite{Pearl:1988} corresponding to the two 
distributions $f$ and $g$ respectively, we shall see that the directions of the edges in the BN of 
$f$ are completely opposite to those in the BN of $g$. Describing it in terms of causality, one may 
say: The distribution $f$ prescribes that conditioned on the RV $X_1$, we generate the RVs $X_2$ and $X_3$ \emph{independently},
whereas the distribution $g$ prescribes that  $X_2$ and $X_3$ generates $X_1$ {\em jointly}. This 
sense of ``duality'' or ``reciprocity"  (evidently existing in the two kinds of functions involving 
arbitrary number of variables) also justifies our notations of opposite edge directions in denoting the two kinds of functions.

\begin{figure}[ht]
\centering
\subfigure[]{
\begin{tikzpicture}[v/.style={node distance=2cm, draw, circle, inner sep=.5mm, minimum size=5mm}]
  \node(v1)[v]{$x_1$}; \node(v2)[v, below left of=v1]{$x_2$}; \node(v3)[v, below right of=v1]{$x_3$};
  \path[every node/.style={}]
  (v1)edge[>=triangle 45, ->](v2)  (v1)edge[>=triangle 45, ->](v3);
  ;
\end{tikzpicture}	}
\hspace{.5cm}
\subfigure[]{
\begin{tikzpicture}[v/.style={node distance=2cm, draw, circle, inner sep=.5mm, minimum size=5mm}]
  \node(v1)[v]{$x_1$}; \node(v2)[v, below left of=v1]{$x_2$}; \node(v3)[v, below right of=v1]{$x_3$};
  \path[every node/.style={}]
  (v2)edge[>=triangle 45, ->](v1)  (v3)edge[>=triangle 45, ->](v1);
\end{tikzpicture}	}
\caption{The BNs corresponding to: (a) a split function, and (b) a conditional function.}
\end{figure}
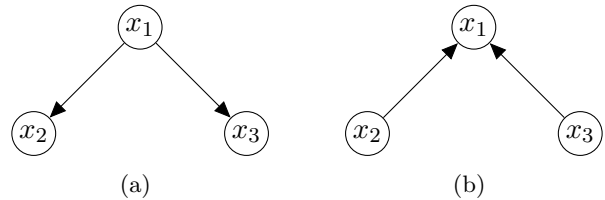



\noindent\textbf{\underline{Indicator functions and the Iverson's convention}}
An \emph{indicator function} is a $\{0,1\}$-valued function, and at many places we will use
the Iverson's convention $[P]$ to denote the indicator function on the true/false
proposition $P$ defined as $[P] := 1$ if and only if $P$ is true.
For any $x, y \in \X$,
we will often use the proposition $x \stackrel{?}{=} y$, which is defined to be ``true'' if $x=y$ and to be ``false'' otherwise.
Subsequently, we will use the symbol ``='' instead of ``$\stackrel{?}{=}$'', and assume the distinction from the usual use of
``='' for assignment is clear from the context--- Namely, any occurrence of ``='' inside, and only inside, the Iverson brackets refers to ``$\stackrel{?}{=}$''.
We will mainly be interested in the following indicator functions:

\noindent\textbf{Evaluation indicator}
Let $\X$ be a finite alphabet, the \emph{evaluation indicator} (evaluating at some $\overline{x} \in \X$) is an indicator function 
on $\X$ defined as $\delta_{\overline{x}}(x) := [x = \overline{x}]$ for all $x \in \X$, 

\noindent\textbf{Equality indicator} Let $\X$ be a finite alphabet, the \emph{equality indicator} on 
$n$ variables is defined as $\delta_{=}(x_1, \ldots, x_n) := \prod\limits_{i=2}^{n}[x_1 = x_i]$ for all $x_1, \ldots, x_n \in \X$. 
Note that an equality indicator is a split function via any of its arguments, hence, at many places,
we may illustrate it graphically without a directed edge.

\noindent\textbf{Constant-one indicator} Let $\X$ be a finite alphabet, the \emph{constant-one indicator} is a degenerate indicator function
defined as ${\textbf{1}}(x):=1$ for all $x \in \X$.

\noindent\textbf{Sum indicator} Let $\X$ be a finite abelian group (additively written), the 
\emph{sum indicator} on $n$ variables is defined as $\delta_{\Sigma}(x_1, \ldots,x_n) := [x_1 = x_2 + \cdots + x_n]$ for all $x_1, \ldots, x_n \in \X$.
%
A closely related indicator function, which is more popular in the factor graphs and normal graphs literature, is the \emph{parity indicator} function, denoted
$\delta_{+}$, and defined as $\delta_{+}(x_1, \ldots, x_n) := [x_1 + \cdots + x_n = 0]$. It is clear
that $\delta_{\Sigma}(x_1, \ldots, x_n) = \delta_{+}(x_1,-x_2,\ldots,-x_n) = \delta_{+}(-x_1,x_2,\ldots,x_n)$. 

An elementary result concerning sum indicator function is the following lemma.
\begin{Lemma} For any functions $f$ on $\X_1\times \X_2$ and $g$ on
$\X_2\times \X_3$, where $\X_1$, $\X_2$ and $\X_3$ are abelian groups,
\[
\sum\limits_{t, u}f(x, t)g(u, z) \delta_{\sum}(y, t, u)=f(x, y)*g(y,z),
\]
where $\delta_{\sum}$ above is defined on $\X_2^{3}$.
\label{lem:conv_sum}
\end{Lemma}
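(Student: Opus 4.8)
The plan is to reduce the left-hand side to the convolution by eliminating one summation variable using the sum indicator, and then to match what remains against the definition of convolution. First I would unfold the sum indicator according to its definition, writing $\delta_{\Sigma}(y,t,u) = [y = t + u]$. Since $\X_2$ is an abelian group, subtraction is well defined, so for each fixed value of $u$ this bracket is nonzero precisely when $t = y - u$, and it equals $1$ there. Hence the inner sum over $t$ contains exactly one surviving term, and the double sum over the pair $(t,u)$ collapses to a single sum over $u$.

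Carrying this out, I expect to obtain
\[
\sum_{t,u} f(x,t)\, g(u,z)\, \delta_{\Sigma}(y,t,u) = \sum_{u} f(x, y-u)\, g(u,z).
\]
The final step is to recognize the right-hand side as a convolution. Recalling that the convolution was defined by $(f_1 * f_2)(x_1,x_2,x_3) := \sum_{x \in \X_2} f_1(x_1, x_2 - x) f_2(x, x_3)$, I would set $f_1 = f$, $f_2 = g$, and identify $x_1 = x$, $x_2 = y$, $x_3 = z$, so that the shared group variable $y$ plays the role of $x_2$. Renaming the dummy variable $u$ to $x$ (the summation index in the convolution definition) then shows that $\sum_{u} f(x, y-u) g(u,z)$ is exactly $(f*g)(x,y,z)$, which in the emphasized notation is $f(x,y) * g(y,z)$. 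This completes the identification.

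There is no substantive obstacle here; the lemma is essentially a bookkeeping identity. The only point requiring genuine care is alignment of conventions: I must check that the shift $y - u$ produced by the sum indicator lands on the \emph{correct} argument of $f$ so that it matches the argument that is shifted in the definition of convolution (namely the second, shared argument), and that the non-shifted argument of $g$ is the one carrying the summation variable $u$. Getting the roles of the shared variable $y$ and the summed-over variable straight between the two notations is the whole content of the proof; once the dictionary is fixed, the two expressions coincide termwise.
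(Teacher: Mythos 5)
Your proof is correct and follows exactly the route the paper intends: the paper's own proof is the one-line remark that the identity ``follows directly from the definition of the convolution,'' and your argument is precisely that direct verification, spelled out (collapse the sum over $t$ via $[y = t+u]$, then match $\sum_u f(x,y-u)\,g(u,z)$ to the definition of $f*g$). The alignment of arguments you checked is the right and only detail, and it checks out.
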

\begin{proof}
Follows directly from the definition of the convolution.
\end{proof}


\noindent\textbf{Max indicator} Let $\X$ be an ordered finite set. The \emph{max indicator} on $n$ variable is
defined as $\delta_{\max}(x_1, \ldots, x_n):=[x_1 = \max(x_2, \ldots, x_n)]$ for all $x_1, \ldots, x_n \in \X$. 
Let $I$ be a finite set and let $\X_i$ be an ordered finite set for all $i \in I$. The definition of the max indicator
is extended to the partially-ordered set $\X_I$ by defining 
$\delta_{\max}(x_{I}, \ldots, x'_{I}):=\prod\limits_{i \in I}\delta_{\max}(x_i, \ldots, x'_i)$ for all $x_{I}, \ldots, x'_{I} \in \X_{I}$.

A graphical illustration of the above local functions is shown in Fig.~\ref{fig:Ind}.
Note that the max and sum indicator functions are both conditional functions as illustrated by the
directed edges in Figs.~\ref{fig:Ind}~(d) and (e). 
It is worth noting that the bivariate max indicator and bivariate sum indicator are both equivalent to the bivariate equality indicator, which is a split and a conditional function.

\subsection{Vertex merging/splitting and holographic transformations}

In the framework of NFGs,  a pair of graphical procedures, known as the \emph{vertex merging} and 
\emph{vertex splitting} procedures are particularly useful. In a \emph{vertex merging} procedure,
two vertices representing functions $f$ and $g$ are replaced with a single vertex representing the 
function $\langle f,g \rangle$; conversely, in a \emph{vertex splitting} procedure, a vertex 
representing a function that can be expressed in the sum-of-products form $\langle f,g\rangle$ is 
replaced with an NFG realizing the function $\langle f,g \rangle$.
The two procedures are illustrated in Fig.~\ref{fig:merging-splitting}, and are closely related to the concept of opening/closing the box
\cite{Loeliger:Intro, Loeliger:quantum}. Note that when we put a dashed box around $f$ and $g$ we mean 
that they are replaced with the single function node $\langle f, g \rangle$, in other words, the last
two pictures in Fig.~\ref{fig:merging-splitting} refer to the same NFG. 
It is easy to see that the exterior function of any NFG is invariant
under the vertex merging/splitting procedure \cite{Bashabsheh:HolTrans}.
\begin{figure}[ht]
\centering
\begin{tikzpicture}[v/.style={draw, rectangle, minimum size=6mm}]
  \def\dist{1.5cm}; \def\shift{.26cm};
  \node(f) at (0, \dist)[v]{$f$}; \node(fdL) at (-\dist, \dist){}; \node(fdR) at (\dist, \dist){};
  \node(g) at (0, 0)[v]{$g$}; \node(gdL) at (-\dist,0){}; \node(gdR) at (\dist,0){};
  \draw (f)--(fdL); \draw (g)--(gdL);
  \draw[transform canvas={xshift=-\shift}] (f) -- (g); \draw[transform canvas={xshift=\shift}] (f) -- (g); 
  \draw[transform canvas={yshift=-\shift}] (f) -- (fdR); \draw[transform canvas={yshift=\shift}] (f) -- (fdR); 
  \draw[transform canvas={yshift=-\shift}] (g) -- (gdR); \draw[transform canvas={yshift=\shift}] (g) -- (gdR); 
  \node(hdts)[node distance=\dist/2, below of=f]{$\cdots$};
  \node(vdts1) at (\dist/2, \dist)[rotate=90]{$\cdots$}; \node(vdts2) at (\dist/2, 0)[rotate=90]{$\cdots$};
\end{tikzpicture}
\begin{tikzpicture}[]
  \def\dist{1.5cm};
  \node(merg) at (0,\dist){$\stackrel{\rm merging}{\longrightarrow}$};
  \node(merg) at (0,0){$\stackrel{\rm splitting}{\longleftarrow}$};
\end{tikzpicture}
\begin{tikzpicture}[v/.style={draw, rectangle, minimum size=6mm}]
  \def\dist{1.5cm}; \def\shift{.26cm};
  \node(f) at (0, \dist)[v]{$f$}; \node(fdL) at (-\dist, \dist){}; \node(fdR) at (\dist, \dist){};
  \node(g) at (0, 0)[v]{$g$}; \node(gdL) at (-\dist,0){}; \node(gdR) at (\dist,0){};
  \draw (f)--(fdL); \draw (g)--(gdL);
  \draw[transform canvas={xshift=-\shift}] (f) -- (g); \draw[transform canvas={xshift=\shift}] (f) -- (g); 
  \draw[transform canvas={yshift=-\shift}] (f) -- (fdR); \draw[transform canvas={yshift=\shift}] (f) -- (fdR); 
  \draw[transform canvas={yshift=-\shift}] (g) -- (gdR); \draw[transform canvas={yshift=\shift}] (g) -- (gdR); 
  \node(hdts)[node distance=\dist/2, below of=f]{$\cdots$};
  \node(vdts1) at (\dist/2, \dist)[rotate=90]{$\cdots$}; \node(vdts2) at (\dist/2, 0)[rotate=90]{$\cdots$};
  \draw[dashed] (hdts) \clbox{-.5}{-1.1}{.5}{1.1};
\end{tikzpicture}
\begin{tikzpicture}[]
  \def\dist{1.5cm};
  \node(iff) at (\dist/2,\dist*.55){$\Leftrightarrow$};
  \node(dump) at (0,0){};
\end{tikzpicture}
\begin{tikzpicture}[every node/.style={minimum size=6mm}]
  \def\dist{1.5cm}; \def\shift{.26cm};
  \node(f) at (0, \dist){}; \node(fdL) at (-\dist, \dist){}; \node(fdR) at (\dist, \dist){};
  \node(g) at (0, 0){}; \node(gdL) at (-\dist,0){}; \node(gdR) at (\dist,0){};
  \draw[transform canvas={xshift=-.2cm}] (f)--(fdL); \draw[transform canvas={xshift=-.2cm}] (g)--(gdL);
  \draw[transform canvas={yshift=-\shift, xshift=.2cm}] (f) -- (fdR); \draw[transform canvas={yshift=\shift, xshift=.2cm}] (f) -- (fdR); 
  \draw[transform canvas={yshift=-\shift, xshift=.2cm}] (g) -- (gdR); \draw[transform canvas={yshift=\shift, xshift=.2cm}] (g) -- (gdR); 
   \node(hdts) at (0,\dist/2){$\langle f, g \rangle$};
  \node(vdts1) at (\dist/2, \dist)[rotate=90]{$\cdots$}; \node(vdts2) at (\dist/2, 0)[rotate=90]{$\cdots$};
  \draw (hdts) \clbox{-.5}{-1.1}{.5}{1.1};
\end{tikzpicture}
    \caption{Vertex merging and vertex splitting.}
    \label{fig:merging-splitting}
\end{figure}
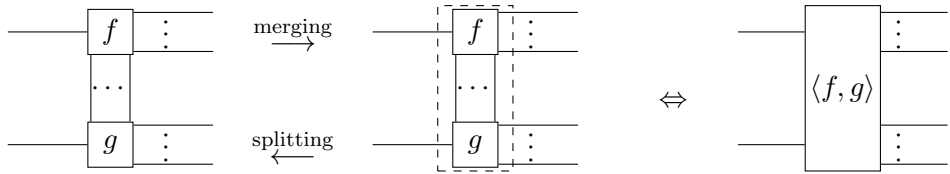

The simple pair of procedures enables the notion of ``holographic transformation'' for NFGs, which transforms each local function of an NFG while keeping the graph topology. Further, the ``generalized Holant theorem" relates the exterior function realized by the holographically transformed NFG with that realized by the original NFG. We summarize these notions below, and refer the interested reader to \cite{Bashabsheh:HolTrans} for more details.

Suppose that $f$ and $g$ are two bivariate functions on $\X\times \X$ such that 
$\big\langle f(x, s), g(s, x')\big\rangle = \delta_{=}(x, x')$ for all $(x, x') \in \X\times \X$, then inserting to an NFG edge (regular or half),
representing an $\X$-valued variable,  
the pair of functions $f$ and $g$ 
is equivalent to inserting the function $\delta_{=}$, which can be verified not to 
change the exterior function. The functions $f$ and $g$ in this case are called an inverse-pair
of transformers, and such a graphical procedure is called {\em inverse-pair transformer insertion}. 

Given an NFG ${\cal G}$ with a set of half edges $L$, the following procedure defines a transformed NFG with the same topology as $\G$:
\begin{enumerate} 
\item[(H1)] In each half edge $x_i$, insert a bivariate function $g_i(x_i, y_i)$--- We may refer to such transformers as
  \emph{external transformers}.
\item[(H2)] In each internal edge, insert an inverse-pair of transformers--- We may refer to such transformers as \emph{internal
  transformers}.
\item[(H3)] For each original vertex $v$ in ${\cal G}$, apply vertex merging procedure to merge $f_v$ and its surrounding vertices.
\end{enumerate}
Such a transformation of NFG is known as a \emph{holographic transformation}. 
If we denote the resulting NFG by ${\cal G}'$, then it is clear that 
$Z_{\G'}(y_L)=
\big\langle Z_{\G}(x_L), \prod_{i\in L} g_i(x_i, y_i)\big\rangle,$
since only Step (H1) above affects the exterior function.  This is essentially the {\em generalized Holant theorem} of \cite{Bashabsheh:HolTrans},
which in subsequent discussions will be referred to as the GHT. 

\section{NFG Models}
\label{section:NFGs_prob}

We now present a generic NFG probabilistic model.  Formally, an {\em NFG probabilistic model} or simply an {\em NFG model} is an NFG whose exterior function is up to scale the joint distribution of 
some RVs (each represented by a half edge) and which satisfies the following two properties:
1) the NFG is bipartite and simple; 2) half edges are only incident on one independent vertex set and there is exactly one half
edge incident on each vertex in this set; we call these vertices {\em interface vertices}, and call the ones in the other vertex
set {\em latent vertices}.  We will call the corresponding functions indexed by these two vertex sets {\em interface functions}
and {\em latent functions} respectively, although we will be quite loose in speaking of a vertex and a function exchangeably as
we do for a variable and an edge/half edge. We will customarily denote the set of interface vertices by $I$ and the set of latent
vertices by $J$. \footnote{We note that demanding no half edge incident on the latent functions entails no loss of generality,
since if there is such a half edge, one may always insert a bivariate equality indicator function (or equivalently a bivariate
max indicator or sum indicator) into the half edge, which converts the NFG to an equivalent one with this half edge turned into
a regular edge. Since  the bivariate equality indicator is both a split function and a conditional function, inserting such a
function has no impact on our later restriction on the interface functions, where we require them to be all split functions or all conditional functions. }

Note that since each interface function has exactly one half edge incident on it, unless it is more convenient to make the distinction, we will subsequently identify the set of half edges using $I$, i.e., an interface vertex will index both its function and the half edge incident on it.
An example NFG model is shown in Figure \ref{fig:NFG_3_pic} (a), where the top-layer vertices are
the interface vertices, and the bottom-layer vertices are the latent vertices.  If necessary, we may formally denote such NFG using such a notation as $\G(I \cup J, E, f_{I \cup J})$. 

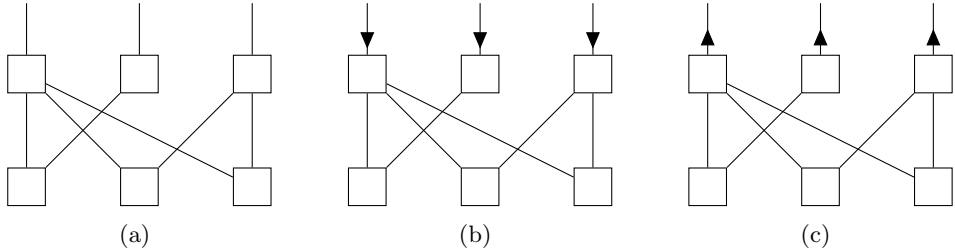
\begin{figure}
\centering
  \subfigure[]{
    \begin{tikzpicture}[v/.style={node distance=1.5cm, draw, rectangle}, d/.style={node distance=1.2cm}, every node/.append style={transform shape, minimum size=5mm}]
      \node(u1)[v]{}; \node(u2)[v, right of=u1]{}; \node(u3)[v, right of=u2]{}; 
      \node(v1)[v, below of=u1]{}; \node(v2)[v, below of=u2]{}; \node(v3)[v, below of=u3]{}; 
      \node(d1)[d, above of=u1]{}; \node(d2)[d, above of=u2]{}; \node(d3)[d, above of=u3]{}; 
      \path[every node/.style={transform shape}]
      (u1)edge(v1)edge(v2)edge(v3)	(u2)edge(v1)	(u3)edge(v2)edge(v3)
      (u1)edge(d1)	(u2)edge(d2)	(u3)edge(d3);
    \end{tikzpicture} }
    \hspace{.5cm}
  \subfigure[]{
    \begin{tikzpicture}[v/.style={node distance=1.5cm, draw, rectangle}, d/.style={node distance=1.2cm}, every node/.append style={transform shape, minimum size=5mm}]
      \node(u1)[v]{}; \node(u2)[v, right of=u1]{}; \node(u3)[v, right of=u2]{}; 
      \node(v1)[v, below of=u1]{}; \node(v2)[v, below of=u2]{}; \node(v3)[v, below of=u3]{}; 
      \node(d1)[d, above of=u1]{}; \node(d2)[d, above of=u2]{}; \node(d3)[d, above of=u3]{}; 
      \path[every node/.style={transform shape}]
      (u1)edge(v1)edge(v2)edge(v3)	(u2)edge(v1)	(u3)edge(v2)edge(v3)
      (u1)edge[->-in=.2](d1)	(u2)edge[->-in=.2](d2)	(u3)edge[->-in=.2](d3);
    \end{tikzpicture} }
    \hspace{.5cm}
  \subfigure[]{
    \begin{tikzpicture}[v/.style={node distance=1.5cm, draw, rectangle}, d/.style={node distance=1.2cm}, every node/.append style={transform shape, minimum size=5mm}]
      \node(u1)[v]{}; \node(u2)[v, right of=u1]{}; \node(u3)[v, right of=u2]{}; 
      \node(v1)[v, below of=u1]{}; \node(v2)[v, below of=u2]{}; \node(v3)[v, below of=u3]{}; 
      \node(d1)[d, above of=u1]{}; \node(d2)[d, above of=u2]{}; \node(d3)[d, above of=u3]{}; 
      \path[every node/.style={transform shape}]
      (u1)edge(v1)edge(v2)edge(v3)	(u2)edge(v1)	(u3)edge(v2)edge(v3)
      (u1)edge[->-out=.5](d1)	(u2)edge[->-out=.5](d2)	(u3)edge[->-out=.5](d3);
    \end{tikzpicture} }
\caption{(a) an NFG model (b) a constrained NFG model (c) a generative NFG model}
\label{fig:NFG_3_pic}
\end{figure}

In this modelling framework, we will focus on two ``dual'' families of models, which we call the {\em constrained NFG models} and
the {\em generative NFG models} respectively (see, e.g., Figure \ref{fig:NFG_3_pic} (b) and (c) respectively for a quick preview).
We will demonstrate how these models are related to the previous models such as FG and CFG.  We will also introduce ``transformed
NFG models'' in Section~\ref{section:transformed_NFGs}, a special case of which reduces to CDN.

\subsection{Constrained NFG model}

A \emph{constrained} NFG  model is an NFG model in which all interface functions are split functions via their respective external variables.



To bring more intuition into this definition, we first take a slight digression and show in the following lemma that it is possible to ``shape'' a distribution by "random rejection".

\begin{Lemma}
Let $X$ be a RV with a probability distribution $p_{X}$, where $X$ assumes its values from a finite set $\X$, and let
$h$ be a normalized non-negative real function on $\X$ with a non-empty support, where the normalization is in the sense that $\max \limits_{x \in \X} h(x) = 1$. Draw $x$ from $p_X$ and accept it with probability $h(x)$ and reject it with probability $1-h(x)$. If $x$ is accepted, output $x$; otherwise repeat the process until some other $x'$ is drawn and accepted. Denote the output random variable by $Y$. Then the probability distribution $p_Y$ of $Y$ is, up to scale, $h(y)p_{X}(y)$, for all $y\in \X$.
\label{remark:switch}
\end{Lemma}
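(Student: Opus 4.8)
The plan is to recognize the described procedure as classical rejection sampling and to compute the law of the output $Y$ directly, exploiting the fact that the iterations are independent and identically distributed. First I would set up notation: each pass through the procedure is an independent trial in which a value $x$ is drawn from $p_X$ and then, independently, ``accepted'' with probability $h(x)$. Let $q := \sum_{x \in \X} p_X(x) h(x)$ denote the probability that a single trial ends in acceptance. Since $h$ is non-negative with non-empty support and $\max_x h(x) = 1$, we have $0 < q \le 1$ provided $p_X$ places positive mass on the support of $h$; I would flag this as the implicit non-degeneracy condition needed for the procedure to terminate with probability one, and in particular it gives $0 \le 1 - q < 1$.

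Next I would compute $p_Y(y)$ by decomposing the event $\{Y = y\}$ according to the number of rejected trials preceding the first acceptance. The output equals $y$ exactly when, for some $k \ge 0$, the first $k$ trials are all rejected and the $(k+1)$-st trial draws $y$ and accepts it. By independence across trials, the probability of $k$ consecutive rejections is $(1-q)^k$, and the probability that a single trial draws $y$ and accepts it is $p_X(y) h(y)$. Summing over $k$ yields a geometric series,
\[
p_Y(y) = \sum_{k=0}^{\infty} (1-q)^k \, p_X(y) h(y) = \frac{p_X(y) h(y)}{q},
\]
whose convergence is guaranteed by $0 \le 1 - q < 1$.

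Finally I would observe that $1/q$ is a constant independent of $y$, so $p_Y(y) = \tfrac{1}{q} h(y) p_X(y)$ is precisely $h(y) p_X(y)$ up to the scale factor $1/q$, which is the assertion of the lemma. As a consistency check, $\sum_y p_Y(y) = \tfrac{1}{q}\sum_y h(y) p_X(y) = 1$, confirming that $p_Y$ is a genuine probability distribution and identifying $q$ as the correct normalizing constant.

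I expect the only real subtlety — rather than a genuine obstacle — to be the justification that the procedure halts almost surely, i.e.\ that $q > 0$, which requires the supports of $p_X$ and $h$ to intersect; under the stated hypotheses this is the natural non-degeneracy assumption. An equivalent but slightly slicker route, bypassing the explicit series, would be to argue that by the i.i.d.\ structure the output is distributed as $X$ conditioned on the event that its trial is accepted, giving $p_Y(y) = \Pr[X = y \mid \text{accepted}] = p_X(y) h(y)/q$ directly; that argument, however, still needs the same $q > 0$ condition to make the conditioning well defined.
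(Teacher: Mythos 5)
Your proof is correct, and it is more explicit than the one in the paper. The paper's own argument is essentially the ``slicker route'' you mention at the end: it introduces a $\{0,1\}$-valued acceptance indicator $Z$ with $p(Z=1 \mid X=x) = h(x)$, declares the event $Y=y$ equivalent to $(X,Z)=(y,1)$, and writes $p_Y(y) = p(X=y)\,p(Z=1\mid X=y) = p_X(y)h(y)$ --- silently absorbing the conditioning on acceptance into the ``up to scale'' qualifier of the statement, and never mentioning the repetition of trials at all. Your geometric-series decomposition over the number of rejected trials buys two things the paper's one-liner does not: it makes the i.i.d.\ repetition structure and the normalizing constant $1/q$ with $q=\sum_x p_X(x)h(x)$ completely explicit, and it surfaces the genuine (if minor) hypothesis gap that the paper glosses over --- namely that the procedure terminates almost surely only if $q>0$, i.e.\ only if the support of $h$ intersects that of $p_X$; the stated hypothesis that $h$ has non-empty support does not by itself guarantee this. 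The paper's approach, in exchange, is shorter and avoids any convergence discussion by working per-trial. Both are valid; yours is the more careful reading of what the procedure actually does.
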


\begin{proof}
  Let $Z$ be a $\left\{ 0,1 \right\}$ RV representing the random rejection in the lemma, i.e., a sample $x \in \X$ is rejected if
  $Z=0$, accepted if $Z=1$, and the probability that $Z=1$ is $h(x)$. Then the statement $Y = y$ is equivalent
to $(X,Z) = (y,1)$, and we have $p_{Y}(y) = p(X = y)p(Z = 1 | X = y) = p_{X}(y) h(y).$
\end{proof}


The idea of ``distribution shaping'' via "random rejection" is central to the semantics of constrained NFG models, which we  demonstrate in the example next.

\begin{Example}
Let $\G$ be a constrained NFG as in Fig.~\ref{fig:const_ex} where
$f_1(x_1,s_1,s'_1)$ and $f_2(x_2,s_2,s'_2)$ are positive functions that split via $x_1$ and $x_2$, respectively,
and $h_1, h_2$ and $h_3$ are non-negative functions (with non-empty supports). 
From Lemma~\ref{lemma:split_prob} we may express, up to a respective scaling factor,  $f_1$ as 
$p_{X_1}(x_1)p_{S_1|X_1}(s_1|x_1) p_{S'_1|X_1}(s'_1|x_1)$ and $f_2$ as
$p_{X_2}(x_2)p_{S_2|X_2}(s_2|x_2)p_{S'_2|X_2}(s'_2|x_2)$, 
for some distributions $p_{X_1}$, $p_{X_2}$, $p_{S_1|X_1}$, $p_{S'_1|X_1}$, $p_{S_2|X_2}$  and $p_{S'_2|X_2}$.
The RVs represented by the NFG may be regarded as being generated by the following process. 
\begin{enumerate}
\item  Draw $(x_1,x_2)$ from distribution $p_{X_1}(x_1)p_{X_2}(x_2)$
where $p_{X_1}$ and $p_{X_2}$ are as specified by our choices above. Note that the two components of the drawn vector are independent.

\item Draw vector $(s_1, s'_1)$ from the distribution $p_{S_1|X_1}(s_1|x_1)p_{S'_1|X_1}(s'_1|x_1)$ and 
draw $(s_2, s'_2)$ from $p_{S_2|X_2}(s_2|x_2)p_{S'_2|X_2}(s'_2|x_2)$. It is clear that the joint distribution of $(x_1, x_2,  s_1, s'_1, s_2, s'_2)$ is up to scale $f_1(x_1, s_1, s'_1)f_2(x_2, s_2, s'_2)$. 

\item Let $H(s_1, s'_1, s_2, s'_2):= c\cdot h_1( s_1)h_2(s'_1, s_2)h_3(s'_2),$ where $c$ is a normalizing constant such that the maximum value of $H(\cdot)$ is $1$. Accept the drawn vector $(x_1, x_2, s_1, s'_1, s_2, s'_2)$ with probability 
$H(s_1, s'_1, s_2, s'_2)$ and reject it with probability $1-H(s_1, s'_1, s_2, s'_2)$.

\item If the drawn $(x_1, x_2, s_1, s'_1, s_2, s'_2)$ is rejected, repeat the procedure from step 1, until the drawn $(x_1, x_2,  s_1, s'_1, s_2, s'_2)$ is accepted.  By Lemma~\ref{remark:switch}, the accepted vector has a distribution equal, up to scale, to
$
f_1(x_1, s_1, s'_1)f_2(x_2, s_2, s'_2)H(s_1, s'_1, s_2, s'_2).
$
\item Output $(x_1, x_2)$.  Then clearly the output vector has distribution that is up to scale the exterior function of the NFG.
\end{enumerate}
\label{ex:const}
\end{Example}

The procedure introduced in the example above generalizes in an obvious way to arbitrary constrained NFG models. 
Instead of precisely, but repetitively, stating the procedure for the general setting, 
we make the following remarks. The interface functions completely specify how the external variables are 
drawn and how the internal variables are drawn conditioned on the drawn external configuration. 
The drawn internal configuration then undergoes a ``random rejection" according to the product of all 
latent functions. The external configuration giving rise to an accepted internal configuration then
necessarily follows the distribution prescribed by the exterior function of the NFG.

\begin{figure}[ht]
\centering
  \begin{tikzpicture}[v/.style={node distance=1.5cm, draw, rectangle}, d/.style={node distance=1.2cm}, every node/.append style={transform shape, minimum size=5mm}]
    \node(u1)[v]{$f_1$}; \node(d1)[d, above of=u1]{};  
    \node(v1)[v, below left of=u1]{$h_1$}; \node(v2)[v, below right of=u1]{$h_2$};  \node(u2)[v, above right of=v2]{$f_2$};  \node(v3)[v, below right of=u2]{$h_3$};
    \node(d2)[d, above of=u2]{};
    \path[every node/.style={transform shape}]
    (u1)edge node[left]{$s_1$}(v1)edge node[right]{$s'_1$}(v2)	(u2)edge node[left]{$s_2$}(v2)edge node[right]{$s'_2$}(v3)
    (u1)edge[->-in=.2] node[above left]{$x_1$}(d1)	(u2)edge[->-in=.2] node[above left]{$x_2$}(d2);
  \end{tikzpicture} 
\caption{Example~\ref{ex:const}.}
\label{fig:const_ex}
\end{figure}

Analogously, one may view a constrained NFG model as a ``probabilistic checking system'': independent ``inputs'' (external
variables)  excite the ``internal states'' (internal variables) of the system via interface functions; the state configuration is
``checked'' probabilistically by the latent functions;  only the external configurations that pass the internal check are kept. In
general, the internal checking mechanism induces dependence among the external variables, which were {\em a priori} independent.
In the special case when the latent functions are all indicator functions, the checking system is in fact deterministic, reducing
to a set of constraints on the internal states, cf. Section~\ref{sec:coding}. This has been the motivation behind the name ``constrained NFG model''.  As we
will show momentarily that constrained NFG models and FG models are equivalent, the ``probabilistic checking system'' perspective
of constrained models provides a different  and new interpretation of the FG models.

\subsection{Constrained NFG models  are equivalent to  FGs}

Suppose that a constrained NFG model is such that every interface function is an equality indicator function. 
It is known \cite{Forney2001:Normal} that one may convert such NFG to an FG according to the following
procedure: {\em For each interface vertex, replace it by a variable vertex representing its
half-edge variable and  remove the half edge.}

\begin{Prop}
If  in a  constrained NFG model all interface functions are equality indicators, then the above procedure gives rise to an FG equivalent to the NFG. 
\label{prop:mul-NFG}
\end{Prop}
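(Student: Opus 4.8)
The plan is to write the exterior function out explicitly and observe that the equality indicators sitting at the interface vertices act as sifting operators that collapse the sum over the internal variables, leaving exactly the product form of a factor graph. First I would fix notation matching the model: since the NFG is bipartite and simple with half edges only on $I$, every internal edge joins some interface vertex $i \in I$ to some latent vertex $j \in J$, and there is at most one such edge per adjacent pair; denote its variable by $s_{ij}$ for $j \in {\rm ne}(i)$ (equivalently $i \in {\rm ne}(j)$). With this labelling the interface function at $i$ is $\delta_{=}(x_i, \{s_{ij} : j \in {\rm ne}(i)\})$, which equals $1$ exactly when $s_{ij} = x_i$ for every $j \in {\rm ne}(i)$, while the latent function at $j$ is $f_j(\{s_{ij} : i \in {\rm ne}(j)\})$. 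The exterior function is then
\[
Z_{\G}(x_I) = \sum_{\{s_{ij}\}} \prod_{i \in I} \delta_{=}\big(x_i, \{s_{ij} : j \in {\rm ne}(i)\}\big) \prod_{j \in J} f_j\big(\{s_{ij} : i \in {\rm ne}(j)\}\big).
\]

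Next I would invoke the sifting behaviour of the equality indicators: the factor $\prod_{i \in I} \delta_{=}(\cdot)$ vanishes unless $s_{ij} = x_i$ for every incident pair, so in the sum over the internal variables only the single assignment $s_{ij} = x_i$ survives. Substituting this assignment into the latent functions gives
\[
Z_{\G}(x_I) = \prod_{j \in J} f_j\big(\{x_i : i \in {\rm ne}(j)\}\big).
\]
Finally I would identify the right-hand side with the global function of the FG produced by the stated procedure: replacing each interface vertex $i$ by a variable vertex carrying $x_i$ and deleting the half edge turns each latent vertex $j$ into a function node whose neighbours are precisely the variable vertices indexed by ${\rm ne}(j)$, so its local function is $f_j(\{x_i : i \in {\rm ne}(j)\})$ and the FG represents $\prod_{j \in J} f_j$. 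Equality of the two expressions is then exactly the claimed equivalence.

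The one point requiring care, rather than a genuine obstacle, is the bookkeeping: simplicity of the NFG is what guarantees a single well-defined variable $s_{ij}$ per adjacent pair, so that one equality indicator per interface vertex pins all of that vertex's incident internal edges to the common value $x_i$ without any double counting. Once this correspondence between internal edges and the neighbour relation is fixed, the collapse of the sum and the identification with the FG are both immediate.
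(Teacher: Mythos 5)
Your proposal is correct and follows essentially the same route as the paper's proof: both write the exterior function as a sum-of-products with the equality indicators isolated, use their sifting property to substitute $x_i$ for every internal variable on edges incident to interface vertex $i$, and then identify the collapsed product $\prod_{j \in J} f_j(x_{{\rm ne}(j)})$ with the global function of the FG produced by the procedure. Your explicit $s_{ij}$ bookkeeping and the remark on simplicity just spell out what the paper compresses into the notation $s_{T(i)}$, $s_{T(j)}$ and the observation $T(j) = \{\{i,j\} : i \in {\rm ne}(j)\}$.
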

\begin{proof}
Let $\G(I \cup J, E, f_{I \cup J})$ be the NFG in hand where $f_{i} = \delta_{=}$ for all $i \in I$.
The resulting FG has an underlying graph $(I \cup J, E)$ where $I$ and $J$ are the variable and function indexing sets, respectively.
Hence, the global function of the FG is the multiplication $\prod_{j \in J} f_j(x_{{\rm ne}(j)})$. 
On the other hand, if we use $T(v)$ to denote the set of internal edges incident on node $v$ in the NFG, then the exterior function
of the NFG is $\left\langle \prod_{j \in J} f_{j}(s_{T(j)}), \prod_{i \in I} \delta_{=}(x_i,s_{T(i)}) \right\rangle$, which, if $i'$ and $j'$ are connected by an 
edge $t$, accounts to substituting
$x_{i'}$ in place of the argument $s_{t}$ of $f_{j'}$ in the product $\prod_{j \in J} f_{j}(s_{T(j)})$, for all adjacent $i'$ and $j'$. The claim follows by noting
that $T(j) = \{\{i,j\}:i \in {\rm ne}(j)\}$.
\end{proof}

The proposition essentially suggests that the joint distribution represented by such a constrained NFG model factors
multiplicatively and therefore can be represented by an FG. In fact the converse is also true, namely that any FG can be converted
to an equivalent constrained NFG model with all interface functions being equality indicators. This is illustrated in Figure \ref{fig:EX1_Mul}.

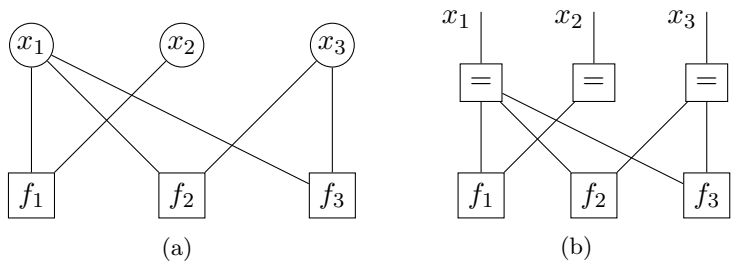
\begin{figure}[ht]
\centering
	\setcounter{subfigure}{0}
    \subfigure[]{
    \begin{tikzpicture}[var/.style={node distance=2cm, draw, circle, inner sep=.5mm, minimum size=4mm}, 
      fun/.style={node distance=2cm, draw, rectangle, minimum size=4mm}]
      \node(x1)[var]{$x_1$}; \node(x2)[var, right of=x1]{$x_2$}; \node(x3)[var, right of=x2]{$x_3$};
      \node(f1)[fun, below of=x1]{$f_1$}; \node(f2)[fun, right of=f1]{$f_2$}; \node(f3)[fun, right of=f2]{$f_3$};
      \path[every node/.style={}]
      (x1)edge(f1)edge(f2)edge(f3)
      (x2)edge(f1)
      (x3)edge(f2)edge(f3);
    \end{tikzpicture} }
    \hspace{.5cm}
    \subfigure[]{
    \begin{tikzpicture}[v/.style={node distance=1.5cm, draw, rectangle}, d/.style={node distance=1.2cm}, every node/.append style={transform shape, minimum size=5mm}]
      \node(u1)[v]{$=$}; \node(u2)[v, right of=u1]{$=$}; \node(u3)[v, right of=u2]{$=$}; 
      \node(v1)[v, below of=u1]{$f_1$}; \node(v2)[v, below of=u2]{$f_2$}; \node(v3)[v, below of=u3]{$f_3$}; 
      \node(d1)[d, above of=u1]{}; \node(d2)[d, above of=u2]{}; \node(d3)[d, above of=u3]{}; 
      \path[every node/.style={transform shape}]
      (u1)edge(v1)edge(v2)edge(v3)	(u2)edge(v1)	(u3)edge(v2)edge(v3)
      (u1)edge node[above left]{$x_1$}(d1)	(u2)edge node[above left]{$x_2$}(d2)	(u3)edge node[above left]{$x_3$}(d3);
    \end{tikzpicture} }
  \caption{An FG and its equivalent constrained NFG.}
	\label{fig:EX1_Mul}
\end{figure}

Next we show that any constrained NFG is in fact equivalent to one with equality interface function.
Given an arbitrary constrained NFG $\G$ where each interface function $f_{i}$ splits as $\prod_{t\in T(i)}f_{t}$ for some bivariate functions $f_t$.
The following procedure converts $\G$ into a constrained NFG with the same underlying graph as $\G$, and in which all interface functions
are equality indicators:
\begin{itemize}
  \item[1)] Replace each interface function with an equality indicator.
  \item[2)] Replace each hidden function $f_j$ with $\big\langle f_j, f_{t}:t\in T(j) \big\rangle$.
\end{itemize}

\begin{Prop}
In the above procedure, the original and resulting constrained NFGs are equivalent, and have the same underlying graph.
\label{prop:Const_Eq}
\end{Prop}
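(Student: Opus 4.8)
The plan is to exhibit the passage from $\G$ to $\G'$ as a composition of two exterior-function-preserving graphical operations --- a vertex splitting followed by a sequence of vertex mergings --- and to track the underlying graph through both. Since it was already noted that the exterior function is invariant under vertex merging and vertex splitting, equivalence will follow immediately, and the topology claim will reduce to checking that the merging step exactly undoes the topological effect of the splitting step.

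First I would realize each split interface function as an equality indicator attached to its bivariate factors. Fix an interface vertex $i$, for which $f_i(x_i, s_{T(i)}) = \prod_{t \in T(i)} f_t(x_i, s_t)$ with each $f_t$ bivariate. Introducing a fresh internal variable $y_t$ for every $t \in T(i)$, one checks directly that
\[
f_i(x_i, s_{T(i)}) = \Big\langle \delta_{=}(x_i, y_{T(i)}),\, f_t(y_t, s_t) : t \in T(i) \Big\rangle,
\]
because the equality indicator forces every $y_t$ to equal $x_i$ once the $y_t$'s are summed out. This is a genuine sum-of-products form, so the vertex splitting procedure lets me replace the single vertex $f_i$ in $\G$ by the small NFG on the right-hand side: an equality vertex carrying the half edge $x_i$, joined through the edges $y_t$ to $|T(i)|$ bivariate vertices $f_t$, each of which in turn meets, through the edge $s_t$, the latent vertex at the far end of $t$. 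Performing this at every interface vertex yields an expanded NFG whose exterior function is still $Z_{\G}$.

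Next I would absorb the detached bivariate factors into the latent functions. For each latent vertex $j$ and each $t \in T(j)$, the bivariate vertex $f_t$ shares the edge $s_t$ with $f_j$; applying vertex merging to $f_j$ together with all these neighbouring $f_t$'s replaces them by the single vertex $\langle f_j, f_t : t \in T(j) \rangle$, in which the shared variables $s_t$ are summed out and the edges $y_t$ survive as its remaining internal edges. After relabeling each surviving edge $y_t$ back to $s_t$, every interface vertex is an equality indicator and every latent vertex carries exactly $\langle f_j, f_t : t \in T(j) \rangle$, which is precisely $\G'$. Both operations preserve the exterior function, so $Z_{\G'} = Z_{\G}$ and the two models are equivalent. (As a side remark, $\G'$ is again constrained, since the equality indicator is a split function.)

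For the topology claim I would follow the edges through the two steps. The splitting at $i$ adds an equality vertex and $|T(i)|$ bivariate vertices; the subsequent merging reabsorbs each such bivariate vertex into a latent vertex, so that the edge $s_t$ (between $f_t$ and $f_j$) becomes internal to the merged vertex and vanishes, while the edge $y_t$ (between $f_t$ and the equality vertex at $i$) becomes the single edge now joining latent vertex $j$ to interface vertex $i$. Under the relabeling $y_t \mapsto s_t$ this is exactly the original internal edge $t = \{i,j\}$, and the half edge $x_i$ at each interface vertex is untouched; hence $\G'$ has the same vertex set, edge set, and incidences as $\G$. The one point needing care --- and the main obstacle --- is precisely this bookkeeping of the intermediate expanded graph: one must keep the two families of edges $y_t$ and $s_t$ distinct throughout, verify that each bivariate $f_t$ is merged into the unique latent vertex incident on $t$, and confirm that the merging collapses the expansion back to the original topology rather than to some refinement of it.
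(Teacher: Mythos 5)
Your proof is correct and follows essentially the same route as the paper's: split each interface vertex into an equality indicator joined to its bivariate factors (the paper obtains the same sum-of-products form by invoking Proposition~\ref{prop:mul-NFG}, whereas you verify it directly), then merge each bivariate factor into its adjacent hidden vertex to obtain $\big\langle f_j, f_t : t\in T(j)\big\rangle$. Your explicit edge bookkeeping for the topology claim is more detailed than the paper, which simply declares that part clear, but the underlying argument is identical.
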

\begin{proof}
The fact that the two NFGs have the same underlying graph is clear. To prove equivalence, each interface function $f_i$ is the product $\prod_{t \in T(i)} f_{t}(x_i,s_t)$ of bivariate functions $f_{t}$,
and from Proposition~\ref{prop:mul-NFG}, it can be written as the sum-of-products form
$\big\langle \prod_{t \in T(i)} f_{t}(s'_t,s_t), \delta_{=}(x_i,s'_{T(i)}) \big\rangle$. Upon vertex splitting,
each interface vertex can be replaced with the NFG representing its corresponding sum-of-product form. The claim follows upon
vertex merging each hidden node $f_j$ with its adjacent bivariate functions, i.e., by replacing
$f_j$ with $\big\langle f_j, f_{t}:t \in T(j) \big\rangle$.
\end{proof}

\begin{figure}[ht]
\centering
\subfigure[]{
  \begin{tikzpicture}[u/.style={node distance=\dist, draw, rectangle}, v/.style={node distance=\dist/2, draw, rectangle},
      d/.style={node distance=\dist/3}, every node/.append style={transform shape}]
      \def\dist{4cm};
      \node(u1)[u]{$g$}; \node(u2)[u, right of=u1]{$h$}; 
      \node(d1)[d, above of=u1]{}; \node(d2)[d, above of=u2]{}; 
      \node(v1)[v, below of=u1]{$f_1$}; \node(v2)[v, right of=v1]{$f_2$}; \node(v3)[v, right of=v2]{$f_3$}; 
      \path[every node/.style={transform shape}]
      (u1)edge(v1)edge(v2)edge(v3)	(u2)edge(v2)edge(v3)
      (u1)edge[->-in=.2](d1)	(u2)edge[->-in=.2](d2);
    \end{tikzpicture} }
    \hspace{.5cm}
\subfigure[]{
  \begin{tikzpicture}[u/.style={node distance=\dist*2, draw, rectangle}, v/.style={node distance=\dist, draw, rectangle},
      d/.style={node distance=\dist/3*2}, t/.style={node distance=\dist/2, draw, rectangle, inner sep=.5mm}, every node/.append style={transform shape}]
      \def\dist{2cm};
      \node(u1)[u]{$=$}; \node(u2)[u, right of=u1]{$=$}; 
      \node(d1)[d, above of=u1]{}; \node(d2)[d, above of=u2]{}; 
      \node(v1)[v, below of=u1]{$f_1$}; \node(v2)[v, right of=v1]{$f_2$}; \node(v3)[v, right of=v2]{$f_3$}; 
      \node at (0,-\dist*.6)[t](g1){$g_1$}; \node at (\dist*.6,-\dist*.6)(g2)[t]{$g_2$}; \node at (\dist,-\dist*.6)(h1)[t]{$h_1$};
      \node at (\dist*2,-\dist*.6)(h2)[t]{$h_2$}; \node at (\dist*2-\dist/2,-\dist*.6)(g3)[t]{$g_3$}; 
      \path[every node/.style={transform shape}]
      (u1)edge(g1)edge(g2)edge(g3) (g1)edge(v1) (g2)edge(v2) (g3)edge(v3)
      (u2)edge(h1)edge(h2) (h1)edge(v2) (h2)edge(v3)
      (u1)edge(d1)	(u2)edge(d2);
      \draw[dashed] (v1) \clbox{-.5}{-.5}{.4}{1.2};
      \draw[dashed] (v2) \clbox{-1.4}{-.5}{.4}{1.2};
      \draw[dashed] (v3) \clbox{-1.4}{-.5}{.4}{1.2};
    \end{tikzpicture} }
    \hspace{.5cm}
\caption{Converting a constrained NFG model to one in which all interface functions are equality indicators. 
(a) An example of a constrained NFG where by assumption $g$ splits into $g_1, g_2$ and $g_3$, and $h$ splits into $h_1$ and $h_2$, (b) vertex
splitting of interface nodes, followed by vertex merging of each hidden function with its neighboring bivariate functions. 
}
\label{fig:proof_cNFG2cNFGwtEq} 
\end{figure}
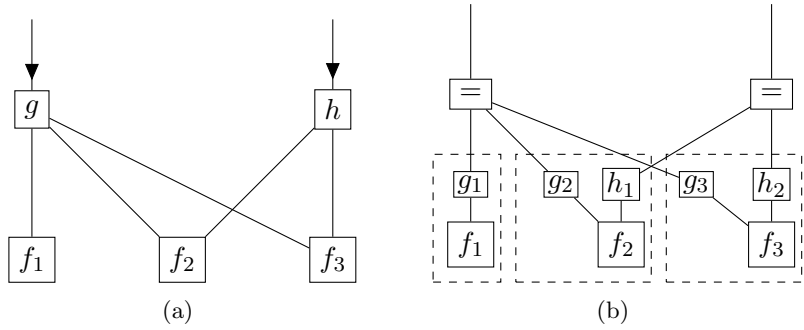

The conversion stated in the proposition and an illustration of the proof are shown in Figure \ref{fig:proof_cNFG2cNFGwtEq}. Invoking Proposition \ref{prop:mul-NFG}, the following theorem is immediate.
\begin{Theorem}
Every constrained NFG can be converted to an equivalent FG.
\end{Theorem}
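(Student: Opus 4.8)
The plan is to chain the two preceding propositions, since each disposes of exactly one of the two obstructions separating a general constrained NFG from an honest FG. A general constrained NFG differs from the special case treated in Proposition~\ref{prop:mul-NFG} only in that its interface functions are arbitrary split functions rather than equality indicators; and once the interface functions \emph{are} equality indicators, Proposition~\ref{prop:mul-NFG} already delivers the FG. So the only real work is to compose the two conversions and check that equivalence propagates through the composition.

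First I would take an arbitrary constrained NFG $\G(I\cup J, E, f_{I\cup J})$. By the definition of a constrained NFG model, every interface function $f_i$ is a split function via its external variable, so it factors as $\prod_{t\in T(i)} f_t$ for some bivariate functions $f_t$ indexed by the internal edges at $i$. This is precisely the hypothesis of Proposition~\ref{prop:Const_Eq}. Applying that proposition produces a constrained NFG $\G'$ with the same underlying graph as $\G$, in which every interface function has been replaced by an equality indicator and each latent function $f_j$ has been replaced by $\langle f_j, f_t : t\in T(j)\rangle$; the proposition guarantees $Z_{\G'}=Z_{\G}$, so $\G'$ is equivalent to $\G$.

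Second, since all interface functions of $\G'$ are now equality indicators, Proposition~\ref{prop:mul-NFG} applies directly: replacing each interface vertex by a variable vertex carrying its half-edge variable and deleting the half edge yields an FG whose global function equals $Z_{\G'}$. Chaining the two equalities gives that this global function equals $Z_{\G}$, which is exactly the assertion that the resulting FG is equivalent to the original constrained NFG.

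There is essentially no hard step here; the substance lives in the two propositions, and the theorem is a corollary. The only point that needs care is that the intermediate object $\G'$ is genuinely a \emph{constrained} NFG meeting the hypotheses of Proposition~\ref{prop:mul-NFG}---that Step~2 of the conversion leaves the interface side untouched (it does, since it rewrites only latent functions) and that the equality indicators introduced in Step~1 really are the interface functions of $\G'$. Granting this, equivalence is transitive because it is defined by equality of the exterior/global function, so the composed conversion carries $\G$ to an equivalent FG.
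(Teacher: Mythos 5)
Your proof is correct and is exactly the paper's argument: the paper also obtains the theorem immediately by chaining Proposition~\ref{prop:Const_Eq} (reducing arbitrary split interface functions to equality indicators while preserving the exterior function) with Proposition~\ref{prop:mul-NFG} (converting the resulting equality-indicator constrained NFG to an equivalent FG). Your additional check that the intermediate NFG genuinely satisfies the hypotheses of Proposition~\ref{prop:mul-NFG} is a sensible precaution, but it raises no issue the paper's proof does not already implicitly handle.
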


\subsection{Generative NFG model}
A \emph{generative} NFG model is an NFG model in which every interface function is a
 conditional function of its half edge variable given its remaining arguments.
%
The following example gives sufficient insight of the modelling semantics of a generative NFG. 

\begin{figure}[ht]
\centering
\begin{tikzpicture}[v/.style={node distance=2cm, draw, rectangle, minimum size=5mm}, d/.style={node distance=1cm}]
  \node(u1)[v]{$p_{X_1|S_1}$}; \node(u2)[v, right of=u1]{$p_{X_2|S_2S_3}$};
  \node(v1)[v, below of=u1]{$p_{S_1S_2}$}; \node(v2)[v, below of=u2]{$p_{S_3}$};
  \node(d1)[d, above of=u1]{}; \node(d2)[d, above of=u2]{};
  \path[every node/.style={}]
  (u1)edge node[left]{$s_1$}(v1) (u2)edge node[right]{$s_2$}(v1) (u2)edge node[right]{$s_3$}(v2)
  (u1)edge[->-out=.6] node[above left]{$x_1$}(d1) (u2)edge[->-out=.6] node[above right]{$x_2$}(d2);   
  \end{tikzpicture}
\caption{Example~\ref{Ex:Prob1}.}
\label{fig:Prob_Ex1}
\end{figure}
\begin{Example}
Let $(S_1, S_2)$ be jointly distributed according to $p_{S_1S_2}$ and $S_3$ be distributed according to $p_{S_3}$, where $(S_1, S_2)$ is
independent of $S_3$. Suppose that $(X_1, X_2)$ depends on $(S_1, S_2, S_3)$ according to conditional distribution
$p_{X_1X_2|S_1S_2S_3}(x_1, x_2|s_1, s_2, s_3):= p_{X_1|S_1}(x_1|s_1) p_{X_2|S_2S_3}(x_2|s_2, s_3)$, it is then easy to verify that 
the joint distribution $p_{X_1X_2}(x_1,x_2)$ is given by the sum-of-products form
$\big\langle p_{S_1S_2}, p_{S_3}, p_{X_1|S_1}, p_{X_2|S_2S_3} \big\rangle,$
and hence, the RVs $(X_1, X_2)$ are represented by the generative NFG in Fig.~\ref{fig:Prob_Ex1}.
\label{Ex:Prob1}
\end{Example}

The NFG in this example is a generative NFG model where $p_{X_1|S_1}$ and $p_{X_2|S_2S_3}$
are interface functions and $p_{S_1S_2}$ and $p_{S_3}$ are latent functions. In this case, the latent functions serve as independent sources of randomness, which ``generate'' the internal RVs ($S_1, S_2$ and $S_3$).  The internal RVs then ``generate'' the external RVs via the interface functions.

In an arbitrary NFG model, since every latent function may be viewed as the joint distribution of its involved internal RVs, subject to a scaling
factor, they can be regarded as independent ``generating sources''; since each interface function is a conditional function, or, up to a scale, a
conditional distribution of the external RV given its internal RVs, the product of these conditional functions may be regarded as the
conditional distribution of all external RVs conditioned on the internal RVs. The product of all local functions is then up to scale the joint
distribution of all external and internal RVs. The semantics of NFG then dictates that the joint distribution shall be summed over all internal
variables, and the resulting exterior function is therefore the distribution of the external RVs, up to scale. In a sense, a generative NFG model
describes how the external random variables are generated from some independent hidden sources.

\subsection{A subclass of generative NFG models is equivalent to CFGs}
\label{section:NFGs-CFGs}
In this section, we rely on the Fourier transform in some of the discussions.
Let $\X$ be a finite abelian group, we use $\X^{\wedge}$ to denote the character group (written additively) of $\X$,
defined as the set of homomorphisms from $\X$ to $\C$. It is well known that $(\X^{\wedge})^{\wedge}$ is isomorphic
to $\X$, and for any $x \in \X$ and $\hat{x} \in \X^{\wedge}$, $x(\hat{x}) = \hat{x}(x)$. We use
$\kappa_{\X}(x,\hat{x})$ to denote both $x(\hat{x})$ and $\hat{x}(x)$, and use $\hat{\kappa}_{\X}(x,\hat{x})$ to denote
$\kappa(x,-\hat{x})/|\X|$. For any function on $\X$, we define its Fourier transform as the sum-of-product form 
$
\widehat{f}(\hat{x}) = \big\langle \kappa_{\X}(x,\hat{x}), f(x) \big\rangle
$, for all $\hat{x} \in \X^{\wedge}$.
It is not hard to show that $\kappa$ and $\hat{\kappa}$ are an inverse-pair, and hence given $\widehat{f}$, one may
recover $f$ using the Fourier inverse as
$f(x)=\big\langle \widehat{f}(\hat{x}), \hat{\kappa}(x,\hat{x}) \big\rangle,$ for all $x \in \X$. 
It is well known that if $\X$ is the direct product $\prod_{i\in I} \X_{i}$ of the finite abelian groups $\X_i$, 
then $(\X)^{\wedge}$ is the direct product $\prod_{i \in I} \X_{i}^{\wedge}$, and it follows that
$
\kappa_{\X}(x,\hat{x}) = \prod_{i \in I} \kappa_{\X_i}(x_i,\hat{x}_i),
$
for all $(x,\hat{x}) \in \X \times \X^{\wedge}$, and similarly for $\hat{\kappa}_{\X}$.

Suppose that a generative NFG model is such that every interface function is a sum indicator function. We may convert such an NFG to a CFG according to the following procedure: {\em For each interface vertex, replace it by a variable vertex representing its half-edge variable and  remove the half edge.}

\begin{Prop}
If  in a  generative NFG model all interface functions are sum indicators, then the above procedure gives rise to a CFG equivalent to the NFG. 
\label{prop:conv-NFG}
\end{Prop}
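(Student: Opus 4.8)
The plan is to follow the template of the proof of Proposition~\ref{prop:mul-NFG}, replacing the equality indicators by sum indicators and the multiplicative (FG) reading by the convolutional (CFG) one. First I would fix notation: let $\G(I \cup J, E, f_{I \cup J})$ be the generative NFG in hand, with $f_i = \delta_{\Sigma}$ for every interface vertex $i \in I$. Since the model is bipartite, each internal edge $t = \{i,j\} \in T$ joins an interface vertex $i$ to a latent vertex $j$, so its variable $s_t$ is shared by exactly two local functions: the sum indicator at $i$ and the latent function $f_j$ at $j$. With this notation the exterior function reads
\[
Z_{\G}(x_I) = \sum_{s_T} \prod_{j \in J} f_j(s_{T(j)}) \prod_{i \in I} \delta_{\Sigma}\bigl(x_i, s_{T(i)}\bigr),
\]
and, writing each indicator as $\delta_{\Sigma}(x_i, s_{T(i)}) = [\, x_i = \sum_{t \in T(i)} s_t \,]$, the product of indicators simply restricts the sum over $s_T$ to those internal configurations satisfying $\sum_{t \in T(i)} s_t = x_i$ at every interface vertex $i$.

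Next I would describe the target CFG. By the stated procedure its underlying graph is $(I \cup J, E)$, with $I$ the variable-indexing set and $J$ the function-indexing set, so by the CFG semantics its global function is the convolution of the latent functions $f_j(x_{{\rm ne}(j)})$. The heart of the argument is to recognise that the constrained sum above is exactly this convolution: at each variable vertex $i$ the shared variable $x_i$ is distributed among its incident functions as $x_i = \sum_{j \in {\rm ne}(i)} s_{\{i,j\}}$, which is precisely the bookkeeping convolution performs on a shared variable. The two-function, one-shared-variable instance of this identity is already furnished by Lemma~\ref{lem:conv_sum}, which rewrites $\sum_{t,u} f(x,t) g(u,z)\, \delta_{\Sigma}(y,t,u)$ as $f(x,y)*g(y,z)$; the general statement then follows by processing the interface vertices one at a time and appealing to the associativity and commutativity of convolution noted after its definition, using $T(j) = \{\{i,j\}:i \in {\rm ne}(j)\}$ to track the shared arguments.

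The step I expect to be the main obstacle is the case in which an interface variable $x_i$ is shared among three or more latent functions, since Lemma~\ref{lem:conv_sum} is phrased only for two functions. To handle it I would first decompose the high-arity sum indicator at such a vertex into a tree of bivariate sum indicators joined by auxiliary internal edges, using the elementary identity $[\,y = t_1 + \cdots + t_k\,] = \sum_{w}[\,y = t_1 + w\,]\,[\,w = t_2 + \cdots + t_k\,]$; this is a vertex-splitting operation and hence preserves the exterior function. Each auxiliary sum node then mediates a single binary convolution to which Lemma~\ref{lem:conv_sum} applies directly, and iterating reduces $Z_{\G}$ to the fully convolved global function, establishing the equivalence with the CFG.

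Alternatively, and perhaps more in the spirit of the surrounding Fourier development, I would give a transform-domain proof: insert the Fourier inverse-pair $(\kappa, \hat\kappa)$ on every edge as a holographic transformation, under which each sum indicator becomes, up to scale, an equality indicator and each latent function $f_j$ becomes its Fourier transform $\widehat{f_j}$. The transformed NFG is then a constrained model with equality interface functions, so Proposition~\ref{prop:mul-NFG} identifies its exterior function with the multiplicative global function $\prod_{j} \widehat{f_j}$; applying the GHT to relate the two exterior functions and then taking the inverse Fourier transform, which converts this product back into the convolution of the $f_j$ (cf.~\cite{Mao2005:FGFT}), recovers the CFG global function. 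The direct argument is more self-contained, so I would present it as the primary proof and relegate the Fourier route to a remark.
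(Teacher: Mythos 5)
Your proposal is correct, but your primary argument takes a genuinely different route from the paper's own proof. The paper proves Proposition~\ref{prop:conv-NFG} by a holographic transformation: it first replaces each sum indicator with a parity indicator $\delta_{+}$ and inserts a sign inverter on each half edge, then inserts the Fourier inverse-pair $(\kappa,\hat{\kappa})$ into every internal edge and a Fourier kernel $\kappa$ into every half edge; since the Fourier transform of $\delta_{+}$ is $\delta_{=}$ (up to scale), vertex merging yields a constrained NFG with equality interface functions and Fourier-transformed latent functions, so the GHT and Proposition~\ref{prop:mul-NFG} give $\widehat{Z}_{\G}(\hat{x}_I)=\prod_{j\in J}\widehat{f}_j(\hat{x}_{{\rm ne}(j)})$, and the multiplication--convolution duality of \cite{Mao2005:FGFT} concludes. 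Your direct evaluation --- reading the sum-indicator-constrained sum as convolution bookkeeping, splitting high-arity sum indicators, and iterating Lemma~\ref{lem:conv_sum} --- is exactly the route the paper only gestures at in the example and remark following its proof (``one may attempt to prove the claim by direct evaluation'') but never writes out; your version is more elementary and self-contained (no character groups, no sign-inverter bookkeeping), while the paper's version showcases the holographic-transformation machinery that is the theme of the paper and produces Fig.~\ref{fig:conv-NFG}~(d) as a by-product, later reused to interpret LCMs as transformed NFG models. Three small repairs to your write-up: (i) your identity $[y=t_1+\cdots+t_k]=\sum_{w}[y=t_1+w]\,[w=t_2+\cdots+t_k]$ splits the interface vertex into \emph{ternary} (degree-three) sum indicators, not ``bivariate'' ones; (ii) Lemma~\ref{lem:conv_sum} is stated for bivariate $f,g$, so the iteration needs its trivial extension to functions carrying spectator arguments (justified, as you say, by associativity and commutativity of convolution); and (iii) in your Fourier remark, the Fourier transform of $\delta_{\Sigma}$ is not $\delta_{=}$ itself but the equality-up-to-sign indicator $\prod_{l\geq 2}[\hat{x}_l=-\hat{x}_1]$ --- this is precisely why the paper's Step~1 passes to parity indicators with sign inverters on the half edges before transforming, a detail your sketch glosses over.
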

\begin{proof}
The proof follows the following steps: 
1) Modify the NFG by replacing each interface function with a parity check indicator and
inserting a degree two parity check indicator (a sign inverter) on each half edge, Fig.~\ref{fig:conv-NFG}~(b). 
(This does not alter the exterior function due to the relation between the sum and the parity indicator function.) 
2) Perform a holographic transformation on the resulting NFG by inserting the inverse-pair $\kappa_{\X_e}$ 
and $\hat{\kappa}_{\X_e}$ into each regular edge $e$ (with $\kappa_{\X_e}$ adjacent to a hidden function or an inserted sign inverter),
and inserting the transformers $\kappa_{\X_e}$ into each dangling edge $e$, Fig.~(c). 
3) By noting that (up to a scaling factor%
\footnote{It is not hard to show that all the scaling factors cancel out, and hence, all subsequent equalities are exact.}%
) the Fourier and Fourier inverse of $\delta_{+}$ are $\delta_{=}$, we obtain (after deleting all degree-two 
equality indicators resulting from the sign inverters) a constrained NFG in which each interface function is an equality indicator and each hidden 
function is the Fourier transform of the corresponding hidden function in the original NFG, Fig.~(d). Hence, from the GHT 
and Proposition~\ref{prop:mul-NFG}, we have
$
\widehat{Z}_{\G}(\hat{x}_{I}) = \prod \limits_{j \in J} \widehat{f}_{j}(\hat{x}_{{\rm ne}(j)}),
$
and the claim follows from the multivariate multiplication-convolution duality theorem under the Fourier transform \cite{Mao2005:FGFT}.
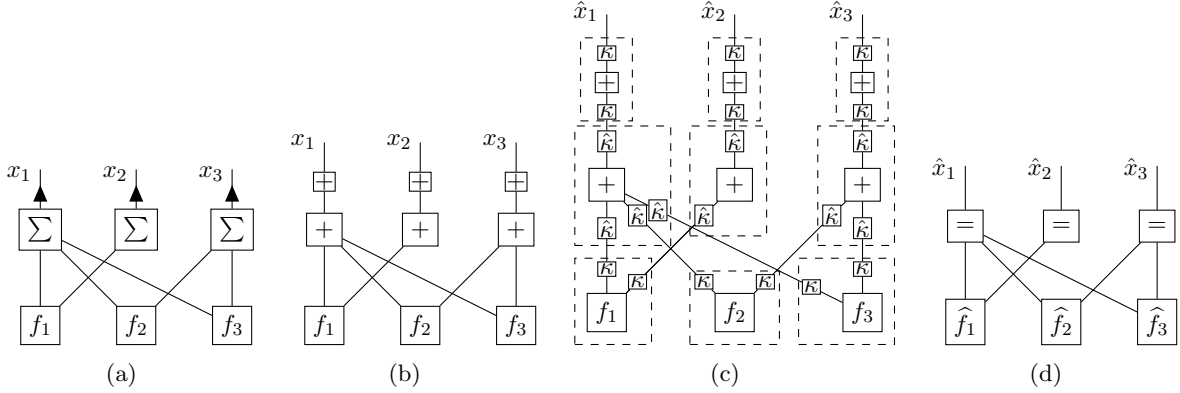
\begin{figure}[ht]
\centering
\subfigure[]{
   \begin{tikzpicture}[scale=.85, v/.style={node distance=1.5cm, draw, rectangle}, d/.style={node distance=1.2cm}, every node/.append style={transform shape, minimum size=5mm}]
      \node(u1)[v]{$\sum$}; \node(u2)[v, right of=u1]{$\sum$}; \node(u3)[v, right of=u2]{$\sum$}; 
      \node(v1)[v, below of=u1]{$f_1$}; \node(v2)[v, below of=u2]{$f_2$}; \node(v3)[v, below of=u3]{$f_3$}; 
      \node(d1)[d, above of=u1]{}; \node(d2)[d, above of=u2]{}; \node(d3)[d, above of=u3]{}; 
      \path[every node/.style={transform shape}]
      (u1)edge(v1)edge(v2)edge(v3)	(u2)edge(v1)	(u3)edge(v2)edge(v3)
      (u1)edge[->-out=.6] node[above left]{$x_1$}(d1)	(u2)edge[->-out=.6] node[above left]{$x_2$}(d2)	(u3)edge[->-out=.6] node[above left]{$x_3$}(d3);
    \end{tikzpicture} }
\subfigure[]{
   \begin{tikzpicture}[scale=.85, v/.style={node distance=\dist, draw, rectangle}, d/.style={node distance=\dist/2}, 
     s/.style={node distance=\dist/2, draw, rectangle, inner sep=.2mm}, every node/.append style={transform shape}]
     \def\dist{1.5cm};
      \node(u1)[v]{$+$}; \node(u2)[v, right of=u1]{$+$}; \node(u3)[v, right of=u2]{$+$}; 
      \node(v1)[v, below of=u1]{$f_1$}; \node(v2)[v, below of=u2]{$f_2$}; \node(v3)[v, below of=u3]{$f_3$}; 
      \node(s1)[s, above of=u1]{$+$}; \node(s2)[s, above of=u2]{$+$}; \node(s3)[s, above of=u3]{$+$}; 
      \node(d1)[d, above of=s1]{}; \node(d2)[d, above of=s2]{}; \node(d3)[d, above of=s3]{}; 
      \path[every node/.style={transform shape}]
      (u1)edge(v1)edge(v2)edge(v3)	(u2)edge(v1)	(u3)edge(v2)edge(v3)
      (s1)edge node[above left]{$x_1$}(d1)edge(u1)	(s2)edge node[above left]{$x_2$}(d2)edge(u2)	(s3)edge node[above left]{$x_3$}(d3)edge(u3);
    \end{tikzpicture} }
\subfigure[]{
   \begin{tikzpicture}[scale=.85, v/.style={node distance=\dist, draw, rectangle}, d/.style={node distance=\dist*.6}, 
     s/.style={node distance=\dist*.8, draw, rectangle, inner sep=.3mm}, t/.style={node distance=\dist*.34, draw, rectangle, inner sep=.3mm}, 
     ts/.style={node distance=\dist*.23, draw, rectangle, inner sep=.3mm}, every node/.append style={transform shape}]
     \def\dist{2cm};
      \node(u1)[v]{$+$}; \node(u2)[v, right of=u1]{$+$}; \node(u3)[v, right of=u2]{$+$}; 
      \node(s1)[s, above of=u1]{$+$}; \node(s2)[s, above of=u2]{$+$}; \node(s3)[s, above of=u3]{$+$}; 
      \node(v1)[v, below of=u1]{$f_1$}; \node(v2)[v, below of=u2]{$f_2$}; \node(v3)[v, below of=u3]{$f_3$}; 
      \node(d1)[d, above of=s1]{}; \node(d2)[d, above of=s2]{}; \node(d3)[d, above of=s3]{}; 
      \node(A11)[t, below of=u1]{$\hat{\kappa}$}; \node(A12)[t, below right of=u1]{$\hat{\kappa}$}; \node(A13) at (\dist*.4,-\dist*.2)[t]{$\hat{\kappa}$};
      \node(B11)[t, above of=v1]{$\kappa$}; \node(B12)[t, above left of=v2]{$\kappa$}; \node(B13) at (\dist*1.6,-\dist*.8)[t]{$\kappa$};
      \node(A21)[t, below left of=u2]{$\hat{\kappa}$}; 
      \node(B21)[t, above right of=v1]{$\kappa$}; 
      \node(A32)[t, below left of=u3]{$\hat{\kappa}$}; \node(A33)[t, below of=u3]{$\hat{\kappa}$};
      \node(B32)[t, above right of=v2]{$\kappa$}; \node(B33)[t, above of=v3]{$\kappa$};
      \node(A1)[t, above of=u1]{$\hat{\kappa}$}; \node(A2)[t, above of=u2]{$\hat{\kappa}$}; \node(A3)[t, above of=u3]{$\hat{\kappa}$}; 
      \node(B1)[ts, below of=s1]{$\kappa$}; \node(B2)[ts, below of=s2]{$\kappa$}; \node(B3)[ts, below of=s3]{$\kappa$}; 
      \node(sB1)[ts, above of=s1]{$\kappa$}; \node(sB2)[ts, above of=s2]{$\kappa$}; \node(sB3)[ts, above of=s3]{$\kappa$}; 
      \path[every node/.style={transform shape}]
      (u1)edge(A11)edge(A12)edge(A13)
      (A11)edge(B11) (B11)edge(v1) (A12)edge(B12) (B12)edge(v2) (A13)edge(B13) (B13)edge(v3)
      (u2)edge(A21)	(A21)edge(B21) 	(B21)edge(v1)
      (A21)edge(B21) (B21)edge(v1)
      (u3)edge(A32)edge(A33)
      (A32)edge(B32) (B32)edge(v2) (A33)edge(B33) (B33)edge(v3)
      (u1)edge(A1) (A1)edge(B1) (B1)edge(s1) (s1)edge(sB1) (sB1)edge node[above left]{$\hat{x}_1$}(d1)
      (u2)edge(A2) (A2)edge(B2) (B2)edge(s2) (s2)edge(sB2) (sB2)edge node[above left]{$\hat{x}_2$}(d2)
      (u3)edge(A3) (A3)edge(B3) (B3)edge(s3) (s3)edge(sB3) (sB3)edge node[above left]{$\hat{x}_3$}(d3);
      \draw[dashed](v1)\clbox{-.5}{-.5}{.7}{.85}; \draw[dashed](v2)\clbox{-.7}{-.5}{.7}{.65}; \draw[dashed](v3)\clbox{-1}{-.5}{.5}{.85};
      \draw[dashed](u1)\clbox{-.5}{-.95}{1}{.92}; \draw[dashed](u2)\clbox{-.7}{-.8}{.5}{.92}; \draw[dashed](u3)\clbox{-.7}{-.95}{.5}{.92};
      \draw[dashed](s1)\clbox{-.41}{-.6}{.4}{.7};
      \draw[dashed](s2)\clbox{-.41}{-.6}{.4}{.7};
      \draw[dashed](s3)\clbox{-.41}{-.6}{.4}{.7};
    \end{tikzpicture} }
\subfigure[]{
   \begin{tikzpicture}[scale=.85, v/.style={node distance=1.5cm, draw, rectangle}, d/.style={node distance=1.2cm}, every node/.append style={transform shape, minimum size=5mm}]
      \node(u1)[v]{$=$}; \node(u2)[v, right of=u1]{$=$}; \node(u3)[v, right of=u2]{$=$}; 
      \node(v1)[v, below of=u1]{$\widehat{f}_1$}; \node(v2)[v, below of=u2]{$\widehat{f}_2$}; \node(v3)[v, below of=u3]{$\widehat{f}_3$}; 
      \node(d1)[d, above of=u1]{}; \node(d2)[d, above of=u2]{}; \node(d3)[d, above of=u3]{}; 
      \path[every node/.style={transform shape}]
      (u1)edge(v1)edge(v2)edge(v3)	(u2)edge(v1)	(u3)edge(v2)edge(v3)
      (u1)edge node[above left]{$\hat{x}_1$}(d1)	(u2)edge node[above left]{$\hat{x}_2$}(d2)	
      (u3)edge node[above left]{$\hat{x}_3$}(d3);
    \end{tikzpicture} }
\caption{Proof of Proposition~\ref{prop:conv-NFG}: (a) An example NFG, (b) Step 1, (c) Step 2, and (d) Step 3.}
\label{fig:conv-NFG}
\end{figure}
\end{proof}
An example illustrating the steps of the proof is shown in Fig.~\ref{fig:conv-NFG}.
Of course one may attempt to prove the claim by direct evaluation of the exterior function as demonstrated in the following example.

\begin{Example} Consider the NFG 
in Figure \ref{fig:Prob_Ex1}, where the interface functions are taken as sum indicators. Then the exterior function of this NFG is
\noindent $
\sum\limits_{s_1, s_2, s_3}
p_{S1S2}(s_1, s_2) p_{S_3}(s_3) \delta_{\sum}(x_1, s_1) \delta_{\sum}(x_2, s_2, s_3)
$
\noindent
$ \stackrel{(a)}{=}  \sum\limits_{s_2, s_3}  p_{S_1S_2}(x_1, s_2) p_{S3}(s_3) \delta_{\sum}(x_2, s_2, s_3)$
$ \stackrel{(b)}{=}  p_{S_1S_2}(x_1, x_2)* p_{S3}(x_2)$,
where (a) identifies the bivariate sum indicator with equality indicator and removes it, 
and (b) is due to Lemma \ref{lem:conv_sum}. The reader is invited to examine the structure of the original NFG and that of the CFG representing the above convolutional factorization. 
\end{Example}

Indeed, for any generative NFG model in which interface functions are all sum indicators, the procedure above
Proposition~\ref{prop:conv-NFG} applied to an interface function is
equivalent to either applying step (a) above (for degree-2 vertices) or applying Lemma \ref{lem:conv_sum} (for vertices of degree higher than 2).

It is easy to see that the this procedure is reversible, in the sense that one may apply it in reverse direction and convert any CFG to a generative NFG model with all interface functions being sum indicators. Figure \ref{fig:EX1_Conv} shows an equivalent pair of CFG and generative NFG model.

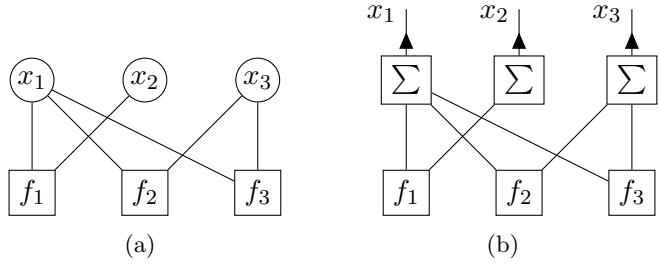
\begin{figure}[ht]
\centering
    \setcounter{subfigure}{0}
    \subfigure[]{
    \begin{tikzpicture}[var/.style={node distance=1.5cm, draw, circle, inner sep=.5mm, minimum size=4mm}, 
      fun/.style={node distance=1.5cm, draw, rectangle, minimum size=4mm}]
      \node(x1)[var]{$x_1$}; \node(x2)[var, right of=x1]{$x_2$}; \node(x3)[var, right of=x2]{$x_3$};
      \node(f1)[fun, below of=x1]{$f_1$}; \node(f2)[fun, right of=f1]{$f_2$}; \node(f3)[fun, right of=f2]{$f_3$};
      \path[every node/.style={}]
      (x1)edge(f1)edge(f2)edge(f3)
      (x2)edge(f1)
      (x3)edge(f2)edge(f3);
    \end{tikzpicture} }
    \hspace{.5cm}
    \subfigure[]{
    \begin{tikzpicture}[v/.style={node distance=1.5cm, draw, rectangle}, d/.style={node distance=1.2cm}, every node/.append style={transform shape, minimum size=5mm}]
      \node(u1)[v]{$\sum$}; \node(u2)[v, right of=u1]{$\sum$}; \node(u3)[v, right of=u2]{$\sum$}; 
      \node(v1)[v, below of=u1]{$f_1$}; \node(v2)[v, below of=u2]{$f_2$}; \node(v3)[v, below of=u3]{$f_3$}; 
      \node(d1)[d, above of=u1]{}; \node(d2)[d, above of=u2]{}; \node(d3)[d, above of=u3]{}; 
      \path[every node/.style={transform shape}]
      (u1)edge(v1)edge(v2)edge(v3)	(u2)edge(v1)	(u3)edge(v2)edge(v3)
      (u1)edge[->-out=.5] node[above left]{$x_1$}(d1)	(u2)edge[->-out=.5] node[above left]{$x_2$}(d2)	(u3)edge[->-out=.5] node[above left]{$x_3$}(d3);
    \end{tikzpicture} }
    \caption{An equivalent pair of CFG (left) and generative NFG model (right).}	
    \label{fig:EX1_Conv}
\end{figure}



\subsection{Independence}

We now show that there exists a ``duality'' between a constrained NFG model and a generative NFG model in their implied independence properties. 

\begin{figure}
\centering
    \subfigure[]{
    \begin{tikzpicture}[v/.style={node distance=1.5cm, draw, rectangle}, d/.style={node distance=1.2cm}, every node/.append style={transform shape, minimum size=5mm}]
      \node(u1)[v]{$g_1$}; \node(u2)[v, right of=u1]{$g_2$}; \node(u3)[v, right of=u2]{$g_3$}; 
      \node(v1)[v, below of=u1]{$f_1$}; \node(v2)[v, below of=u3]{$f_2$}; 
      \node(d1)[d, above of=u1]{}; \node(d2)[d, above of=u2]{}; \node(d3)[d, above of=u3]{}; 
      \path[every node/.style={transform shape}]
      (u1)edge(v1)	(u2)edge(v1)edge(v2)	(u3)edge(v2)
      (u1)edge[->-in=.2] node[above left]{$x$}(d1)	(u2)edge[->-in=.2] node[above left]{$y$}(d2)	(u3)edge[->-in=.2] node[above left]{$z$}(d3);
    \end{tikzpicture} }
    \hspace{.5cm}
    \subfigure[]{
    \begin{tikzpicture}[v/.style={node distance=1.5cm, draw, rectangle}, d/.style={node distance=1.2cm}, every node/.append style={transform shape, minimum size=5mm}]
      \node(u1)[v]{$g_1$}; \node(u2)[v, right of=u1]{$g_2$}; \node(u3)[v, right of=u2]{$g_3$}; 
      \node(v1)[v, below of=u1]{$f_1$}; \node(v2)[v, below of=u3]{$f_2$}; 
      \node(d1)[d, above of=u1]{}; \node(d2)[d, above of=u2]{}; \node(d3)[d, above of=u3]{}; 
      \path[every node/.style={transform shape}]
      (u1)edge(v1)	(u2)edge(v1)edge(v2)	(u3)edge(v2)
      (u1)edge[->-out=.5] node[above left]{$x$}(d1)	(u2)edge[->-out=.5] node[above left]{$y$}(d2)	(u3)edge[->-out=.5] node[above left]{$z$}(d3);
    \end{tikzpicture} }
\caption{(a) $X\independent Z| Y$ and (b)  $X\independent Z$.}
\label{fig:indep}
\end{figure}

\begin{Lemma}
For the NFG models in Fig.~\ref{fig:indep}, we have $X\independent Z | Y$ in the constrained model
and $X \independent Z$ in the generative model.
\label{lemma:indep}
\end{Lemma}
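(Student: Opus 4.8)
The plan is to evaluate the exterior function of each model directly and then read off the (conditional) independence from its factorization. Label the four internal edges so that $s_1$ joins $g_1$ to $f_1$, $s_2$ joins $g_2$ to $f_1$, $s_3$ joins $g_2$ to $f_2$, and $s_4$ joins $g_3$ to $f_2$; the latent functions are then $f_1(s_1,s_2)$ and $f_2(s_3,s_4)$, while $g_1(x,s_1)$ and $g_3(z,s_4)$ are bivariate and $g_2(y,s_2,s_3)$ is trivariate. In either model the exterior function is $Z_{\G}(x,y,z)=\sum_{s_1,s_2,s_3,s_4} g_1(x,s_1)\,g_2(y,s_2,s_3)\,g_3(z,s_4)\,f_1(s_1,s_2)\,f_2(s_3,s_4)$, which is, up to scale, the joint distribution $p_{XYZ}$. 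The key structural fact is that $g_2$ is a cut vertex separating the $x$-side ($g_1,f_1$) from the $z$-side ($g_3,f_2$); the split and conditional structures of $g_2$ act on this cut in dual fashion.

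For the constrained model I would use that $g_2$ splits via $y$, i.e.\ $g_2(y,s_2,s_3)=a(y,s_2)\,b(y,s_3)$ for some bivariate $a,b$. Substituting and grouping the summation variables by side --- which is legitimate because $f_1$ and $f_2$ share no internal variable --- gives $Z_{\G}(x,y,z)=\big(\sum_{s_1,s_2} g_1(x,s_1)\,a(y,s_2)\,f_1(s_1,s_2)\big)\big(\sum_{s_3,s_4} b(y,s_3)\,g_3(z,s_4)\,f_2(s_3,s_4)\big)=:R(x,y)\,S(y,z)$. Since for each fixed $y$ the joint distribution of $(X,Z)$ factors as a function of $x$ times a function of $z$, this yields $X\independent Z\mid Y$.

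For the generative model I would instead marginalize $Y$. Because $g_2$ is a conditional function of $y$, we have $\sum_y g_2(y,s_2,s_3)=c$ for a constant $c$ not depending on $(s_2,s_3)$. Summing $Z_{\G}$ over $y$ therefore replaces $g_2$ by $c$ and decouples the two sides, $\sum_y Z_{\G}(x,y,z)=c\,\big(\sum_{s_1,s_2} g_1(x,s_1)\,f_1(s_1,s_2)\big)\big(\sum_{s_3,s_4} g_3(z,s_4)\,f_2(s_3,s_4)\big)=:c\,U(x)\,W(z)$. As $p_{XZ}\propto U(x)\,W(z)$ factors into a function of $x$ and a function of $z$, we obtain $X\independent Z$.

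The summation regrouping is routine, justified by the distributive law and the disjointness of the variable sets of $f_1$ and $f_2$. The step needing care is the passage from a factorization of the (marginal) function to the independence statement: one invokes the standard equivalences that $p_{XYZ}(x,y,z)\propto R(x,y)\,S(y,z)$ implies $X\independent Z\mid Y$, and that $p_{XZ}(x,z)\propto U(x)\,W(z)$ implies $X\independent Z$, while tracking the unknown normalizing constants so the proportionalities are valid. The conceptual heart --- the promised ``duality'' --- is that splitting $g_2$ via $y$ decouples the two sides upon conditioning on $Y$, whereas $g_2$ being conditional on $y$ decouples them upon marginalizing $Y$.
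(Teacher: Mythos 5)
Your proof is correct and follows essentially the same route as the paper's: for the constrained model the paper likewise splits $g_2$ via $y$ (by vertex splitting and merging rather than explicit sums) to obtain $p(x,y,z)=f'_1(x,y)\,f'_2(y,z)$, and for the generative model it likewise uses that $\sum_y g_2(y,s_2,s_3)=c$ decouples the two sides. The only difference is presentational: the paper argues graphically --- in the generative case by exhibiting NFGs for $p(x,z)$, $p(x)$ and $p(z)$ and observing that $p(x,z)$ and $p(x)p(z)$ are realized by the same NFG --- whereas you carry out the sums algebraically and close with the standard factorization criteria for conditional and marginal independence.
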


\begin{proof}
For the constrained NFG, it is sufficient to show that $p(x,y,z)$ is a split function via $y$, see e.g. \cite{Lauritzen:GM}.
To this end, we have $g_2$ is a split function via $y$, say it splits into bivariate functions
$g_{2,1}$ and $g_{2,2}$. Hence, applying the vertex splitting procedure for $g_2$ followed
by the vertex merging of each hidden function and its adjacent bivariate function, it becomes clear that
$p(x,y,z) = f'_{1}(x,y) f'_{2}(y,z)$ where $f'_1 = \langle f_1, g_1,g_{2,1}\rangle$ and
$f'_2 = \langle f_2, g_3,g_{2,2}\rangle$.

For the generative model, we prove the claim graphically in Fig.~\ref{fig:marginal-indep}.
Marginalizing $y$, the probability distribution $p(x,z)$ is realized by the NFG in Fig.~\ref{fig:marginal-indep}~(a), which, by the definition of a conditional function, is equivalent to the one in (b). Marginalizing again, we obtain the
NFGs in (c) and (d) for the marginals $p(x)$ and $p(z)$, respectively. Hence, the multiplication $p(x)p(z)$ is realized by the NFG in (e), which is equivalent (again by the definition of a conditional function) to the one in (f). Noting that the NFG in the left side of (f) realizes the scalar 1,
and comparing with (a), we see that $p(x,z)$ and $p(x)p(z)$ are realized by the same NFG, and hence, must be equal.
\begin{figure}[ht]
\centering
    \subfigure[$p(x,z)$]{
    \begin{tikzpicture}[scale=.9, v/.style={node distance=1.5cm, draw, rectangle},
      d/.style={node distance=1.2cm}, every node/.append style={transform shape, minimum size=5mm}]
      \node(u1)[v]{$g_1$}; \node(u2)[v, right of=u1]{$g_2$}; \node(u3)[v, right of=u2]{$g_3$}; 
      \node(v1)[v, below of=u1]{$f_1$}; \node(v2)[v, below of=u3]{$f_2$}; 
      \node(d1)[d, above of=u1]{}; \node(d2)[d, draw, above of=u2]{\textbf{1}}; \node(d3)[d, above of=u3]{}; 
      \path[every node/.style={transform shape}]
      (u1)edge(v1)	(u2)edge(v1)edge(v2)	(u3)edge(v2)
      (u1)edge[->-out=.5] node[above left]{$x$}(d1)	(u2)edge[->-out=.5] (d2)	(u3)edge[->-out=.5] node[above left]{$z$}(d3);
    \end{tikzpicture} }
    \hspace{.25cm}
    \subfigure[$p(x,z)$]{
    \begin{tikzpicture}[scale=.9, v/.style={node distance=1.5cm, draw, rectangle}, voff/.style={node distance=.3cm, draw, rectangle}, 
      d/.style={node distance=1.2cm}, every node/.append style={transform shape, minimum size=5mm}]
      \node(u1)[v]{$g_1$}; 
      \node(u2)[v, draw=none, right of=u1]{}; \node(u2l)[voff, left of=u2, label=left:$c$]{\textbf{1}}; \node(u2r)[voff, right of=u2]{\textbf{1}}; 
      \node(u3)[v, right of=u2]{$g_3$}; 
      \node(v1)[v, below of=u1]{$f_1$}; \node(v2)[v, below of=u3]{$f_2$}; 
      \node(d1)[d, above of=u1]{}; \node(d3)[d, above of=u3]{}; 
      \path[every node/.style={transform shape}]
      (u1)edge(v1)	(u2l)edge(v1) (u2r)edge(v2)	(u3)edge(v2)
      (u1)edge[->-out=.5] node[above left]{$x$}(d1)	 	(u3)edge[->-out=.5] node[above left]{$z$}(d3);
    \end{tikzpicture} }
    \hspace{.25cm}
    \subfigure[$p(x)$]{
    \begin{tikzpicture}[scale=.9, v/.style={node distance=1.5cm, draw, rectangle}, voff/.style={node distance=.3cm, draw, rectangle}, 
      d/.style={node distance=1.2cm}, every node/.append style={transform shape, minimum size=5mm}]
      \node(u1)[v]{$g_1$}; 
      \node(u2)[v, draw=none, right of=u1]{}; \node(u2l)[voff, left of=u2, label=left:$c$]{\textbf{1}}; \node(u2r)[voff, right of=u2]{\textbf{1}}; 
      \node(u3)[v, right of=u2]{\textbf{1}}; 
      \node(v1)[v, below of=u1]{$f_1$}; \node(v2)[v, below of=u3]{$f_2$}; 
      \node(d1)[d, above of=u1]{}; \node(d3)[d, above of=u3]{}; 
      \path[every node/.style={transform shape}]
      (u1)edge(v1)	(u2l)edge(v1) (u2r)edge(v2)	(u3)edge(v2)
      (u1)edge[->-out=.5] node[above left]{$x$}(d1);	 	
    \end{tikzpicture} }
    \hspace{.25cm}
    \subfigure[$p(z)$]{
    \begin{tikzpicture}[scale=.9, v/.style={node distance=1.5cm, draw, rectangle}, voff/.style={node distance=.3cm, draw, rectangle}, 
      d/.style={node distance=1.2cm}, every node/.append style={transform shape, minimum size=5mm}]
      \node(u1)[v]{\textbf{1}}; 
      \node(u2)[v, draw=none, right of=u1]{}; \node(u2l)[voff, left of=u2, label=left:$c$]{\textbf{1}}; \node(u2r)[voff, right of=u2]{\textbf{1}}; 
      \node(u3)[v, right of=u2]{$g_3$}; 
      \node(v1)[v, below of=u1]{$f_1$}; \node(v2)[v, below of=u3]{$f_2$}; 
      \node(d1)[d, above of=u1]{}; \node(d3)[d, above of=u3]{}; 
      \path[every node/.style={transform shape}]
      (u1)edge(v1)	(u2l)edge(v1) (u2r)edge(v2)	(u3)edge(v2)
      (u3)edge[->-out=.5] node[above left]{$z$}(d3);
    \end{tikzpicture} }
    \hspace{.25cm}
    \subfigure[$p(x)p(z)$]{
    \begin{tikzpicture}[scale=.9, v/.style={node distance=1.5cm, draw, rectangle}, voff/.style={node distance=.3cm, draw, rectangle}, 
      d/.style={node distance=1.2cm}, every node/.append style={transform shape, minimum size=5mm}]
      \node(u1)[v]{\textbf{1}}; 
      \node(u2)[v, draw=none, right of=u1]{}; \node(u2l)[voff, left of=u2, label=left:$c$]{\textbf{1}}; \node(u2r)[voff, right of=u2]{\textbf{1}}; 
      \node(u3)[v, right of=u2]{$g_3$}; 
      \node(v1)[v, below of=u1]{$f_1$}; \node(v2)[v, below of=u3]{$f_2$}; 
      \node(d1)[d, above of=u1]{}; \node(d3)[d, above of=u3]{}; 
      \path[every node/.style={transform shape}]
      (u1)edge(v1)	(u2l)edge(v1) (u2r)edge(v2)	(u3)edge(v2);
      \draw[dashed] (u2) \clbox{-1}{-.5}{1}{.5};
    \end{tikzpicture} 
    \begin{tikzpicture}[scale=.9, v/.style={node distance=1.5cm, draw, rectangle}, voff/.style={node distance=.3cm, draw, rectangle}, 
      d/.style={node distance=1.2cm}, every node/.append style={transform shape, minimum size=5mm}]
      \node(u1)[v]{$g_1$}; 
      \node(u2)[v, draw=none, right of=u1]{}; \node(u2l)[voff, left of=u2, label=left:$c$]{\textbf{1}}; \node(u2r)[voff, right of=u2]{\textbf{1}}; 
      \node(u3)[v, right of=u2]{$g_3$}; 
      \node(v1)[v, below of=u1]{$f_1$}; \node(v2)[v, below of=u3]{$f_2$}; 
      \node(d1)[d, above of=u1]{}; \node(d3)[d, above of=u3]{}; 
      \path[every node/.style={transform shape}]
      (u1)edge(v1)	(u2l)edge(v1) (u2r)edge(v2)	(u3)edge(v2)
      (u1)edge[->-out=.5] node[above left]{$x$}(d1)	 	(u3)edge[->-out=.5] node[above left]{$z$}(d3);
      \draw[dashed] (u2) \clbox{-1}{-.5}{1}{.5};
    \end{tikzpicture} }
    \hspace{.25cm}
    \subfigure[$p(x)p(z)$]{
    \begin{tikzpicture}[scale=.9, v/.style={node distance=1.5cm, draw, rectangle},
      d/.style={node distance=1.2cm}, every node/.append style={transform shape, minimum size=5mm}]
      \node(u1)[v]{\textbf{1}}; \node(u2)[v, right of=u1]{$g_2$}; \node(u3)[v, right of=u2]{\textbf{1}}; 
      \node(v1)[v, below of=u1]{$f_1$}; \node(v2)[v, below of=u3]{$f_2$}; 
      \node(d1)[d, above of=u1]{}; \node(d2)[d, draw, above of=u2]{\textbf{1}}; \node(d3)[d, above of=u3]{}; 
      \path[every node/.style={transform shape}]
      (u1)edge(v1)	(u2)edge(v1)edge(v2)	(u3)edge(v2)
      	(u2)edge[->-out=.5] (d2);	
    \end{tikzpicture} 
    \begin{tikzpicture}[scale=.9, v/.style={node distance=1.5cm, draw, rectangle},
      d/.style={node distance=1.2cm}, every node/.append style={transform shape, minimum size=5mm}]
      \node(u1)[v]{$g_1$}; \node(u2)[v, right of=u1]{$g_2$}; \node(u3)[v, right of=u2]{$g_3$}; 
      \node(v1)[v, below of=u1]{$f_1$}; \node(v2)[v, below of=u3]{$f_2$}; 
      \node(d1)[d, above of=u1]{}; \node(d2)[d, draw, above of=u2]{\textbf{1}}; \node(d3)[d, above of=u3]{}; 
      \path[every node/.style={transform shape}]
      (u1)edge(v1)	(u2)edge(v1)edge(v2)	(u3)edge(v2)
      (u1)edge[->-out=.5] node[above left]{$x$}(d1)	(u2)edge[->-out=.5] (d2)	(u3)edge[->-out=.5] node[above left]{$z$}(d3);
    \end{tikzpicture} }
 \caption{Proof of Lemma~\ref{lemma:indep} for the generative model.} 
\label{fig:marginal-indep}
\end{figure}
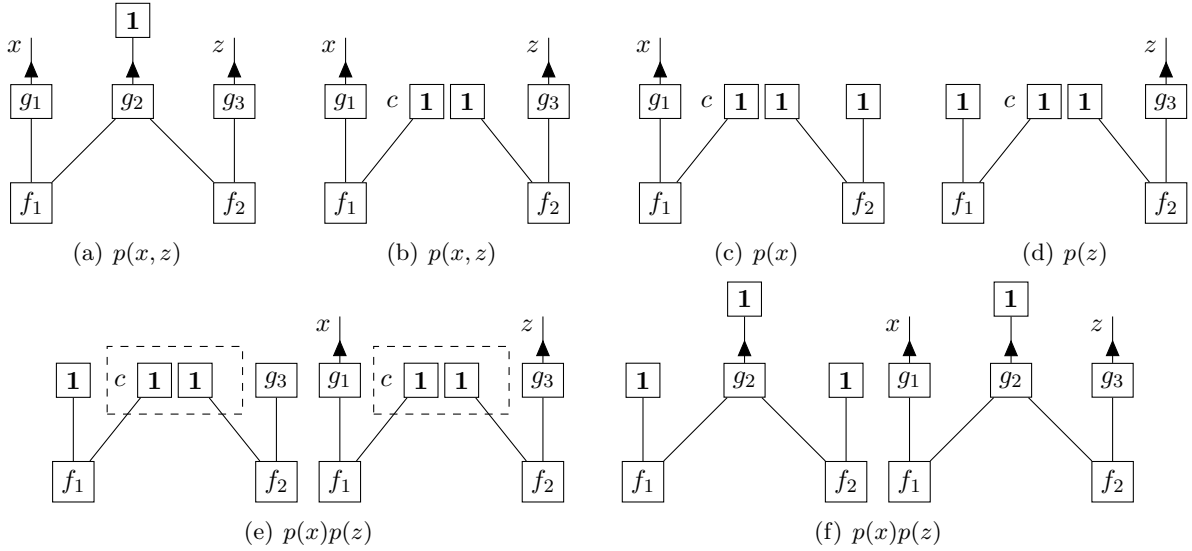 
\end{proof}

We remark that the conditional independence part of the lemma can be proved graphically in a similar manner to the marginal
independence part where marginalization is replaced with evaluation. Further, it is interesting to observe that the two NFG models
have identical graphs, but the constrained model implies  conditional independence property whereas the generative model implies
the ``dual'' marginal independence property.

In an NFG model, suppose that  ${\cal A}$, ${\cal B}$ and ${\cal S}$ are three disjoint subsets of vertices. We say ${\cal A}$ and ${\cal B}$ are {\em
separated} by ${\cal S}$ if every path from a vertex in ${\cal A}$ to a vertex in ${\cal B}$ go through some vertex in ${\cal S}$. In this case, if
${\cal A}$, ${\cal B}$ and ${\cal S}$ are all subsets of the interface vertex set $I$, then, recalling that every external variable is also indexed by the interface
vertex it is incident with,  we also say that RV sets $X_{\cal A}$ and $X_{\cal B}$ are separated by the RV set $X_{\cal S}$. 
For any subset $I' \subseteq I$, let ${\rm ne}(I'):=\{ {\rm ne}(i): i\in I'\}$. By merging the vertices in ${\cal A}$
into one vertex, and similarly for the interface nodes ${\cal S}$ and ${\cal B}':=I \backslash ({\cal A}\cup {\cal S})$, and performing the
same merging for the hidden nodes ${\rm ne}({\cal A})$
and $J\backslash {\rm ne}({\cal A})$. Then the resulting NFG has the same graph topology as the ones in
Fig.~\ref{fig:indep}, as it is clear from the separation property that ${\rm ne}({\cal B}') \subseteq J\backslash {\rm ne}(\cal A)$. From the fact that the split
and conditional properties are preserved under such mergings, the previous lemma extends
in a straightforward manner to any NFG model, and we have the following theorem.

\begin{Theorem}
\label{thm:indep}
Let ${\cal G}(I\cup J, E, f_{I\cup J})$ be an NFG model and ${\cal A}$, ${\cal B}$ and ${\cal S}$ be three disjoint interface
vertex subsets, i.e., subsets of $I$. Suppose that $X_{\cal A}$ and $X_{\cal B}$
are separated by $X_{\cal S}$. Then
\begin{enumerate}
\item If the NFG is a constrained NFG model, then $X_{\cal A} \independent X_{\cal B}| X_{\cal S}$.
\item If the NFG is a generative NFG model, then $X_{\cal A} \independent X_{\cal B}.$
\end{enumerate}
\end{Theorem}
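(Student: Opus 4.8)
The plan is to collapse the whole model, by a single round of vertex merging, onto the five-vertex base case already settled in Lemma~\ref{lemma:indep}, using that merging leaves the exterior function invariant and (as I verify below) preserves the split/conditional structure of the interface functions. The right way to cut the graph is by reachability. First I would let ${\cal A}^{+}$ be the set of all vertices, interface or latent, reachable from ${\cal A}$ along a path meeting no vertex of ${\cal S}$, so that ${\cal A}\subseteq {\cal A}^{+}$ and ${\cal S}\cap {\cal A}^{+}=\emptyset$. I then split the vertices into five groups and merge each into one vertex: the interface groups $I_1:=I\cap {\cal A}^{+}$, the separator ${\cal S}$, and $I_2:=I\setminus(I_1\cup {\cal S})$; and the latent groups $J_1:=J\cap {\cal A}^{+}$ and $J_2:=J\setminus J_1$; finally I bundle any parallel edges between the same pair of merged vertices into a single edge carrying the tuple of their variables. (Grouping the latent nodes by ${\rm ne}({\cal A})$ alone, as one might first try, does not suffice once interface vertices outside ${\cal A}\cup {\cal B}\cup {\cal S}$ sit on the ${\cal A}$-side; the closure ${\cal A}^{+}$ is what keeps the cut clean.)

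The main obstacle is to show that the merged NFG realizes exactly the topology of Fig.~\ref{fig:indep}: merged $I_1$ adjacent only to merged $J_1$, merged $I_2$ adjacent only to merged $J_2$, and merged ${\cal S}$ possibly adjacent to both. That ${\rm ne}(I_1)\subseteq J_1$ is immediate, since extending an ${\cal S}$-avoiding ${\cal A}$-to-$v$ path by one edge to a latent neighbour (necessarily outside ${\cal S}$) keeps that neighbour in ${\cal A}^{+}$. The delicate direction is the cut ${\rm ne}(I_2)\cap J_1=\emptyset$: were some $i\in I_2$ adjacent to $h\in J_1\subseteq {\cal A}^{+}$, then, since neither $h$ nor $i$ lies in ${\cal S}$, appending the edge $h$--$i$ to an ${\cal S}$-avoiding path from ${\cal A}$ to $h$ would force $i\in {\cal A}^{+}\cap I=I_1$, contradicting $i\in I_2$. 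The separation hypothesis enters in the same breath: it guarantees ${\cal B}\cap {\cal A}^{+}=\emptyset$, i.e. ${\cal B}\subseteq I_2$, for otherwise an ${\cal S}$-avoiding ${\cal A}$-to-${\cal B}$ path would exist. Degenerate cases, such as an empty $J_1$ or $J_2$ or an ${\cal S}$ touching only one side, merely delete edges from Fig.~\ref{fig:indep} and are harmless.

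Before invoking the base lemma I would confirm that the interface structure survives the merging. In the constrained case each $f_i$ splits via $x_i$ as $\prod_{t\in T(i)} f_{i,t}(x_i,s_t)$; as distinct interface vertices own disjoint internal edges, the merged product reorganizes edge-by-edge into $\prod_t F_t(x_{I_1},s_t)$, which is split via the composite variable $x_{I_1}$, and bundling parallel edges merely groups factors with a common target. In the generative case each $f_i$ is conditional, $\sum_{x_i} f_i = c_i$; because the external variables $x_i$ and the internal edge sets $T(i)$ are disjoint across $i\in I_1$, one gets $\sum_{x_{I_1}}\prod_{i\in I_1} f_i=\prod_{i\in I_1} c_i$, a constant, so the merged function is conditional of $x_{I_1}$ given its internal edges. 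The same computation applies to the $I_2$- and ${\cal S}$-merges.

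With the merged model in the form of Fig.~\ref{fig:indep} and its interface functions split (respectively conditional), Lemma~\ref{lemma:indep} gives $X_{I_1}\independent X_{I_2}\mid X_{\cal S}$ in the constrained case and $X_{I_1}\independent X_{I_2}$ in the generative case. Since ${\cal A}\subseteq I_1$ and ${\cal B}\subseteq I_2$, marginalizing out the surplus variables $x_{I_1\setminus {\cal A}}$ and $x_{I_2\setminus {\cal B}}$ preserves the respective (conditional) independence and yields $X_{\cal A}\independent X_{\cal B}\mid X_{\cal S}$ and $X_{\cal A}\independent X_{\cal B}$, the two claims of the theorem.
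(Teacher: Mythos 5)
Your proof is correct, and at the top level it follows the same strategy as the paper: merge vertices so as to land on the five-vertex configuration of Lemma~\ref{lemma:indep}, using that vertex merging preserves the exterior function and the split (resp.\ conditional) structure of the interface functions. But your choice of cut is genuinely different from the paper's, and it is the better one. The paper merges the interface sets ${\cal A}$, ${\cal S}$ and ${\cal B}':=I\setminus({\cal A}\cup{\cal S})$, and the latent sets ${\rm ne}({\cal A})$ and $J\setminus{\rm ne}({\cal A})$, asserting that separation gives ${\rm ne}({\cal B}')\subseteq J\setminus{\rm ne}({\cal A})$. Exactly as your parenthetical warning predicts, this assertion can fail when there are interface vertices outside ${\cal A}\cup{\cal B}\cup{\cal S}$ sitting on the ${\cal A}$-side: take $I=\{a,s,b,i\}$ and $J=\{h_1,h_2\}$ with edges $\{a,h_1\}$, $\{i,h_1\}$, $\{s,h_1\}$, $\{s,h_2\}$, $\{b,h_2\}$, and ${\cal A}=\{a\}$, ${\cal S}=\{s\}$, ${\cal B}=\{b\}$. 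Here $X_a$ and $X_b$ are separated by $X_s$, yet $i\in{\cal B}'$ is adjacent to $h_1\in{\rm ne}({\cal A})$, so after the paper's merging the ${\cal B}'$-vertex is adjacent to the ${\rm ne}({\cal A})$-vertex and the topology of Fig.~\ref{fig:indep} is lost; the paper's grouping is sound only under the extra assumption ${\cal A}\cup{\cal B}\cup{\cal S}=I$, which the theorem does not make. Your ${\cal S}$-avoiding reachability closure ${\cal A}^{+}$ is precisely the repair: it places such stray vertices on the ${\cal A}$ side, your two cut arguments show the merged graph does match Fig.~\ref{fig:indep}, and separation is invoked only where it is genuinely needed, namely to force ${\cal B}\subseteq I_2$. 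You also make explicit two steps the paper leaves implicit: the verification that products of split (resp.\ conditional) functions over interface vertices with pairwise disjoint edge sets remain split (resp.\ conditional) after merging and bundling, and the final decomposition from $X_{I_1}\independent X_{I_2}$ (conditional on $X_{\cal S}$ in the constrained case) down to $X_{\cal A}\independent X_{\cal B}$ --- a step the paper's argument also requires, since its ${\cal B}'$ strictly contains ${\cal B}$. In short: same reduction to Lemma~\ref{lemma:indep}, but your decomposition is the one that works in full generality, and it quietly closes a gap in the paper's own justification.
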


Part 1 of Theorem \ref{thm:indep} is essentially the global Markov property (see, e.g., \cite{Lauritzen:GM}) on an FG model (noting that constrained
NFG models are equivalent to FGs). Part 2 of the theorem, in the special case when all interface functions are sum indicators, was proved in
\cite{Mao:UAI2004} in the context of CFGs (noting that such NFG models reduce to CFGs). That is, Part 2 of 
Theorem \ref{thm:indep} generalizes such a result from CFGs to arbitrary generative NFG models.  We now provide some insights for this result.

Consider the NFG in Figure \ref{fig:indep} (b). The fact $X\independent Z$ can be reasoned by the fact that latent
functions $f_1$ and $f_2$, giving rise to $X$ and $Z$ respectively, serve as independent sources of randomness.
Indeed, it is precisely due to $X$ and $Z$ sharing no common latent functions that when $Y$ is ignored $X$ and $Z$
become independent. The same is true for arbitrary generative NFG models, where if $X_{\cal A}$ and $X_{\cal B}$
are separated by $X_{\cal S}$, then we necessarily have $X_{\cal A}$ and $X_{\cal B}$ share no common latent functions.

We remark that the marginal independence, i.e., Part 2 of Theorem~\ref{thm:indep} holds for the more general class
of NFG models characterized by the property that for each interface function $f_{i}$, it holds that
\[
  \sum_{x_{i}}f_{i}(x_{i}, x_{T(i)}) = \prod_{t\in T(i)}f_{t}(x_{t}),
  \]
  for some univariate functions $f_{t}$. We may refer to an NFG model whose interface functions satisfy this property
  as an \emph{extended generative model}, and it is clear that a generative model is an extended generative model. 
  (A conditional function trivially satisfies the property above.) It is not hard to show
  that the class of constrained models and the class of extended generative models are closed under internal holographic transformations,
  from which, it follows that the independence properties in Theorem~\ref{thm:indep} are invariant under internal holographic
  transformations.

\section{Transformed NFG Models}
\label{section:transformed_NFGs}
In some applications, instead of modelling the joint probability distribution of the RVs, we may wish our model to represent a certain transformation
of the joint distribution,  and it becomes clear that the NFG modelling framework introduced in this paper is particularly convenient for this
purpose. Moreover, this framework provides a generic transformation technique and enables an infinite family of such transformations.
In subsequent discussions, a \emph{transformed NFG model}, or simply a transformed model, refers to any NFG obtained from an NFG model (generative or constrained)
by a holographic transformation, where the external transformers in Step (H1) of the holographic transformation are not
necessarily trivial--- At some places we may refer to the original NFG as the \emph{base model}.

Next we show that a particular class of NFG models, upon an appropriate choice of holographic transformation, results in CDNs. 
Let $\X: = \{1, \ldots, |\X|\}$ and let bivariate function $A_{\X}$ on $\X \times \X$ be such that
$A_{\X}(x,x') = 1$ if $x' \leq x$ and $A_{\X}(x,x') = 0$, otherwise. We call $A_{\X}$ a {\em cumulus} function.
Let bivariate function $D_{\X}$ on $\X \times \X$ be such that 
$D_{\X}(x,x') = 1$ if $x = x'$, $D_{\X}(x,x') = -1$ if $x = x'+1$, and $D_{\X}(x,x') = 0$ otherwise. We call 
$D_{\X}$ a {\em difference} function. 

In the case where $\X$ is the Cartesian product $\prod_{i \in I} \X_i$ where $\X_{i}:=\left\{ 1,\cdots,|\X_i| \right\}$ and $I$ is a finite indexing set, then
the previous definitions are extended to the partially-ordered set $\X$ in a component-wise manner by setting
$A_{\X}(x_I,x'_I):=\prod_{i \in I}A_{\X_{i}}(x_i,x'_i)$ and $D_{\X}(x_I,x'_I):=\prod_{i \in I}D_{\X_{i}}(x_i,x'_i)$ for all
$x_I,x'_I \in \X$.
In our notations for cumulus and difference function vertices, we distinguish the first argument using a dot to
mark the corresponding edge, cf. Fig.~\ref{fig:max-eq}~(a).

\begin{figure}
\centering
\subfigure[]{
  \begin{tikzpicture}[v/.style={rectangle, draw, node distance=\vdist, minimum size=6mm}, d/.style={node distance=\ddist}, de/.style={node distance=\vdist}]
    \def\vdist{2cm} \def\ddist{1.2cm} 
    \node(max)[v]{$\max$}; 
    \node(A)[v, node distance=\vdist*3/4, above of=max]{$A$};
    \node(D1)[v, below left of=max]{$D$}; \node(Dn)[v, below right of=max]{$D$}; 
    \node(d0)[d, above of=A]{}; \node(d1)[d, below of=D1]{}; \node(dn)[d, below of=Dn]{};
    \path[every node/.style={}]
    (max)edge[->-out=.4](A)
    (max)edge[--o](D1)edge[--o](Dn) node[node distance=\vdist/2, below of=max]{$\dots$}
    (A)edge[o--] node[above left]{$x_1$}(d0) (D1)edge node[below right]{$x_{2}$}(d1) (Dn)edge node[below right]{$x_{n}$}(dn);
  \end{tikzpicture}  }
   %
    %
  \subfigure[]{
  \begin{tikzpicture}[v/.style={rectangle, draw, node distance=3cm, minimum size=6mm}]
    \def\vdist{2cm};
    \tikzstyle{every pin edge}=[] \tikzstyle{pe}=[pin distance=\vdist]
    \node[v, pin={[pe] above:$x_1$}, pin={[pe] below left:$x_2$}, pin={[pe] below right:$x_n$}]{$=$};
  \end{tikzpicture}  }
 \caption{Two equivalent NFGs}
\label{fig:max-eq}
\end{figure}
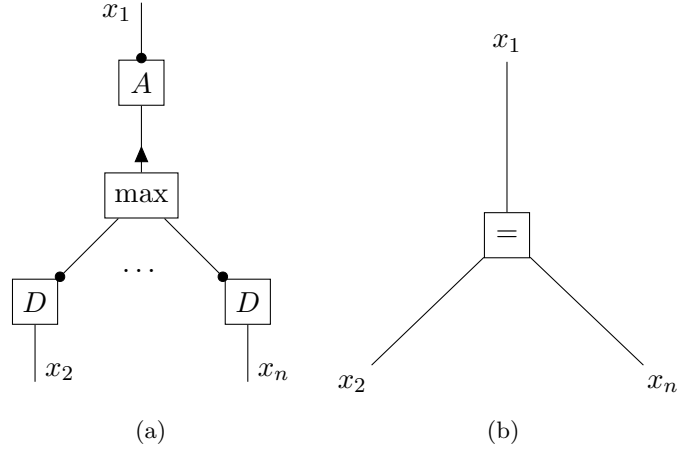
The following lemma will be useful in characterizing CDNs as a subclass of transformed NFG models.
\begin{Lemma} 
  Let $\X=\prod\limits_{i \in I} \X_i$ where $\X_i:=\left\{ 1, \cdots, |\X_i| \right\}$ for all $i$ in the finite indexing set $I$. Then, \\
1.  $\big\langle A_{\X}(x,y), D_{\X}(y,x') \big\rangle = \delta_{=}(x, x')$. \\ 
2. For any set of RVs $X_I$, $\big\langle p_{X_I}(x'), A_{\X}(x, x') \big\rangle$ is $F_{X_I}(x)$ for all $x \in \X$.\\
3. The two NFGs in Figure \ref{fig:max-eq} are equivalent where each edge variable assumes its values from $\X$.
\label{lem:cumulus}
\end{Lemma}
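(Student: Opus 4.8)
The plan is to handle the three parts in sequence, in each case proving the scalar identity over a single ordered alphabet $\X_i=\{1,\ldots,|\X_i|\}$ first and then lifting it to the product alphabet $\X=\prod_{i\in I}\X_i$, using the fact that $A_{\X}$, $D_{\X}$, $\delta_{\max}$, and the partial order all factor componentwise. This componentwise decoupling is what makes the sum-of-products forms split into products of per-coordinate sums.

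For Part 1, I would compute, over a single coordinate, $\big\langle A_{\X_i}(x,y), D_{\X_i}(y,x')\big\rangle=\sum_{y}[y\leq x]\big([y=x']-[y=x'+1]\big)=[x'\leq x]-[x'+1\leq x]$. Since the alphabet is an integer interval, $x'+1\leq x$ is equivalent to $x'<x$, so the two indicators telescope to $[x'=x]=\delta_{=}(x,x')$; this is precisely the statement that $A$ and $D$ are an inverse pair of transformers. Passing to $\X=\prod_i\X_i$, I use that $A_{\X}$ and $D_{\X}$ are products over $i$, so the sum over $y=(y_i)$ factors into a product of per-coordinate sums, each equal to $\delta_{=}(x_i,x'_i)$, whose product is $\delta_{=}(x,x')$.

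Part 2 is essentially definitional: unfolding the bracket gives $\big\langle p_{X_I}(x'), A_{\X}(x,x')\big\rangle=\sum_{x'}p_{X_I}(x')\,[x'\leq x]=\sum_{x'\leq x}p_{X_I}(x')$, where $[x'\leq x]=\prod_i[x'_i\leq x_i]$ encodes the coordinatewise partial order; by the definition of the joint CDF this is exactly $F_{X_I}(x)$.

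For Part 3, the substantive part, I would read the exterior function off the NFG in Fig.~\ref{fig:max-eq}~(a) using the dot/arrow conventions: the cumulus vertex contributes $A_{\X}(x_1,u)$ with $u$ the internal edge to the $\max$ vertex, the $\max$ vertex contributes $[u=\max(u_2,\ldots,u_n)]$, and each difference vertex contributes $D_{\X}(u_i,x_i)$, giving
\[
Z(x_1,\ldots,x_n)=\sum_{u,u_2,\ldots,u_n}A_{\X}(x_1,u)\,[u=\max(u_2,\ldots,u_n)]\prod_{i=2}^{n}D_{\X}(u_i,x_i).
\]
Summing out $u$ forces $u=\max(u_2,\ldots,u_n)$, leaving $A_{\X}(x_1,\max(u_2,\ldots,u_n))$. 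The crucial step is the identity $A_{\X}(x_1,\max(u_2,\ldots,u_n))=\prod_{i=2}^{n}A_{\X}(x_1,u_i)$, which holds because, coordinatewise, $\max_i u_i\leq x_1$ iff every $u_i\leq x_1$; this is exactly where the cumulus structure interacts with the $\max$ indicator. With this factorization the sum decouples into $\prod_{i=2}^{n}\big\langle A_{\X}(x_1,u_i), D_{\X}(u_i,x_i)\big\rangle$, and Part 1 turns each factor into $\delta_{=}(x_1,x_i)$, so $Z=\prod_{i=2}^{n}\delta_{=}(x_1,x_i)=\delta_{=}(x_1,\ldots,x_n)$, which is the exterior function of the equality-indicator NFG in Fig.~\ref{fig:max-eq}~(b). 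I expect the distributivity of the cumulus over the $\max$, and checking that it survives the componentwise extension to the partially ordered product alphabet, to be the main obstacle; the rest is invoking Part 1 and careful bookkeeping of the graph.
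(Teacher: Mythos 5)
Your proof is correct, but for Part~3 it takes a genuinely different route from the paper. The paper first proves the case $n=3$, $|I|=1$ by brute-force case analysis: it expands the exterior function into the four signed terms $A_{\X}(x_1,\max(x_2+\epsilon_2,x_3+\epsilon_3))$, $\epsilon_i\in\{0,1\}$ (with boundary cases at $|\X|$ treated separately), and checks cancellation in the three regimes $x_1>\max(x_2,x_3)$, $x_1=\max(x_2,x_3)$ with $x_2\neq x_3$, and $x_2=x_3$; it then extends to $|I|>1$ by componentwise-ness, and finally handles general $n$ by a graphical induction, rewriting the $\max$ vertex as a chain via $\max(x_2,\ldots,x_n)=\max(x_2,\max(x_3,\ldots))$, inserting the inverse pair $(A_{\X},D_{\X})$ on the chain edges (which is where Part~1 enters), and invoking the $n=3$ case under vertex merging. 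You instead sum out the central variable $u$ and use the order-theoretic identity $A_{\X}(x_1,\max(u_2,\ldots,u_n))=\prod_{i=2}^{n}A_{\X}(x_1,u_i)$, i.e.\ $\max_i u_i\leq x_1$ iff every $u_i\leq x_1$ (which survives the product partial order since everything is componentwise), after which the sum factors into per-leg brackets $\big\langle A_{\X}(x_1,u_i),D_{\X}(u_i,x_i)\big\rangle=\delta_{=}(x_1,x_i)$ by Part~1. Your argument is shorter, handles all $n$ and all $|I|$ uniformly with no case analysis and no induction, and isolates the real reason the lemma is true (the cumulus ``distributes'' over $\max$); what the paper's longer proof buys is a demonstration of the NFG manipulation toolkit itself---inverse-pair transformer insertion and vertex merging---which is the machinery the paper is advertising, and which is reused in the surrounding results (Proposition~\ref{prop:conv-NFG}, Theorem~\ref{theorem:NFG-CDN}).
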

\begin{proof}
  Parts 1 and 2 are immediate from the definitions of $A_{\X}$ and $D_{\X}$. For part 3,
let $\G$ be as in Fig.~\ref{fig:max-eq}~(a). 
First we prove the result for $n = 3$ and $|I|=1$, i.e., $\X = \{1, \cdots, |\X|\}$, where by the definitions of the difference transform, the
exterior function, and the max indicator function, we have
\begin{eqnarray*}
Z_{\G}(x_1,x_2,x_3) = \left \{
\begin{array}{ll}
A_{\X}(x_1,\max(x_2,x_3)) - A_{\X}(x_1,\max(x_2+1,x_3)) \ -  & \\
A_{\X}(x_1,\max(x_2,x_3+1)) + A_{\X}(x_1,\max(x_2 + 1,x_3 + 1)) &, x_2,x_3 < |\X|  \\
A_{\X}(x_1,\max(x_2,x_3)) - A_{\X}(x_1,\max(x_2+1,x_3)) &, x_2 < |\X|, x_3 = |\X| \\
A_{\X}(x_1,\max(x_2,x_3)) - A_{\X}(x_1,\max(x_2,x_3+1)) &, x_2 = |\X|, x_3 < |\X| \\
A_{\X}(x_1,\max(x_2,x_3)) &, x_2, x_3 = |\X|
\end{array}
\right .
\end{eqnarray*}
From the definition of the cumulus, it is clear that any possible non-zero values of $Z_{\G}$ may occur only if
$x_1 \geq \max(x_2,x_3)$. Assume $x_1 > \max(x_2,x_3)$ and note that in this case it is impossible to simultaneously have
$x_2 = |\X|$ and $x_3 = |\X|$, then
\begin{eqnarray*}
Z_{\G}(x_1,x_2,x_3) \hspace{-.5cm} &&= \left \{
\begin{array}{ll}
1-1-1+1 & , x_2,x_3 < |\X| \\
1 - 1 & , x_2 < |\X|, x_3 = |\X| \\
1 - 1 & , x_2 = |\X|, x_3 < |\X|
\end{array}
\right . \\
&&= 0.
\end{eqnarray*}
Hence, assume $x_1 = \max(x_2,x_3)$ and further suppose $x_2 < x_3$, then
\begin{eqnarray*}
Z_{\G}(x_1,x_2,x_3) \hspace{-.5cm} &&= \left \{
\begin{array}{ll}
1-1 & , x_2,x_3 < |\X| \\
1 - 1 & , x_2 < |\X|, x_3 = |\X|
\end{array}
\right . \\
&&= 0.
\end{eqnarray*}
By symmetry, we also have $Z_{\G}(x_1,x_2,x_3) = 0$ for $x_2 > x_3$. The only possibility left is $x_2 = x_3$, in which case, it is clear that
$Z_{\G}(x_1,x_2,x_3) = 1$. For $|I| > 1$, the claim follows from the fact that the cumulus, difference, $\max$ and equality functions are defined
in a component-wise manner.

\begin{figure}[!h]
\centering
\subfigure[]{
  \begin{tikzpicture}[scale=.8, v/.style={node distance=\vdist, draw, rectangle, inner sep=1mm}, vy/.style={node distance=\vdist/2, draw, rectangle, inner sep=1mm},
    every node/.append style={transform shape}]
    \tikzstyle{every pin edge}=[]; \tikzstyle{pe}=[pin distance=\vdist/3];
    \def\vdist{2cm};
    \node(v1)[v]{$\max$}; \node(v2)[v, right of=v1]{$\max$}; 
    	\node(vdots)[v, draw=none, right of=v2]{$\cdots$}; \node(vL)[v, right of=vdots]{$\max$};
    \node(A)[vy, above of=v1]{$A$}; \node(d1)[v, draw=none, node distance=\vdist/2, above of=A]{$x_1$};
    \node(D1)[vy, pin={[pe]below:$x_2$}, below of=v1]{$D$}; \node(D2)[vy, pin={[pe]below:$x_3$}, below of=v2]{$D$}; 
    \node(DL)[vy, pin={[pe]below:$x_{n-1}$}, below of=vL]{$D$};
    \node(Dh)[vy, pin={[pe]right:$x_n$}, right of=vL]{$D$};
    \path[every node/.style={transform shape}]
    (A)edge[o--](d1)
    (v1)edge[->-out=.6](A)edge[--o](D1)
    (v2)edge[->-out=.3](v1)edge[--o](D2) 
    (vdots)edge(v2)  (vL)edge[->-out=.3](vdots)  
    (vL)edge[--o](DL)edge[--o](Dh);
    \draw[dashed] (v1) \clbox{-.5}{-.5}{7.5}{.5};
  \end{tikzpicture}
}
\subfigure[]{
  \begin{tikzpicture}[scale=.8, v/.style={node distance=\vdistx, draw, rectangle, inner sep=1mm}, vy/.style={node distance=\vdist/2, draw, rectangle, inner sep=1mm},
    every node/.append style={transform shape}]
    \tikzstyle{every pin edge}=[]; \tikzstyle{pe}=[pin distance=\vdist/3];
    \def\vdist{2cm}; \def\vdistx{1.2cm};
    \node(v1)[v]{$\max$}; \node(D1x)[v, right of=v1]{$D$}; \node(A1x)[v, node distance=.8cm, right of=D1x]{$A$};
    \node(v2)[v, right of=A1x]{$\max$}; \node(D2x)[v, right of=v2]{$D$}; 
    \node(vdots)[v, node distance=.8cm, draw=none, right of=D2x]{$\cdots$};
    \node(ALx)[v, node distance=.8cm, right of=vdots]{$A$}; \node(vL)[v, right of=ALx]{$\max$}; \node(DLx)[v, pin={[pe]right:$x_{n}$}, right of=vL]{$D$};
    \node(A)[vy, above of=v1]{$A$}; \node(d1)[v, draw=none, node distance=\vdist/2, above of=A]{$x_1$};
    \node(D1)[vy, pin={[pe]below:$x_2$}, below of=v1]{$D$}; \node(D2)[vy, pin={[pe]below:$x_3$}, below of=v2]{$D$}; 
    \node(DL)[vy, pin={[pe]below:$x_{n-1}$}, below of=vL]{$D$};
    \path[every node/.style={transform shape}]
    (A)edge[o--](d1)
    (v1)edge[->-out=.6](A)edge[--o](D1)edge[--o](D1x)
    (D1x)edge[--o](A1x)
    (v2)edge[->-out=.7](A1x)edge[--o](D2)edge[--o](D2x) 
    (vL)edge[->-out=.7](ALx)edge[--o](DL)edge[--o](DLx) 
    (vdots)edge(D2x)edge[--o](ALx);
    \draw[dashed] (v1) \clbox{-.55}{-1.5}{1.5}{1.5};
    \draw[dashed] (v2) \clbox{-1.55}{-1.5}{1.55}{.5};
    \draw[dashed] (vL) \clbox{-1.55}{-1.5}{1.55}{.5};
  \end{tikzpicture}
}
\caption{Proof of Part 3 of Lemma~\ref{lem:cumulus} for arbitrary $n$.}
\label{fig:max-eq-n}
\end{figure}
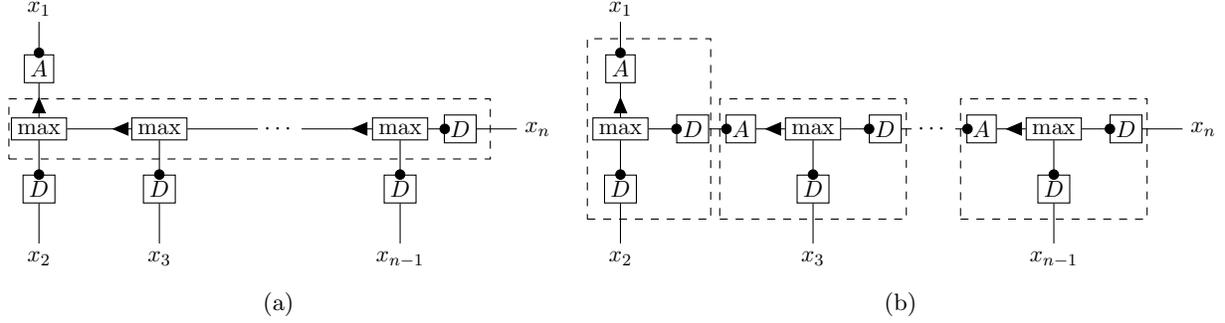
For general $n$, using the fact that $\max(x_2,x_3,\ldots,x_n) = \max(x_2,\max(x_3,\ldots \max(x_n-1,x_n)\ldots ))$,
we can express the NFG in Fig.~\ref{fig:max-eq}~(a) using the NFG in Fig.~\ref{fig:max-eq-n}~(a).
Inserting the inverse-pair $A_{\X}$ and $D_{\X}$ on each edge inside the dashed box in Fig.~\ref{fig:max-eq-n}~(a), we
obtain the equivalent NFG in (b). Invoking the established part of the lemma for $n=3$
under the vertex merging shown in (b), and the claim follows.
\end{proof}
In this lemma, Part 1 suggests that  $A_{\X}$ and $D_{\X}$ are an inverse-pair transformers, and  Part 2 suggests that cumulus functions may serve to transform a probability distribution to a CDF.


Given a generative NFG in which each interface function is a max indicator and each hidden function is a probability distribution of the
variables incident on it. Let $\G$ be the transformed model obtained from such generative NFG by inserting the cumulus transformer $A_{\X_i}$
on each half-edge. The following procedure converts $\G$ into a CDN:
\begin{itemize}
  \item[1)] Replace each max indicator and its adjacent transformer with a variable node representing the transformer's half-edge
    variable, and delete the half edge.
  \item[2)] Replace each hidden function, which is a probability distribution, with the corresponding cumulative distribution.
\end{itemize}
\begin{Theorem}
  If in a transformed model all external transformers are cumulus transformers, all interface functions are max indicators, and all
  hidden functions are probability distributions, then the above procedure gives rise to a CDN equivalent to the transformed model.
  \label{theorem:NFG-CDN}
\end{Theorem}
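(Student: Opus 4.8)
The plan is to mirror the proof of Proposition~\ref{prop:conv-NFG}, replacing the Fourier kernels used there with the cumulus/difference inverse-pair of Lemma~\ref{lem:cumulus}. Write $\G$ for the transformed model, whose interface (max-indicator) vertices each carry an external cumulus transformer $A_{\X_i}$ on their half edge and whose latent vertices carry probability distributions $p_j$. The goal is to exhibit, through exterior-function-preserving manipulations, an FG whose local functions are precisely the cumulative distributions $F_j$ of the $p_j$ and whose product function equals $Z_{\G}$; since by Part~2 of Lemma~\ref{lem:cumulus} each $F_j$ is a genuine CDF, the resulting FG will be a CDN equivalent to $\G$.

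First I would insert, on each internal edge joining a max-indicator vertex $i$ to a latent vertex $j$, the inverse-pair of transformers $A_{\X}$ and $D_{\X}$. By Part~1 of Lemma~\ref{lem:cumulus} this pair composes to $\delta_{=}$, so by the inverse-pair transformer insertion this is an internal holographic transformation and leaves $Z_{\G}$ unchanged (the GHT guarantees that only the external transformers affect the exterior function). The one point requiring care is orientation: I would place $D_{\X}$ on the side adjacent to the max-indicator vertex and $A_{\X}$ on the side adjacent to the latent vertex, so that each max-indicator vertex becomes surrounded by a cumulus $A_{\X_i}$ on its half-edge (first) argument and a difference $D_{\X}$ on every internal argument---exactly the local configuration of Fig.~\ref{fig:max-eq}~(a).

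Next I would carry out the vertex mergings. Merging each max-indicator vertex with the surrounding $A_{\X_i}$ and $D_{\X}$ transformers turns it into an equality indicator, by Part~3 of Lemma~\ref{lem:cumulus}. Merging each latent distribution $p_j$ with the $A_{\X}$ transformers newly adjacent to it turns it into its cumulative distribution $F_j$, by Part~2 of Lemma~\ref{lem:cumulus} (applied component-wise across the edges incident on $j$). Since merging preserves the exterior function, the result is a constrained NFG with all interface functions equal to $\delta_{=}$ and all latent functions equal to the $F_j$, still realizing $Z_{\G}$. Applying Proposition~\ref{prop:mul-NFG} now replaces each equality interface vertex by a variable node (deleting its half edge) and yields an equivalent FG whose product function is $\prod_{j} F_j(x_{{\rm ne}(j)})$. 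This FG is exactly the output of the stated two-step procedure: the merging-plus-conversion realizes Step~1 (each max indicator together with its adjacent cumulus transformer collapses to a variable node), and the latent merging realizes Step~2 (each latent probability distribution is replaced by its cumulative distribution).

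It remains to confirm that this FG is a bona fide CDN. Each local function $F_j$ is the CDF of $p_j$ by Part~2 of Lemma~\ref{lem:cumulus}, hence satisfies the CDF axioms; consequently, as noted in the definition of a CDN, their product $\prod_j F_j$ again satisfies the CDF axioms, so the FG is a CDN and its product function $Z_{\G}$ is the joint CDF it represents. I expect the main obstacle to be purely bookkeeping: pinning down the orientations of $A_{\X}$ and $D_{\X}$ (which argument is summed and which is exposed) so that Part~3 of Lemma~\ref{lem:cumulus} applies verbatim at each interface vertex and Part~2 applies verbatim at each latent vertex. Once the orientation is fixed consistently, the argument is a direct transcription of the holographic-transformation proof of Proposition~\ref{prop:conv-NFG}.
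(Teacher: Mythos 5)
Your proposal is correct and follows essentially the same route as the paper's own proof: insert the inverse-pair $A_{\X}$/$D_{\X}$ on each internal edge with the difference transformer facing the max indicator and the cumulus facing the hidden function, merge the hidden nodes with their adjacent cumulus transformers to obtain CDFs via Part~2 of Lemma~\ref{lem:cumulus}, merge each max indicator with its surrounding difference transformers and the pre-existing external cumulus to obtain equality indicators via Part~3, and finish with Proposition~\ref{prop:mul-NFG}. The orientation choice you flagged as the delicate point is exactly the one the paper makes, so no gap remains.
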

\begin{proof}
  Perform a holographic transformation on the transformed model by inserting into each internal edge $e$ the inverse-pair
transformers $A_{\X_{e}}$ and $D_{\X_{e}}$, with the cumulus facing the hidden function and the difference transformer facing the max indicator.
Merging each hidden node with its neighboring cumulus transformers, by Part~(2) of Lemma~\ref{lem:cumulus}, the resulting node represents the desired CDF.
Merging each max indicator with its neighboring difference transformers and the already existing cumulus, by Part~(3) of Lemma~\ref{lem:cumulus}, 
we arrive to a constrained NFG in which each interface function is an equality indicator, and the claim follows by Proposition~\ref{prop:mul-NFG}.
\end{proof}

Figure \ref{fig:proof_CDN_NFG} demonstrates the proof on an example NFG.
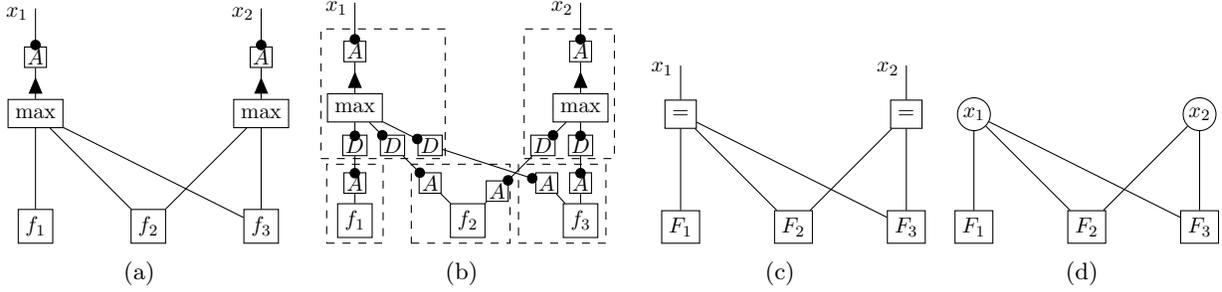
\begin{figure}[ht!]
\centering
\subfigure[]{
  \begin{tikzpicture}[scale=.75, v/.style={node distance=\dist, draw, rectangle, minimum size=5mm},
      t/.style={node distance=\dist/3, draw, rectangle, inner sep=.5mm},
      d/.style={node distance=\dist/2}, every node/.append style={transform shape}]
      \def\dist{2cm};
      \node(u1)[v]{$\max$}; \node(dump)[v, draw=none, right of=u1]{}; \node(u2)[v, right of=dump]{$\max$}; 
      \node(v1)[v, below of=u1]{$f_1$}; \node(v2)[v, right of=v1]{$f_2$}; \node(v3)[v, right of=v2]{$f_3$}; 
      \node(A1)[t, node distance=\dist/2, above of=u1]{$A$}; \node(A2)[t, node distance=\dist/2, above of=u2]{$A$}; 
      \node(d1)[d, above of=A1]{}; \node(d2)[d, above of=A2]{}; 
      \path[every node/.style={transform shape}]
      (u1)edge(v1)edge(v2)edge(v3)	(u2)edge(v2)edge(v3)
      (u1)edge[->-out=.7](A1)	(u2)edge[->-out=.7](A2)
      (A1)edge[o--] node[above left]{$x_1$}(d1) (A2)edge[o--] node[above left]{$x_2$}(d2);
    \end{tikzpicture} }
\subfigure[]{
  \begin{tikzpicture}[scale=.75, v/.style={node distance=\dist, draw, rectangle, minimum size=5mm},
      t/.style={node distance=\dist/3, draw, rectangle, inner sep=.5mm},
      d/.style={node distance=\dist/2}, every node/.append style={transform shape}]
      \def\dist{2cm};
      \node(u1)[v]{$\max$}; \node(dump)[v, draw=none, right of=u1]{}; \node(u2)[v, right of=dump]{$\max$}; 
      \node(v1)[v, below of=u1]{$f_1$}; \node(v2)[v, right of=v1]{$f_2$}; \node(v3)[v, right of=v2]{$f_3$}; 
      \node(D11) at (0,-\dist*1/3)[t]{$D$}; \node(D12) at (\dist*1/3,-\dist*1/3)[t]{$D$}; \node(D13) at (\dist*2/3,-\dist*1/3)[t]{$D$}; 
      \node(D22) at (\dist*5/3,-\dist*1/3)[t]{$D$}; \node(D23) at (\dist*2,-\dist*1/3)[t]{$D$}; 
      \node(A11) at (0,-\dist*2/3)[t]{$A$}; \node(A12) at (\dist*2/3,-\dist*2/3)[t]{$A$}; \node(A13) at (\dist*1.7,-\dist*2/3)[t]{$A$}; 
      \node(A22)[t] at ([shift={(45:\dist*.37)}]v2){$A$}; \node(A23) at (\dist*2,-\dist*2/3)[t]{$A$};
      \node(A1)[t, node distance=\dist/2, above of=u1]{$A$}; \node(A2)[t, node distance=\dist/2, above of=u2]{$A$}; 
      \node(d1)[d, above of=A1]{}; \node(d2)[d, above of=A2]{}; 
      \path[every node/.style={transform shape}]
      (u1)edge[--o](D11)edge[--o](D12)edge[--o](D13)	
      (D11)edge[--o](A11) (D12)edge[--o](A12) (D13)edge[--o](A13)
      (A11)edge(v1) (A12)edge(v2) (A13)edge(v3)
      (u2)edge[--o](D22)edge[--o](D23)	
      (D22)edge[--o](A22) (D23)edge[--o](A23)
      (A22)edge(v2) (A23)edge(v3)
      (u1)edge[->-out=.7](A1)	(u2)edge[->-out=.7](A2)
      (A1)edge[o--] node[above left]{$x_1$}(d1) (A2)edge[o--] node[above left]{$x_2$}(d2);
      \draw[dashed] (v1) \clbox{-.5}{-.4}{.5}{1};
      \draw[dashed] (v2) \clbox{-1}{-.4}{.75}{1};
      \draw[dashed] (v3) \clbox{-1.1}{-.4}{.45}{1};
      \draw[dashed] (u1) \clbox{-.6}{-.9}{1.6}{1.4};
      \draw[dashed] (u2) \clbox{-1}{-.9}{.6}{1.4};
    \end{tikzpicture} } 
    \subfigure[]{
    \begin{tikzpicture}[scale=.75, v/.style={node distance=\dist, draw, rectangle, minimum size=5mm},
      t/.style={node distance=\dist/3, draw, rectangle, inner sep=.5mm},
      d/.style={node distance=\dist/2}, every node/.append style={transform shape}]
      \def\dist{2cm};
      \node(u1)[v]{$=$}; \node(dump)[v, draw=none, right of=u1]{}; \node(u2)[v, right of=dump]{$=$}; 
      \node(v1)[v, below of=u1]{$F_1$}; \node(v2)[v, right of=v1]{$F_2$}; \node(v3)[v, right of=v2]{$F_3$}; 
      \node(d1)[d, above of=u1]{}; \node(d2)[d, above of=u2]{}; 
      \path[every node/.style={transform shape}]
      (u1)edge(v1)edge(v2)edge(v3)	(u2)edge(v2)edge(v3)
      (u1)edge node[above left]{$x_1$}(d1)	(u2)edge node[above left]{$x_2$}(d2);
    \end{tikzpicture} }
    \subfigure[]{
    \begin{tikzpicture}[scale=.75, var/.style={node distance=2cm, draw, circle, inner sep=.5mm, minimum size=4mm}, 
      fun/.style={node distance=2cm, draw, rectangle, minimum size=5mm}, every node/.append style={transform shape}]
      \node(x1)[var]{$x_1$}; \node(dump)[var, draw=none, right of=x1]{}; \node(x2)[var, right of=dump]{$x_2$};
      \node(f1)[fun, below of=x1]{$F_1$}; \node(f2)[fun, right of=f1]{$F_2$}; \node(f3)[fun, right of=f2]{$F_3$};
      \path[every node/.style={transform shape}]
      (x1)edge(f1)edge(f2)edge(f3)
      (x2)edge(f2)edge(f3);
    \end{tikzpicture} }
    \caption{(a) An example of a transformed model in accordance to Theorem~\ref{theorem:NFG-CDN}.
    (b) Inserting the inverse-pair cumulus-difference transformers on each internal edge, and vertex merging each
    node with its neighboring transformers, by the GHT, the resulting NFG is equivalent to the one in (a).
    (c) By Lemma~\ref{lem:cumulus} an equivalent NFG to the one in (b), where $F_i$ is the cumulus transform of $f_i$, which is a
    CDF if $f_i$ is a probability distribution.
    (d) By Proposition~\ref{prop:mul-NFG}, an equivalent CDN to the NFG in (c).}
\label{fig:proof_CDN_NFG}
\end{figure}

Before we proceed, we remark that the independence properties implied by a transformed model are precisely the ones implied by its
base NFG, i.e., by ignoring all the external transformers. By the remark succeeding Theorem~\ref{thm:indep}, such independence
properties are further invariant under internal holographic transformations. Since the base NFG of the transformed model in
Theorem~\ref{theorem:NFG-CDN} is a generative model, it becomes clear that the independence properties implied by CDNs are the
marginal independence ones, i.e., Part~2 of Theorem~\ref{thm:indep}.

Besides CDNs, we also remark that it is possible to regard linear characteristic models (LCMs) \cite{Bickson} as a special case of
transformed NFG models. 
In fact, the NFG in Fig.~\ref{fig:conv-NFG}~(d) is a transformed model that is equivalent to an LCM, and the holographic transformation presented in
the proof of Proposition~\ref{prop:conv-NFG}, Fig.~\ref{fig:conv-NFG}~(c), provides the basis for understanding an LCM as a
transformed model whose base NFG is a generative NFG equivalent to a CFG,
Fig.~\ref{fig:conv-NFG}~(a). We skip details, and hope the framework of transformed NFG models is clear enough to see such equivalence.

\section{Linear codes}
\label{sec:coding}

In this section we discuss the connection between NFG models and linear codes.
The material in this section is well-known \cite{Forney:course-notes}, yet, we choose to address it in this work in light of
the constrained and generative semantics.

Any code $\cal C$, linear or not linear, can be described in terms of its \emph{membership} function, namely, the indicator function
$\delta_{\cal C}(y):=[y \in {\cal C}]$ for all $y \in \X$ and for some finite set $\X$. Clearly $\delta_{\cal C}$ may be viewed, up to
scaling factor, as a distribution function over $\X$ and in the case of linear codes, as we will see, may be described in terms of
a constrained or a generative NFG model.

A \emph{linear code} of length $n$ and dimension $k$ over a finite field $\F$ is a $k$ dimensional subvector space
of $\F^n$.
Classically, a linear code $\cal C$ can be expressed in two dual ways. At one hand,
${\cal C} = \{ f(x) : x \in \F^{k} \}$ for some linear function $f:\F^k \rightarrow \F^n$. This can also
be written as ${\cal C} = \{\big(f_1(x), \ldots, f_n(x) \big) : x \in \F^k \}$ for some linear maps $f_i : \F^k \rightarrow \F$ for all $i$.
That is, the code indicator function $\delta_{\cal C}(y)$ for all $y \in \F^{n}$, can be expressed as the sum-of-products form
\[
  \delta_{\cal C}(y_1, \ldots, y_n) = \sum\limits_{x \in \F^k} \prod\limits_{i=1}^{n} [y_i = f_i(x)],
\] for all $y_i \in \F$.
Clearly, each local function $[y_i = f_i(x)]$ is a conditional function of $y_i$ given $x$, and so,
the indicator function of $\cal C$ can be realized using a generative NFG model.
In fact, since each linear function $f_i:\F^k \rightarrow \F$ can be written as
$f_i(x_1, \ldots, x_k) = a_{i1} x_1 + \ldots + a_{ik} x_k$ for some $a_{ij} \in \F$, it follows
that each interface function is a sum indicator function involving $y_i$ and $x_j$ for all $j$ such that $a_{ij} \neq 0$.
%
%
The role of the hidden functions is to provide replicas of each variable $x_{j}$ for all $j \in \{1, \cdots, k\}$, and hence are taken as equality indicators.
More explicitly, for each $x_j$, we have a hidden equality indicator of degree equals the number of interface functions
involving $x_j$.
This guarantees that
each variable appears in the desired number of interface functions while respecting the degree restrictions \cite{Forney2001:Normal}.
From this we arrive to a generative NFG with $n$ interface nodes (each is a sum indicator), $k$ hidden nodes (each is an equality indicator),
and there is an edge connecting nodes $i$ and $j$ if and only if $a_{ij}$ is nonzero.
Fig.~\ref{fig:gen-const}~(a) illustrates the generative NFG model for the Hamming code (with $n=7$ and $k=4$).

On the other hand, the parity check interpretation of a linear code dictates that the elements of
a linear code $\cal C$ must satisfy a collection of homogeneous linear equations. That is,
${\cal C} = \{y \in \F^{n} : f(y) = 0\}$ for some linear map $f: \F^{n} \rightarrow \F^{n-k}$.
This can also be written as ${\cal C} = \{y \in \F^{n} : (f_1(y), \ldots, f_{n-k}(y)) = 0\}$ for some linear maps $f_j: \F^{n} \rightarrow \F$.
Hence, the code indicator function can be expressed as the product
\[
\delta_{\cal C}(y) = \prod_{j} [f_j(y) = 0],
\]
which (since $f_j$ is linear) can further be simplified as $\delta_{\cal C}(y_1, \ldots, y_n) = \prod_{j} [\sum_{i} a_{ij} y_i = 0]$ for some $a_{ij} \in \F$.
From Proposition~\ref{prop:mul-NFG}, we can see that the code indicator function is realized by a constrained NFG model in which each interface node $i$
is an equality indicator, each hidden node $j$ is a parity indicator, and there is an edge connecting nodes $i$ and $j$ if and only if
$a_{ij}$ is nonzero, Fig.~\ref{fig:const-gen}~(a).

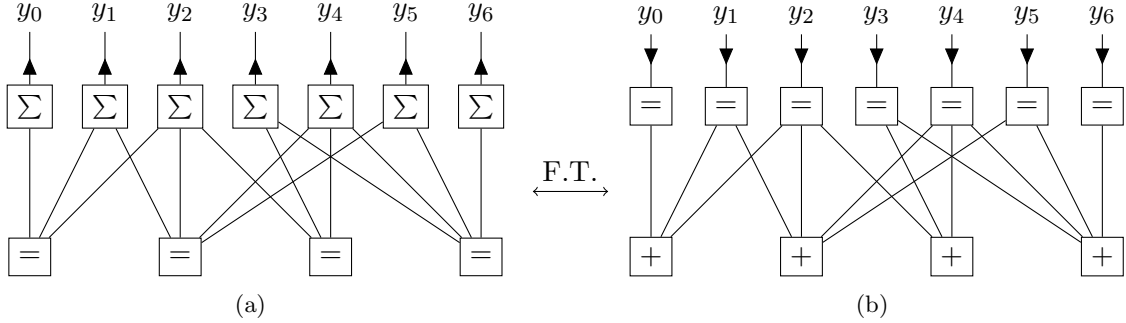
\begin{figure}[ht]
\center \setlength{\unitlength}{.5cm}
\subfigure[]{
  \begin{tikzpicture}[scale=1, u/.style={node distance=\xdist, draw, rectangle, minimum size=5mm}, 
      v/.style={node distance=\ydist, draw, rectangle, minimum size=5mm}, every node/.append style={transform shape}]
      \def\xdist{1cm}; \def\ydist{2cm};
      \tikzstyle{every pin edge}=[ ->-out=.5]; \tikzstyle{pe}=[pin distance=\xdist*.7];
      \node(u1)[u, pin={[pe]above:$y_0$}]{\footnotesize{$\sum$}}; \node(u2)[u, right of=u1, pin={[pe]above:$y_1$}]{\footnotesize{$\sum$}}; 
      \node(u3)[u, right of=u2, pin={[pe]above:$y_2$}]{\footnotesize{$\sum$}}; \node(u4)[u, right of=u3, pin={[pe]above:$y_3$}]{\footnotesize{$\sum$}};
      \node(u5)[u, right of=u4, pin={[pe]above:$y_4$}]{\footnotesize{$\sum$}}; \node(u6)[u, right of=u5, pin={[pe]above:$y_5$}]{\footnotesize{$\sum$}};
      \node(u7)[u, right of=u6, pin={[pe]above:$y_6$}]{\footnotesize{$\sum$}};
     \node(v1)[v, below of=u1]{=}; \node(v2)[v, below of=u3]{=}; \node(v3)[v, below of=u5]{=}; \node(v4)[v, below of=u7]{=};
      \path[every node={transform shape}]
      (v1)edge(u1)edge(u2)edge(u3)
      (v2)edge(u2)edge(u3)edge(u5)edge(u6)
      (v3)edge(u3)edge(u4)edge(u5)
      (v4)edge(u4)edge(u5)edge(u6)edge(u7);
    \end{tikzpicture} }  
    %
    %
    \begin{tikzpicture} \draw[<->] (0,1cm)--node[above]{F.T.}(1cm,1cm) ; \node(dump) at (0,0){};	\end{tikzpicture} 
  \subfigure[]{
  \begin{tikzpicture}[scale=1, u/.style={node distance=\xdist, draw, rectangle, minimum size=5mm}, 
      v/.style={node distance=\ydist, draw, rectangle, minimum size=5mm}, every node/.append style={transform shape}]
      \def\xdist{1cm}; \def\ydist{2cm};
      \tikzstyle{every pin edge}=[ ->-in=.5]; \tikzstyle{pe}=[pin distance=\xdist*.7];
      \node(u1)[u, pin={[pe]above:$y_0$}]{$=$}; \node(u2)[u, right of=u1, pin={[pe]above:$y_1$}]{$=$}; 
      \node(u3)[u, right of=u2, pin={[pe]above:$y_2$}]{$=$}; \node(u4)[u, right of=u3, pin={[pe]above:$y_3$}]{$=$};
      \node(u5)[u, right of=u4, pin={[pe]above:$y_4$}]{$=$}; \node(u6)[u, right of=u5, pin={[pe]above:$y_5$}]{$=$};
      \node(u7)[u, right of=u6, pin={[pe]above:$y_6$}]{$=$};
      \node(v1)[v, below of=u1]{+}; \node(v2)[v, below of=u3]{$+$}; \node(v3)[v, below of=u5]{+}; \node(v4)[v, below of=u7]{+};
      \path[every node={transform shape}]
      (v1)edge(u1)edge(u2)edge(u3)
      (v2)edge(u2)edge(u3)edge(u5)edge(u6)
      (v3)edge(u3)edge(u4)edge(u5)
      (v4)edge(u4)edge(u5)edge(u6)edge(u7);
    \end{tikzpicture}   }
    \caption{(a) Generator realization of Hamming code, and (b) parity realization of dual Hamming code.}
\label{fig:gen-const}
\end{figure}
\begin{figure}[ht]
  \centering
    \subfigure[]{
  \begin{tikzpicture}[scale=1, u/.style={node distance=\xdist, draw, rectangle, minimum size=5mm}, 
      v/.style={node distance=\ydist, draw, rectangle, minimum size=5mm}, every node/.append style={transform shape}]
      \def\xdist{1cm}; \def\ydist{2cm};
      \tikzstyle{every pin edge}=[ ->-in=.5]; \tikzstyle{pe}=[pin distance=\xdist*.7];
      \node(u1)[u, pin={[pe]above:$y_0$}]{$=$}; \node(u2)[u, right of=u1, pin={[pe]above:$y_1$}]{$=$}; 
      \node(u3)[u, right of=u2, pin={[pe]above:$y_2$}]{$=$}; \node(u4)[u, right of=u3, pin={[pe]above:$y_3$}]{$=$};
      \node(u5)[u, right of=u4, pin={[pe]above:$y_4$}]{$=$}; \node(u6)[u, right of=u5, pin={[pe]above:$y_5$}]{$=$};
      \node(u7)[u, right of=u6, pin={[pe]above:$y_6$}]{$=$};
      \node(v1)[v, below of=u2]{+}; \node(v2)[v, below of=u4]{$+$}; \node(v3)[v, below of=u6]{+}; 
      \path[every node={transform shape}]
      (v1)edge(u1)edge(u2)edge(u4)edge(u5)
      (v2)edge(u1)edge(u3)edge(u4)edge(u6)
      (v3)edge(u2)edge(u3)edge(u4)edge(u7);
    \end{tikzpicture}   }
    %
    %
  \begin{tikzpicture} \draw[<->] (0,1cm)--node[above]{F.T.}(1cm,1cm) ; \node(dump) at (0,0){};	\end{tikzpicture} 
  \subfigure[]{
  \begin{tikzpicture}[scale=1, u/.style={node distance=\xdist, draw, rectangle, minimum size=5mm}, 
      v/.style={node distance=\ydist, draw, rectangle, minimum size=5mm}, every node/.append style={transform shape}]
      \def\xdist{1cm}; \def\ydist{2cm};
      \tikzstyle{every pin edge}=[ ->-out=.5]; \tikzstyle{pe}=[pin distance=\xdist*.7];
      \node(u1)[u, pin={[pe]above:$y_0$}]{\footnotesize{$\sum$}}; \node(u2)[u, right of=u1, pin={[pe]above:$y_1$}]{\footnotesize{$\sum$}}; 
      \node(u3)[u, right of=u2, pin={[pe]above:$y_2$}]{\footnotesize{$\sum$}}; \node(u4)[u, right of=u3, pin={[pe]above:$y_3$}]{\footnotesize{$\sum$}};
      \node(u5)[u, right of=u4, pin={[pe]above:$y_4$}]{\footnotesize{$\sum$}}; \node(u6)[u, right of=u5, pin={[pe]above:$y_5$}]{\footnotesize{$\sum$}};
      \node(u7)[u, right of=u6, pin={[pe]above:$y_6$}]{\footnotesize{$\sum$}};
      \node(v1)[v, below of=u2]{=}; \node(v2)[v, below of=u4]{$=$}; \node(v3)[v, below of=u6]{=}; 
      \path[every node={transform shape}]
      (v1)edge(u1)edge(u2)edge(u4)edge(u5)
      (v2)edge(u1)edge(u3)edge(u4)edge(u6)
      (v3)edge(u2)edge(u3)edge(u4)edge(u7);
    \end{tikzpicture}   }
    \caption{(a) Parity realization of Hamming code, and (b) generator realization of dual Hamming code.}
\label{fig:const-gen}
\end{figure}
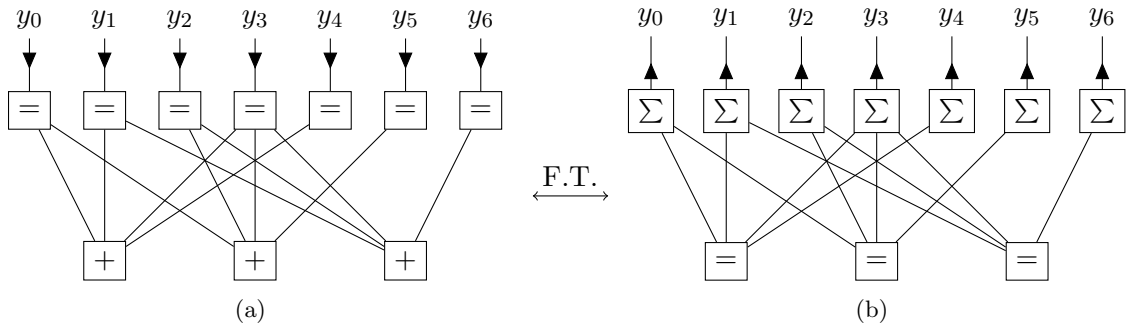

In summary, a constrained NFG model of a linear code represents a parity realization, and a
generative NFG model represents a generator realization. From the duality between the sum indicator
and the equality indicator under the Fourier transform, the duality between a generator
realization of a code and a parity realization of the dual code may be explained as follows: Starting with a generative NFG model
(generator realization) of a linear code, Fig.~\ref{fig:gen-const}~(a), and performing a holographic transformation with Fourier transformers, one obtains a
constrained NFG model (a parity realization) of the dual code, Fig.~\ref{fig:gen-const}~(b). Conversely, starting with a constrained NFG model (a parity realization) of
the code, Fig.~\ref{fig:const-gen}~(a), one ends with a generative NFG model (a generator realization) of the dual code,
Fig.~\ref{fig:const-gen}~(b).

\section{Evaluation of the exterior function}

In this section we discuss the algorithmic aspect of evaluating the exterior function of NFGs.
We start with the ``elimination algorithm'' for NFGs, which is essentially the well-known elimination algorithm of inference on undirected graphical models
\cite{JOrdan:GMs}. The elimination algorithm is exact but its complexity depends on the ordering at which the elimination is performed.\footnote{The
problem of finding the ``best'' elimination ordering is known to be NP-hard, where the term ``best'' is in the sense of minimizing the largest
node-degree (of nodes that do not factor multiplicatively) arising while performing the elimination algorithm, and the minimization is over all possible orderings.}
A more efficient algorithm, but exact only on NFGs with no cycles, is the sum-product algorithm \cite{frank:factor}, which we
discuss in the language of NFGs \cite{Forney2011:PartitionFunction}
in the second part of this section. 
Finally, we discuss an ``indirect approach'' for evaluating the exterior function, where a holographic transformation, if it
exists, is used to convert the NFG into one that is more appropriate for the such computation. 
\subsection{Elimination Algorithm}
The following algorithm computes the exterior function of an arbitrary NFG (i.e. not necessarily a bipartite).
\begin{algorithm}[Elimination] Given an NFG $\G$. \\ 
  \emph{\texttt{While $\G$ is not a single node. Do \\ 
    $\{ $\\
     Pick an adjacent pair of vertices $v_1$ and $v_2$ in $\G$; \\
     Compute $f_{v_1v_2}(x_{E(v_1) \cup E(v_2) \backslash E(v_1)\cap E(v_2)}) :=  \big\langle f_{v_1}, f_{v_2}  \big\rangle; $\\
     Update $\G$ by removing $f_{v_1}$ and $f_{v_2}$, and adding $f_{v_1v_2}$; \\
    $\} $
}}
\end{algorithm}

Evidently, this algorithm runs in a finite time 
and terminates with a single node whose function is the desired exterior function. 
This is essentially the vertex merging procedure applied recursively on pairs of adjacent vertices. Clearly, the
elimination algorithm may equivalently be viewed as an elimination algorithm on the edges of the NFG, where
in each step it eliminates all the edges between a pair of adjacent vertices. 
More precisely, one may say, the elimination algorithm is a merging algorithm on the nodes, and is an elimination
algorithm on the edges of the NFG.
In subsequent discussions, we will 
freely alternate between such two views. Note that even if we start with a simple NFG,
parallel edges may still arise\footnote{It is not hard to see that parallel edges do not appear at any step of the elimination algorithm if and only if the NFG is cycle-free, i.e., is a tree.} 
while applying the elimination algorithm. However,
loops may never arise since in each step we eliminate all parallel edges at once. 

\begin{Example}
  Let $\G$ be as in Fig~\ref{fig:elimination}~(a).
  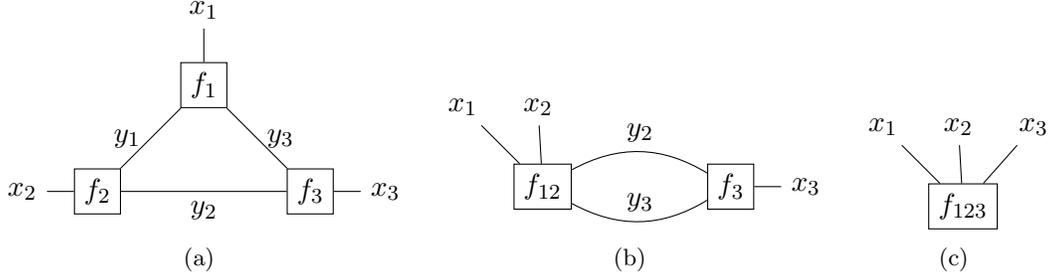
\begin{figure}[ht]
    \center
    \subfigure[]{
    \begin{tikzpicture}
      [node distance=2cm, every node/.style={rectangle, draw}]
      \node(p1){$f_1$};
      \node(p2)[below left of=p1]{$f_2$};
      \node(p3)[below right of=p1]{$f_3$};
      \node(d1)[draw=none, node distance=1cm, above of=p1]{$x_1$}; 
      \node(d2)[draw=none, node distance=1cm, left of=p2]{$x_2$}; 
      \node(d3)[draw=none, node distance=1cm, right of=p3]{$x_3$}; 
      \path[every node/.style={}]
      (p1)edge node[left]{$y_1$}(p2) (p2)edge node[below]{$y_2$}(p3) (p3)edge node[right]{$y_3$}(p1)
      (p1)edge(d1) (p2)edge(d2) (p3)edge(d3);
    \end{tikzpicture} }
    \subfigure[]{
     \begin{tikzpicture}
      [node distance=2.5cm, every node/.style={rectangle, draw}]
      \node(p1){$f_{12}$};
      \node(p2)[right of=p1]{$f_3$};
      \node(d1)[draw=none, node distance=1.5cm, above left of=p1]{$x_1$}; \node(d2)[draw=none, node distance=1cm, right of=d1]{$x_2$}; 
      \node(d3)[draw=none, node distance=1cm, right of=p2]{$x_3$}; 
      \path[every node/.style={}]
      (p1)edge[bend left] node[above]{$y_2$}(p2) (p1)edge[bend right] node[above]{$y_3$}(p2) 
      (p1)edge (d1) (p1)edge (d2) (p2)edge(d3); 
    \end{tikzpicture} }
    \subfigure[]{
     \begin{tikzpicture}[node distance=2.5cm, every node/.style={rectangle, draw}, every loop/.style={}]
      \node(p1){$f_{123}$};
      \node(d1)[draw=none, node distance=1.5cm, above left of=p1]{$x_1$}; \node(d2)[draw=none, node distance=1cm, right of=d1]{$x_2$};
      \node(d3)[draw=none, node distance=1cm, right of=d2]{$x_3$};
      \path[every node/.style={}]
      (p1)edge (d1) edge (d2) edge (d3);
    \end{tikzpicture} }
  \caption{Elimination algorithm example.}
    \label{fig:elimination}
  \end{figure}
  Applying the elimination algorithm, we obtain:\\
    Eliminating $y_{1}$ gives the NFG in Fig.~(b)  with $f_{12}(x_1,x_2,y_2,y_3) = \big\langle f_1(x_1,y_1,y_3), f_2(x_2,y_1,y_2) \big\rangle,$ \\
    eliminating $y_{2}, y_{3}$ gives the NFG in Fig.~(c)  with $f_{123}(x_1,x_2,x_3) = \big\langle f_{12}(x_1,x_2,y_2,y_3), f_3(x_3,y_2,y_3)\big\rangle,$ \\
    and it is easy to verify that $Z_{\G} = f_{123}$.
\end{Example}

For any node $v$ in an NFG, let $\deg(v):=|E(v)|$ be the number of external and internal edges incident on $v$. The following lemma
determines the complexity of eliminating a pair of adjacent vertices in an NFG.
\begin{Lemma}
  The complexity of eliminating a pair of adjacent vertices $u,v$ in the elimination algorithm is of order
  $|\X|^{\deg(u)+\deg(v)-|E(u) \cap E(v)|}$.
  \label{lemma:elem-complexity}
\end{Lemma}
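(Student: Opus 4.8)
The plan is to count directly the number of elementary arithmetic operations (scalar multiplications and additions) needed to form the merged function $f_{uv}=\langle f_u,f_v\rangle$, and then to rewrite the resulting exponent by inclusion--exclusion.

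First I would recall, from the sum-of-products notation, that $\langle f_u,f_v\rangle$ is obtained by multiplying $f_u$ and $f_v$ and summing out exactly the variables common to both functions, namely those indexed by the shared edges $E(u)\cap E(v)$; the result is a function of the remaining variables, indexed by $E(u)\cup E(v)\setminus\big(E(u)\cap E(v)\big)$. Concretely,
\[
f_{uv}\big(x_{E(u)\cup E(v)\setminus (E(u)\cap E(v))}\big)=\sum_{x_{E(u)\cap E(v)}} f_u(x_{E(u)})\,f_v(x_{E(v)}).
\]

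Next I would count. To tabulate $f_{uv}$ one ranges over every joint configuration of the output variables together with every value of the summed-out variables; for each such choice the summand requires one multiplication, and each running sum requires one addition, i.e. a constant amount of work. Hence the number of operations is, up to a constant factor,
\[
|\X|^{\,|E(u)\cup E(v)\setminus (E(u)\cap E(v))|}\cdot|\X|^{\,|E(u)\cap E(v)|}
=|\X|^{\,|E(u)\cup E(v)|},
\]
since the two index sets partition $E(u)\cup E(v)$. Equivalently, one simply observes that the product $f_u f_v$ must be evaluated over the full configuration space $\X_{E(u)\cup E(v)}$, whose size is $|\X|^{|E(u)\cup E(v)|}$.

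Finally I would apply inclusion--exclusion, $|E(u)\cup E(v)|=|E(u)|+|E(v)|-|E(u)\cap E(v)|=\deg(u)+\deg(v)-|E(u)\cap E(v)|$, which yields the claimed order $|\X|^{\deg(u)+\deg(v)-|E(u)\cap E(v)|}$. There is no real obstacle here beyond bookkeeping; the only point requiring care is to note that the exponent counts \emph{all} edges incident on $u$ or $v$, with the shared edges counted once rather than twice, which is precisely what the correction $-|E(u)\cap E(v)|$ records, and to assume---consistently with the statement's use of a single $|\X|$---that every edge variable ranges over an alphabet of the same size $|\X|$.
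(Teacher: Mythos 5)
Your proof is correct and follows essentially the same route as the paper: both count the $|\X|^{|E(u)\cup E(v)\setminus(E(u)\cap E(v))|}$ output configurations of the merged function $\langle f_u, f_v\rangle$, multiply by the $|\X|^{|E(u)\cap E(v)|}$ terms summed per configuration, and finish with the set-cardinality arithmetic $\deg(u)+\deg(v)-|E(u)\cap E(v)|$. Your detour through $|\X|^{|E(u)\cup E(v)|}$ and inclusion--exclusion is only a cosmetic rearrangement of the paper's identity $|E(u)\cup E(v)\setminus (E(u)\cap E(v))| = |E(u)|+|E(v)|-2|E(u)\cap E(v)|$.
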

\begin{proof}
  For each $x \in \X_{E(u) \cup E(v) \backslash E(u) \cap E(v)}$, we need $|\X_{E(u) \cap E(v)}|$ computations,
  and the claim follows by noting that $|E(u)\cup E(v) \backslash E(u) \cap E(v)| = |E(u)|+|E(v)|-2|E(u)\cap E(v)|$.
\end{proof}

The following example shows that for some indicator functions of interest, the elimination complexity
can significantly be reduced.
\begin{Example}
  Consider the NFG in Fig.~\ref{fig:elim-complexity}~(a), where each edge is associated the same alphabet $\X$.
  In general, the complexity of computing the exterior function is of order
  $|\X|^{n+1}$. In the special case where $f$ is the indicator function $\delta_{\sum}$ or $\delta_{\max}$, then the complexity
  is $(n-1)|\X|^{2}$. This is because such indicator functions may further be factorized as shown in Figs.~\ref{fig:elim-complexity}~(b) and (c).
  To see this, the elimination algorithm may start by eliminating $t_1, \dots, t_n$, which induces no complexity as each elimination 
  simply accounts to pointing to the proper entry 
  of each function $f_{i}$, resulting in Fig.~(d), where $f'_{i}$ is properly defined according to $\delta_{\sum}$ or $\delta_{\max}$. Now eliminating each
  $e_i$ (starting with $e_{n-1}$) costs $|\X|^{2}$ computations, giving rise to $(n-1)|\X|^{2}$ computations in total.
  Note that if $f = \delta_{=}$, then the complexity of computing the exterior function is $(n-1)|\X|$.
  (For each $x \in \X$, we need $(n-1)$ multiplications.)
  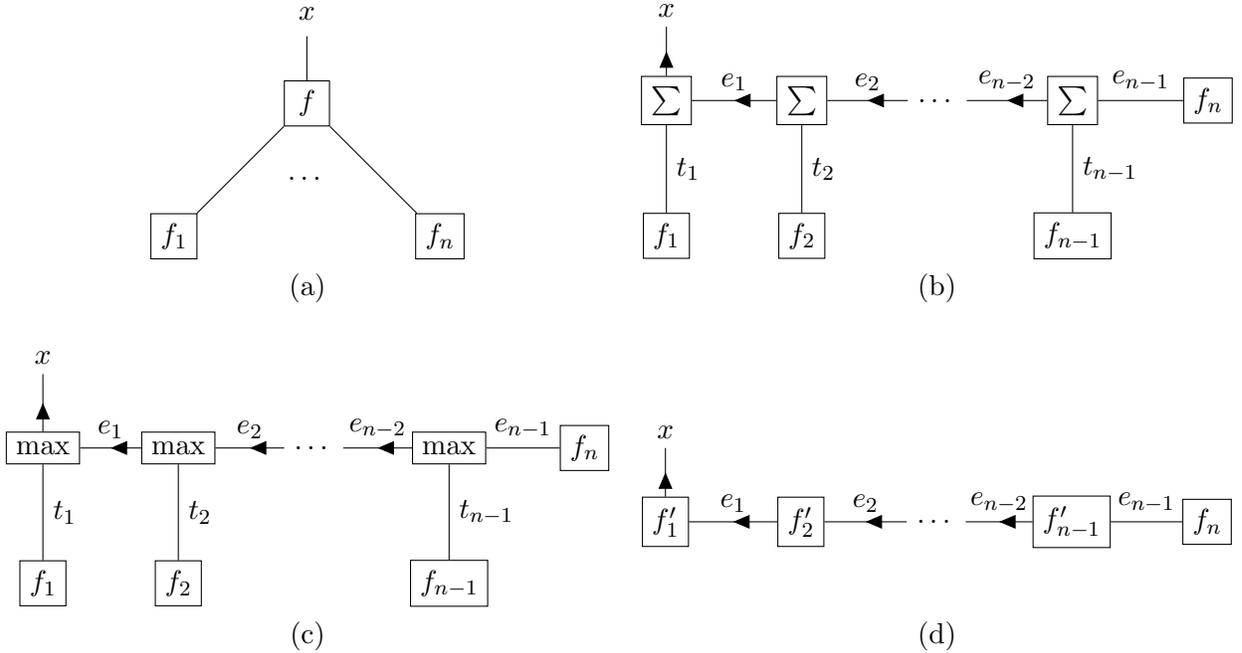
\begin{figure}[ht]
    \center
    \begin{tabular}{cc} 
    \begin{tikzpicture}[node distance=2.5cm, every node/.style={rectangle, draw, minimum size=6mm}]
      \node(g){$f$}; \node(d)[node distance=1.2cm, draw=none, above of=g]{$x$};
      \node(f1)[below left of=g]{$f_1$}; \node(fn)[below right of=g]{$f_n$}; 
      \path[every node/.style={}]
      (g)edge (f1) node[node distance=1cm, below of=g]{$\dots$} (g)edge(fn) edge(d);
    \end{tikzpicture} 
    &
    \begin{tikzpicture}[node distance=1.8cm, every node/.style={rectangle, draw}]
      \node(s1){$\sum$}; \node(s2)[right of=s1]{$\sum$}; \node(dts)[draw = none, right of=s2]{$\dots$}; \node(snm1)[right of=dts]{$\sum$};
      \node(f1)[below of=s1]{$f_1$}; \node(f2)[below of=s2]{$f_2$}; \node(fnm1)[below of=snm1]{$f_{n-1}$}; \node(fn)[right of=snm1]{$f_n$};
      \node(d)[draw=none, node distance=1.2cm, above of=s1]{$x$};
      \path[every node/.style={}]
      (s1)edge[->-out=.5](d)
      (s1)edge node[right]{$t_1$} (f1) (s2)edge node[right]{$t_2$}(f2) (snm1)edge node[right]{$t_{n-1}$}(fnm1) edge node[above]{$e_{n-1}$}(fn)
      (s2)edge[->-out=.5] node[above]{$e_1$} (s1) (dts)edge[->-out=.5] node[above]{$e_2$} (s2) (snm1)edge[->-out=.5] node[above]{$e_{n-2}$} (dts); 
    \end{tikzpicture} 
    \\
    (a) & (b) \\
    \\
    \begin{tikzpicture}[node distance=1.8cm, every node/.style={rectangle, draw}]
      \node(s1){$\max$}; \node(s2)[right of=s1]{$\max$}; \node(dts)[draw = none, right of=s2]{$\dots$}; \node(snm1)[right of=dts]{$\max$};
      \node(f1)[below of=s1]{$f_1$}; \node(f2)[below of=s2]{$f_2$}; \node(fnm1)[below of=snm1]{$f_{n-1}$}; \node(fn)[right of=snm1]{$f_n$};
      \node(d)[draw=none, node distance=1.2cm, above of=s1]{$x$};
      \path[every node/.style={}]
      (s1)edge[->-out=.5] (d)
      (s1)edge node[right]{$t_1$} (f1) (s2)edge node[right]{$t_2$}(f2) (snm1)edge node[right]{$t_{n-1}$}(fnm1) edge node[above]{$e_{n-1}$}(fn)
      (s2)edge[->-out=.5] node[above]{$e_1$} (s1) (dts)edge[->-out=.5] node[above]{$e_2$} (s2) (snm1)edge[->-out=.5] node[above]{$e_{n-2}$} (dts); 
    \end{tikzpicture} 
   & 
    \begin{tikzpicture}[node distance=1.8cm, every node/.style={rectangle, draw}]
      \node(s1){$f'_1$}; \node(s2)[right of=s1]{$f'_2$}; \node(dts)[draw = none, right of=s2]{$\dots$}; \node(snm1)[right of=dts]{$f'_{n-1}$};
      \node(sn)[right of=snm1]{$f_n$};
      \node(d)[draw=none, node distance=1.2cm, above of=s1]{$x$};
      \node(dump)[node distance=1cm, draw=none, below of=s1]{};
      \path[every node/.style={}]
      (s1)edge[->-out=.5] (d)
      (s2)edge[->-out=.5] node[above]{$e_1$} (s1) (dts)edge[->-out=.5] node[above]{$e_2$} (s2) (snm1)edge[->-out=.5] node[above]{$e_{n-2}$} (dts) 
      (snm1)edge node[above]{$e_{n-1}$}(fn); 
    \end{tikzpicture} 
    \\ (c) & (d)
  \end{tabular}
    \caption{Example~\ref{ex:elim-complexity}: The complexity of eliminating a sum or max indicator function. }
   \label{fig:elim-complexity}
  \end{figure}
  \label{ex:elim-complexity}
\end{Example}

It is not hard to see that one may impose an elimination ordering such that the elimination algorithm
may be viewed as one that merges a node with its neighbors, picks another node and merges it with its neighbors, and so on, until
there is only one node left. 
We refer to such elimination as ``block elimination,'' and to the elimination algorithm at the block level as the ``block elimination algorithm,''
i.e., in each step, the block elimination algorithm replaces a function $f_i$ and its neighbors with the function 
$\big\langle f_{i},f_{j}:j \in {\rm ne}(i) \big\rangle$.
%
%
In the special case where each block eliminated function $f_i$ is the equality indicator and
none of the neighbors of $f_i$ share any edges, cf. Fig.~\ref{fig:block-elim} and Example~\ref{ex:block-elim}, then the block elimination becomes the classical elimination
algorithm of undirected graphical models. (An undirected graphical model can be converted to a FG which by
Proposition~\ref{prop:mul-NFG} is equivalent to a constrained NFG whose interface functions are equality indicators--- Note that none of the
neighbors of such equality nodes shares any edges due to the bipartite nature of an NFG model.)
We remark that whereas the elimination algorithm does not preserve the bipartite structure of an NFG in each edge elimination
step, on the block elimination level, the algorithm respects such bipartite structure. 
(Let $I$ and $J$ be the two independent vertex sets, and let $f_{ {\rm ne}(i)}$ be the node resulting from merging node $i \in I$
and its neighbors. Let node $j \in J$ be connected to $f_{ {\rm ne}(i)}$ by some edge $e$, then $e \in E(i)$ or $e \in
E(j')$ for some $j' \in {\rm ne}(i)$. Both such cases are impossible, since $e \in E(i)$ implies $j \in {\rm ne}(i)$ and so $j$ would have been merged with
$i$ in the block elimination, and $e \in E(j')$ violates the bipartite assumption.)
\begin{figure}[ht]
  \centering
  \subfigure[]{
  \begin{tikzpicture}[scale=.9, every node/.style={draw, rectangle, node distance=\dist, transform shape}]
    \def\dist{2cm}; \def\angle{63.43}; \def\radius{\dist/2*sqrt(5)};
    \node(q1)[label=right:$q_1$]{$=$}; \node(q2)[right of=q1, label=right:$q_2$]{=}; 
    \node(q3)[right of=q2, label=right:$q_3$]{=}; \node(q4)[right of=q3, label=right:$q_4$]{=};
    \node(f1) at ([shift={(180+\angle:\radius)}]q1){$f_1$}; \node(f2) at ([shift={(-\angle:\radius)}]q1){$f_2$};
    \node(f3) at ([shift={(-\angle:\radius)}]q2){$f_3$}; \node(f4) at ([shift={(-\angle:\radius)}]q3){$f_4$};
    \node(g1) at ([shift={(90:\dist/2)}]q1){$g_1$}; \node(g2) at ([shift={(90:\dist/2)}]q2){$g_2$}; 
    \node(g3) at ([shift={(90:\dist/2)}]q3){$g_3$}; \node(x4)[draw=none] at ([shift={(90:\dist/2)}]q4){$x_4$};
    \path[every node/.style={transform shape}]
    (q1)edge node[left]{$x_1$}(g1)edge(f1)edge(f2) (q2)edge node[left]{$x_2$}(g2)edge(f1)edge(f3) (q3)edge node[left]{$x_3$}(g3)edge(f2)edge(f4) (q4)edge(x4)edge(f2)edge(f3);
  \end{tikzpicture} }
  \subfigure[]{
  \begin{tikzpicture}[scale=.9, every node/.style={draw, rectangle, node distance=\dist, transform shape}]
    \def\dist{2cm}; \def\angle{63.43}; \def\radius{\dist/2*sqrt(5)};
    \node(q1)[label=right:$q_1$]{$=$}; \node(q2)[right of=q1, label=right:$q_2$]{=};
    \node(q3)[right of=q2, label=right:$q_3$]{=}; \node(q4)[right of=q3, label=right:$q_4$]{=};
    \node(f1) at ([shift={(180+\angle:\radius)}]q1){$f_1$}; \node(f2) at ([shift={(-\angle:\radius)}]q1){$f_2$};
    \node(f3) at ([shift={(-\angle:\radius)}]q2){$f_3$}; \node(f4) at ([shift={(-\angle:\radius)}]q3){$f_4$};
    \node(g1)[label=above left:$f'_1$] at ([shift={(90:\dist/2)}]q1){$g_1$}; \node(g2) at ([shift={(90:\dist/2)}]q2){$g_2$};
    \node(g3) at ([shift={(90:\dist/2)}]q3){$g_3$}; \node(x4)[draw=none] at ([shift={(90:\dist/2)}]q4){$x_4$};
    \path[every node/.style={transform shape}]
    (q1)edge node[left]{$x_1$}(g1)edge(f1)edge(f2) (q2)edge node[left]{$x_2$}(g2)edge(f1)edge(f3) 
    (q3)edge node[left]{$x_3$}(g3)edge(f2)edge(f4)  (q4)edge(x4)edge(f2)edge(f3);
    \draw[dashed] (q1) \clbox{-1.4}{-2.4}{1.4}{1.3};
  \end{tikzpicture} }
  \subfigure[]{
  \begin{tikzpicture}[scale=.9, every node/.style={draw, rectangle, node distance=\dist, transform shape}]
    \def\dist{2cm}; \def\angle{63.43}; \def\radius{\dist/2*sqrt(5)};
    \node(q2)[label=right:$q_2$]{=}; \node(q3)[right of=q2, label=right:$q_3$]{=}; \node(q4)[right of=q3, label=right:$q_4$]{=}; 
    \node(f1) at ([shift={(180+\angle:\radius)}]q2){$f'_1$}; \node(f3) at ([shift={(-\angle:\radius)}]q2){$f_3$}; \node(f4) at ([shift={(-\angle:\radius)}]q3){$f_4$};
    \node(g2)[label=above left:$f'_2$] at ([shift={(90:\dist/2)}]q2){$g_2$}; \node(g3) at ([shift={(90:\dist/2)}]q3){$g_3$}; \node(x4)[draw=none] at ([shift={(90:\dist/2)}]q4){$x_4$};
    \path[every node/.style={transform shape}]
    (q2)edge node[left]{$x_2$}(g2)edge(f1)edge(f3) 
    (q3)edge node[left]{$x_3$}(g3)edge(f1)edge(f4) (q4)edge(x4)edge(f1)edge(f3) ;
    \draw[dashed] (q2) \clbox{-1.4}{-2.4}{1.4}{1.3};
  \end{tikzpicture} }
  \subfigure[]{
  \begin{tikzpicture}[scale=.9, every node/.style={draw, rectangle, node distance=\dist, transform shape}]
    \def\dist{2cm}; \def\angle{63.43}; \def\radius{\dist/2*sqrt(5)};
    \node(q3)[label=right:$q_3$]{=}; \node(q4)[right of=q3, label=right:$q_4$]{=}; 
    \node(f3) at ([shift={(180+\angle:\radius)}]q3){$f'_2$}; \node(f4) at ([shift={(-\angle:\radius)}]q3){$f_4$};
    \node(g3)[label=above left:$f'_3$] at ([shift={(90:\dist/2)}]q3){$g_3$}; \node(x4)[draw=none] at ([shift={(90:\dist/2)}]q4){$x_4$};
    \path[every node/.style={transform shape}]
    (q3)edge node[left]{$x_3$}(g3)edge(f1)edge(f4) (q4)edge(x4)edge[bend right=10](f3)edge[bend left=10](f3) ;
    \draw[dashed] (q3) \clbox{-1.4}{-2.4}{1.4}{1.3};
  \end{tikzpicture} }
  \subfigure[]{
  \begin{tikzpicture}[scale=.9, every node/.style={draw, rectangle, node distance=\dist, transform shape}]
    \def\dist{2cm}; \def\angle{63.43}; \def\radius{\dist/2*sqrt(5)};
    \node(q4)[label=right:$q_4$, label=above left:$f'_4$]{=}; \node(x4)[draw=none] at ([shift={(90:\dist/2)}]q4){$x_4$};
    \node(f3)[below of=q4]{$f'_3$};
    \path[every node/.style={transform shape}]
    (q3)edge(x4)edge[bend right=10](f3)edge[bend left=10](f3) ;
    \draw[dashed] (q4) \clbox{-.8}{-2.4}{.8}{.3};
  \end{tikzpicture} }
  \caption{Block elimination algorithm: (a) an example NFG $\G$,  (b), (c), (d), and (e) illustrate the block elimination of nodes
    $q_1$, $q_2$, $q_3$, and $q_4$, respectively, and it is not hard to see that $Z_{\G}(x_4) = f'_{4}(x_4) = f'_{3}(x_4,x_4)$.
    Note that if $\G$ is obtained from a constrained NFG by gluing functions $g_1$, $g_2$, and $g_3$ into the external edges $x_1$,
    $x_2$, and
    $x_3$ of the constrained NFG, then in the special case where every $g_i$ is the constant-one indicator, it follows that
    $Z_{\G}$ is the marginal probability distribution $p_{X_{4}}$, cf. Section~\ref{sec:inference}.}
  \label{fig:block-elim}
\end{figure}
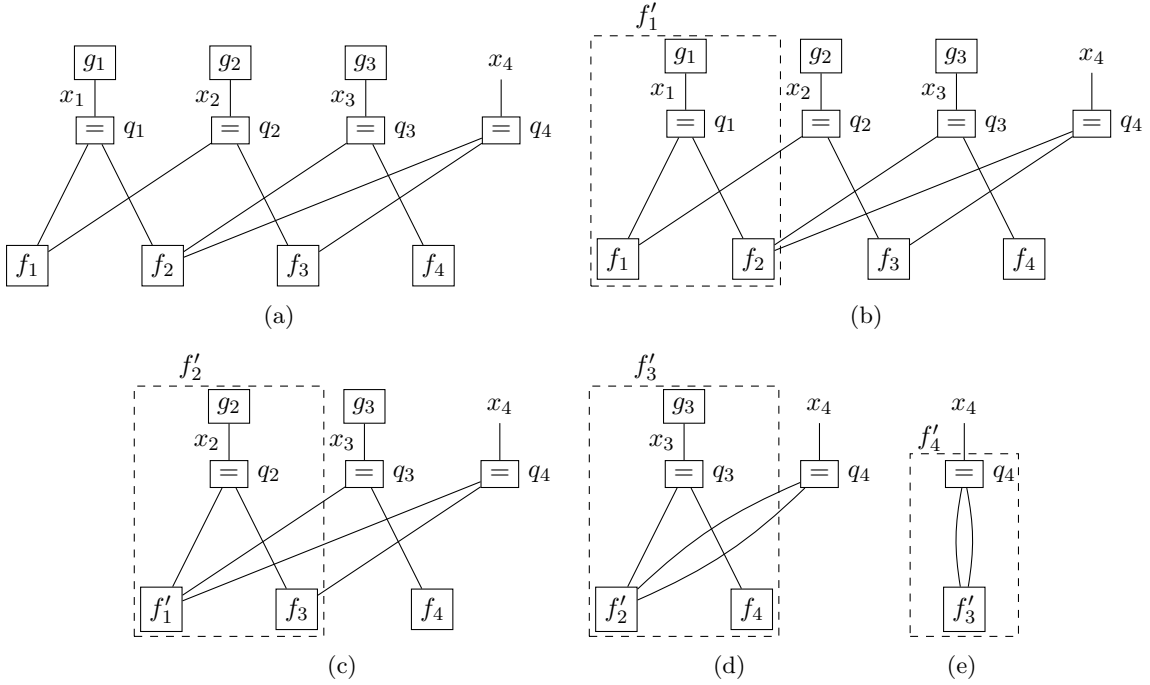

The following example addresses the complexity of a block elimination. 
\begin{Example}
  Let $\G$ be as in Fig.~\ref{fig:elim2-complexity}, where $E_{1}, \dots, E_{n}$ are disjoint. 
  \begin{figure}[ht]
    \center
      \begin{tikzpicture}[v/.style={rectangle, draw, node distance=\vdistance, minimum size=6mm}, d/.style={node distance=\ddistance}]
	\def\vdistance{2.5cm} \def\ddistance{1.2cm} 
	\node(g)[v]{$f$}; 
        \node(f1)[v, below left of=g]{$f_1$}; \node(fn)[v, below right of=g]{$f_n$}; 
	\node(d1l)[d, below left of=f1]{}; \node(d1r)[d, below right of=f1]{};
	\node(dnl)[d, below left of=fn]{}; \node(dnr)[d, below right of=fn]{};
        \path[every node/.style={}]
	(g)edge node[left]{$x_{1}$} (f1) node[node distance=\vdistance/2, below of=g]{$\dots$}
	(g)edge node[right]{$x_{n}$}(fn)
	(f1)edge(d1l)edge(d1r) node[node distance=\ddistance/2, below of=f1]{$\dots$} node[node distance=\ddistance/1.2, below of=f1]{$E_1$}
	(fn)edge(dnl)edge(dnr) node[node distance=\ddistance/2, below of=fn]{$\dots$} node[node distance=\ddistance/1.2, below of=fn]{$E_n$};
    \end{tikzpicture} 
    %
    \caption{Example~\ref{ex:block-elim}.}
    \label{fig:elim2-complexity}
  \end{figure}
  This NFG may be understood as a sub-NFG, the computation of its exterior function corresponds to a block elimination of a node
  and its neighbors in a bigger NFG. Note that if the bigger NFG is a bipartite, then the requirement that $E_1, \ldots, E_n$ be
  disjoint is automatically satisfied.
  Assuming the variable of each edge incident on $f$ takes its values from $\X$, then 
  by recursive application of Lemma~\ref{lemma:elem-complexity}, the complexity 
  of computing the exterior function 
  $Z_{\G}(x_{E_{1}}, \dots, x_{E_{n}})$ is given by (assuming the elimination order $x_1, x_{2}, \dots, x_n$)
  \[\sum_{i=1}^{n} |\X|^{1+n-i+\sum_{j=1}^{i}|E_{j}|}.\]
  That is, if $E_i$ is non-empty for all $i$, then the complexity is of order $|\X|^{1+|E_1|+\cdots+|E_n|}$.
  As an example, consider for instance $n=3$, then the exterior function of $\G$
  is given by
  \[
   Z_{\G}(x_{E_{1}}, x_{E_{2}}, x_{E_{3}}) = 
    \overbrace{\Big\langle f_{3},
    \underbrace{\big\langle f_{2}, 
      \underbrace{\langle f_{1}, f \rangle}_{\rm SP1}
    \big\rangle}_{\rm SP2}
    \Big\rangle}^{\rm SP3}.
    \]
    Computing the first sum of products (SP1), i.e., eliminating $x_{1}$, involves
    $|\X|^{|E_{1}|+3}$ computations by Lemma~\ref{lemma:elem-complexity},
    and similarly, one needs $|\X|^{|E_{1}|+|E_{2}|+2}$
    operations to compute $SP2$. Finally, SP3 requires $|\X|^{|E_{1}|+|E_{2}|+|E_{3}|+1}$ operations.
    \label{ex:block-elim}
\end{Example}

Next we consider the transformation of a function. 
Suppose we are interested in the exterior function of the NFG in Fig.~\ref{fig:transform-complexity}~(a), where 
$x$ and $x'$ take their values from a finite alphabet $\X^{n}$ for some positive integer $n$. Clearly the exterior function
represents a matrix-vector multiplication, and hence may be computed in $|\X|^{2n}$ operations.
In the case where the bivariate function (on $\X^{n} \times \X^{n}$) $f'$ factors as the product of bivariate functions
(on $\X \times \X$) $f'_1, \cdots, f'_n$, then from the previous example, it follows that the complexity of transforming the function $f$ via
transformers $f'_{1}, \cdots, f'_{n}$ is $n|\X|^{n+1}$. (This is the special case
with $|E_1|=\dots=|E_{n}|=1$, compare Figs~\ref{fig:elim2-complexity} and \ref{fig:transform-complexity}~(b).) 
Another way of viewing this is that the elimination
of each $x_{i}$ accounts to a matrix-vector multiplication for a given configuration
$x_{N\backslash \{i\}}$, where $N:=\left\{ 1,\cdots, n \right\}$, and hence requires $|\X|^{2}$ computations. That
is, eliminating $x_{i}$ requires $|\X|^{(n-1)+2}$ operations, and eliminating $x_1, \cdots, x_n$ requires in 
total $n|\X|^{n+1}$ operations, as observed.
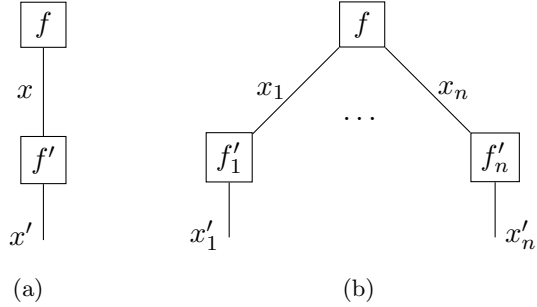
\begin{figure}[ht]
  \center
  \setcounter{subfigure}{0}
  \subfigure[]{
    \begin{tikzpicture}[v/.style={rectangle, draw, node distance=\vdistance, minimum size=6mm}, d/.style={node distance=\ddistance}]
      \def\vdistance{1.8cm} \def\ddistance{1.2cm} 
      \node(u)[v]{$f$}; \node(v)[v, below of=u]{$f'$}; \node(d)[d, below of=v]{};
      \path[every node/.style={}]
      (v)edge node[left]{$x$}(u)edge node[below left]{$x'$}(d);
  \end{tikzpicture} }
  \hspace{1cm}
  \subfigure[]{
    \begin{tikzpicture}[v/.style={rectangle, draw, node distance=\vdistance, minimum size=6mm}, d/.style={node distance=\ddistance}]
      \def\vdistance{2.5cm} \def\ddistance{1.2cm} 
      \node(g)[v]{$f$}; 
      \node(f1)[v, below left of=g]{$f'_1$}; \node(fn)[v, below right of=g]{$f'_n$}; 
      \node(d1)[d, below of=f1]{}; 
      \node(dn)[d, below of=fn]{};
      \path[every node/.style={}]
      (g)edge node[left]{$x_{1}$} (f1) node[node distance=\vdistance/2, below of=g]{$\dots$}
      (g)edge node[right]{$x_{n}$}(fn)
      (f1)edge node[below left]{$x'_{1}$}(d1) 
      (fn)edge node[below right]{$x'_{n}$}(dn);
  \end{tikzpicture} }
  \caption{Transformation of a function.}
  \label{fig:transform-complexity}
\end{figure}

    We emphasize that if the transformers exhibit a special form, then more efficient
    computations may be achieved. For instance, if each $f'_{i}$ is a Fourier
    kernel, then the fast Fourier transform may be used in the elimination of each edge using $\log(|\X|)|\X|^{n}$ computations,
    giving rise to a total $n\log(|\X|)|\X|^{n}$ operations for computing the transformation of $f$. 
    Another important case where such savings are possible is when each $f'_{i}$ is the
    cumulus or the difference transform. In such case, the matrix-vector multiplication associated with the elimination of each
    edge may be performed using $|\X|$ operations, giving rise to a total complexity of
    $n|\X|^{n}$. To see this, consider our
    example with $n=3$ and assume $f'_1, f'_2, f'_3$ are cumulus transforms. To eliminate $x_1$, we start with the initiation
    $s := f$, and for $x'_{1} = 2, \dots, |\X|$, we update $s$ as,
    \[s(x'_1,x_2,x_3)=f(x'_1,x_2,x_3)+s(x'_1-1,x_2,x_3).\]
    Hence, computing SP1 in $|\X|^{3}$ operations instead of $|\X|^{4}$. Similarly,
    we compute SP2 and SP3, giving rise to a total number of operations $3|\X|^{3}$.
    This clearly extends to any $n$, and below we provide two algorithms for fast computation of the cumulus and difference
    transformations, where to facilitate notation, 
    we use $J^{-}$ to denote the set $\{1, \dots, j-1\}$ and
    $J^{+}$ to denote $\{j+1, \dots, n\}$ for any $j \in \{1, \dots, n\}$.
    \begin{algorithm}[Fast cumulus transform] 
    \emph{\texttt{
      Initialize: $s_{0} = f$; \\
      For $j = 1, \dots, n\{$ \\ 
	\indent For each $(x'_{J^{-}}, x_{J^{+}}) \in \X^{n-1} \{$ \\
	  \indent \indent Initialize: $s_{j}(x'_{J^{-}}, x'_{j}=1, x_{J^{+}})
	  = s_{j-1}(x'_{J^{-}}, x_{j}=1, x_{J^{+}})$; \\
	  \indent\indent For $x'_{j} = 2,\dots,|\X| \{$ \\
	    \indent\indent $s_{j}(x'_{J^{-}}, x'_{j}, x_{J^{+}})=
	    s_{j-1}(x'_{J^{-}}, x'_{j}, x_{J^{+}}) 
 	    + s_{j}(x'_{J^{-}}, x'_{j}-1, x_{J^{+}})$; \\
	    \indent\indent $\}$ \\
	  \indent $\}$ \\
      $\}$ \\
      Return $s_n$;
      }} 
    \end{algorithm}
\noindent The difference transformation is performed in exactly the same manner, except for the updating rule: 
    \begin{algorithm}[Fast difference transform] 
     \emph{\texttt{
      Initialize: $s_{0} = f$; \\
      For $j = 1, \dots, n\{$ \\ 
	\indent For each $(x'_{J^{-}}, x_{J^{+}}) \in \X^{n-1} \{$ \\
	  \indent \indent Initialize: $s_{j}(x'_{J^{-}}, x'_{j}=1, x_{J^{+}})
	  = s_{j-1}(x'_{J^{-}}, x_{j}=1, x_{J^{+}})$; \\
	  \indent\indent For $x'_{j} = 2,\dots,|\X| \{$ \\
	    \indent\indent $s_{j}(x'_{J^{-}}, x'_{j}, x_{J^{+}})=
	    s_{j-1}(x'_{J^{-}}, x'_{j}, x_{J^{+}}) 
	    - s_{j-1}(x'_{J^{-}}, x'_{j}-1, x_{J^{+}})$; \\ 
	    \indent\indent $\}$ \\
	  \indent $\}$ \\
    $\}$ \\
    Return $s_n$;
}}
    \end{algorithm}




\subsection{Sum-product algorithm}

Given an NFG $\G$ with no external edges, in many circumstances, for each edge $e$ in the NFG, one may be interested in computing 
the \emph{marginal exterior function} \cite{Forney2011:PartitionFunction} defined as 
\[Z_{\G}(x_{e}):= \sum_{x_{E\backslash \left\{ e \right\}}} \prod_{v \in V} f_{v}(x_{E(v)}).\]
It is possible to compute such marginals using the elimination algorithm by imposing an elimination ordering such that
$x_e$ appears last. (More precisely, the elimination algorithm is slightly modified here such that when it reaches $x_e$ it
multiplies the functions of the two nodes incident with $e$ without summing out $x_e$.)
%
%
Although the elimination algorithm is exact for any NFG, it may be expensive to perform, as it is likely to produce
nodes with large degrees. Further, in addressing the marginals problem above, the elimination algorithm is repeated for 
each marginal, giving rise to some redundant computations since most of the computations used for evaluating one of the
marginals can be used in determining some other marginals.
The \emph{sum-product algorithm} (SPA) is an efficient alternative in the case of NFGs with no cycles. 
We refer the reader to \cite{frank:factor} for an excellent exposure to the SPA on FGs,
and to \cite{Forney2001:Normal, Forney2011:PartitionFunction} for its formulation on NFGs.

Let $\G$ be a tree NFG (i.e. an NFG whose underlying graph is a tree) with a vertex set $V$ and an edge set $E$ comprised entirely of internal edges, i.e. $\G$ has no
external edges. 
To facilitate notation, if nodes $u$ and $v$ are neighbors, we use $e_{uv}$ to denote the edge $\left\{ u,v \right\}$---
  Note that $e_{uv} = e_{vu}$.   
  The SPA defines a set of \emph{messages} that are passed between adjacent nodes, where we use $\mu_{u \rightarrow v}$ to denote the
  message passed from node $u$ to node $v$. The messages are governed by the following \emph{update rule} (Fig.~\ref{fig:spa}): \\
  \[\mu_{u \rightarrow v}(x_{e_{uv}}) = \sum_{x \in \X_{E(u) \backslash \{ e_{uv} \}}} f_u(x,x_{e_{uv}}) 
    \prod_{v' \in \N(u) \backslash \{v\}} \mu_{v' \rightarrow u}(x_{e_{v'u}}),\]
    where a node $u$ sends its message to adjacent node $v$ only if it has received the messages of all its other neighbors, and
    the algorithm terminates when every node has sent a message to all its neighbors.
  That is, the message $\mu_{u \rightarrow v}$ is the sum-of-products
  $\big\langle f_{u}, \mu_{v' \rightarrow u}:v' \in {\rm ne}(u)\backslash \left\{ v \right\} \big\rangle,$
  and the complexity of such computation is $(\deg(u)-1)|\X|^{\deg(u)-1}$.
  In the special case when $f_{u}$ is the indicator function $\delta_{\sum}$ or $\delta_{\max}$, it is not hard to see that the
  complexity becomes $(\deg(u)-2)|\X|^{2}$.
  Further, it is clear that when $f_u$ is the equality indicator, the message sent to node $v$ 
  is simply the multiplication of the incoming
  messages from all other neighbors of $u$, i.e.,
  \[\mu_{u \rightarrow v}(x_{e_{uv}}) = \prod_{v' \in {\rm ne}(u) \backslash \{v\}} \mu_{v' \rightarrow u}(x_{e_{uv}}),\]
  and the complexity of computing such message is $(\deg(u)-2)|\X|$.

  Note that upon termination, the SPA would have computed $2|E|$ messages with two messages per edge, and it is possible to show, due
  to the tree structure, that each marginal $Z_{\G}(x_{e})$ is given as the product of the two messages carried by the edge $e$. That
  is, for any $e_{uv} \in E$, we have 
  \[Z_{\G}(x_{e_{uv}}) = \mu_{u \rightarrow v}(x_{e_{uv}}) \mu_{v \rightarrow u}(x_{e_{uv}}).\]

  We remark that although the SPA was formulated on NFGs with no external edges, this does not present a serious limitation, as 
  in many applications one converts the external edges, if present, into regular ones by gluing the constant-one or the
  evaluation indicators. (For each external edge the choice of the indicator function depends on interest and whether such edge is
  observed as ``evidence'' or not, cf. Section~\ref{sec:inference}.)
  However, if one insists on NFGs with external edges, then the SPA algorithm still works for such NFGs, and it is possible to show that in this case
  if $L$ is the set of external edges, then for any internal edge $e_{uv} \in E$, the product 
  $\mu_{u \rightarrow v} \mu_{v \rightarrow u}$ is equal to $Z_{\G}(x_{e_{uv}}, x_{L})$ defined as $Z_{\G}(x_{e_{uv}},
  x_{L}):= \sum_{x_{E \backslash (L \cup \left\{e_{uv} \right\})}} \prod_{v\in V} f_{v}(x_{E(v)})$, and hence, the exterior
  function is given as $Z_{\G}(x_L) = \sum_{x_{e_{uv}}} Z_{\G}(x_{e_{uv}},x_{L})$. However, in this case, the SPA might be
  expensive to perform since the size of each message increases every time it is passed by a node with a dangling edge, as the
  message accumulates the variable of such edge as an argument.


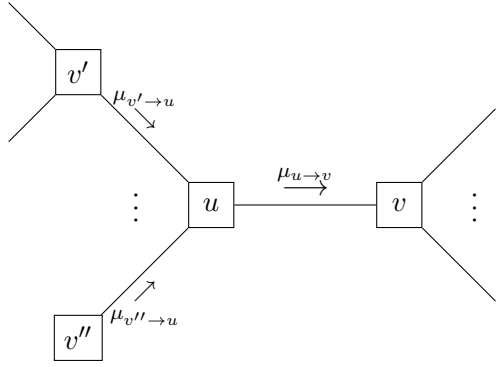
\begin{figure}[ht]
\centering
\begin{tikzpicture}[v/.style={node distance=2.5cm,rectangle,fill=none,draw, minimum size=6mm}, 
  d/.style={node distance=2cm}, every node/.append style={transform shape}]
  \node(u)[v]{$u$}; \node(nu1)[v, above left of=u]{$v'$}; \node(nun)[v, below left of=u]{$v''$};
  \node(v)[v,right of=u]{$v$};
  \node(vd1)[d, above right of=v]{}; \node(vd2)[d, below right of=v]{}; 
  \node(nud1)[d, node distance=1.5cm, above left of=nu1]{}; \node(nud2)[d, node distance=1.5cm, below left of=nu1]{};
  \path[every node/.style={transform shape}]
    (u)edge node[above]{$\stackrel{\mu_{u\rightarrow v}}{\longrightarrow}$} (v)
    (u)edge node[above]{${\mu_{v'\rightarrow u} \atop \searrow}$}  (nu1)
    node[left of=u, rotate=90]{\dots} 
    (u)edge node[below]{${\nearrow \atop \mu_{v'' \rightarrow u}}$} (nun)
    (v)edge(vd1) edge(vd2)
    node[right of=v, rotate=90]{\dots}
    (nu1)edge(nud1)edge(nud2)
    ;
\end{tikzpicture}
\caption{SPA update rule, where the message sent from node $u$ to its neighbor $v$ is computed using all the messages arriving to $u$
  from all its other neighbors, and its local function $f_u$, namely, $\mu_{u\rightarrow v}=\big\langle f_u, \mu_{v'\rightarrow u},
  \ldots, \mu_{v''\rightarrow u}\big\rangle$.}
\label{fig:spa}
\end{figure}

Below, we give a simple example illustrating the SPA.
\begin{Example}[SPA example]
Given the NFG in Fig.~\ref{fig:ex-spa}, we have

\begin{figure}[ht]
 \centering
 \begin{tikzpicture}[node distance=2.5cm,every node/.style={rectangle,fill=none,draw}]
	\node(p1) {$f_1$}; 
	\node(p2) [right of=p1] {$f_2$};
	\node(p3) [right of=p2] {$f_3$};
	\node(p4) [right of=p3] {$f_4$}; 
	\node(p5) [above of=p3] {$f_5$};
	\path[every node/.style={}]
	(p1)edge node[below]{$y_1$} node[above]{$\stackrel{\mu_{1\rightarrow 2}}{\longrightarrow} \ \stackrel{\mu_{2\rightarrow 1}}{\longleftarrow}$} (p2)
	(p2)edge node[below]{$y_2$} node[above]{$\stackrel{\mu_{2\rightarrow 3}}{\longrightarrow} \ \stackrel{\mu_{3\rightarrow 2}}{\longleftarrow}$}(p3)
	(p3)edge node[below]{$y_3$} node[above]{$\stackrel{\mu_{3\rightarrow 4}}{\longrightarrow} \ \stackrel{\mu_{4\rightarrow 3}}{\longleftarrow}$}(p4)
	(p3)edge node[left]{$y_4$} node[above, rotate=-90]{$\stackrel{\mu_{5\rightarrow 3}}{\longrightarrow}
	 \ \stackrel{\mu_{3\rightarrow 5}}{\longleftarrow}$}
	(p5);
    \end{tikzpicture}
    \caption{SPA example.}
    \label{fig:ex-spa}
\end{figure}
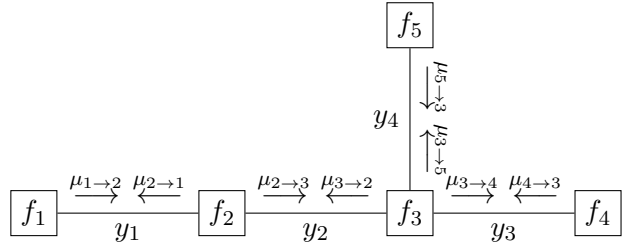

\begin{flalign*}
&  \mu_{1\rightarrow 2}(y_1) = f_{1}(y_{1}), & &
\mu_{4\rightarrow 3}(y_3) =  f_4(y_3),  \\
&  \mu_{2\rightarrow 3}(y_2) = \sum_{y_1} f_2(y_1,y_2) \mu_{1\rightarrow 2}(y_1), & &
\mu_{3\rightarrow 5}(y_4) = \sum_{y_2,y_3} f_3(y_2, y_3,y_4) \mu_{2\rightarrow 3}(y_2)\mu_{4\rightarrow 3}(y_3), \\
&  \mu_{5\rightarrow 3}(y_4) = f_{5}(y_{4}), & &
\mu_{3\rightarrow 2}(y_2) = \sum_{y_3, y_{4}} f_3(y_2,y_3, y_{4}) \mu_{4\rightarrow 3}(y_3) \mu_{5\rightarrow 3}(y_{4}), \\
&  \mu_{3\rightarrow 4}(y_3) = \sum_{y_2,y_4} f_3(y_2,y_3,y_4) \mu_{2\rightarrow 3}(y_2)\mu_{5\rightarrow 3}(y_4), & &
\mu_{2\rightarrow 1}(y_1) = \sum_{y_2} f_2( y_1,y_2) \mu_{3\rightarrow 2}(y_2), \\
\end{flalign*}
and it is clear by direct substitution that, for instance,
\[Z_{\G}(y_2) = \mu_{2 \rightarrow 3}(y_2) \mu_{3 \rightarrow 2}(y_2).\]   
\end{Example}


Another example is provided below in relation to the difference transform and the ``derivative sum-product algorithm'' of Huang
and Frey \cite{Frey:CDN2011}.

\begin{Example}
  \begin{figure}[ht]
    \centering
    \begin{tikzpicture}[every node/.style={node distance=\dist, draw, rectangle, transform shape}]
      \def\dist{2cm};
      \node(q1)[label=left:$q_1$]{$=$}; 
      \node(f1)[below of=q1]{$f_1$}; 
      \node(f2) at ([shift={(-45:\dist*sqrt(2))}]q1) {$f_2$}; 
      \node(f3)[right of=f2]{$f_3$};
      \node(q2)[above of=f3, label=right:$q_2$]{$=$}; 
      \node(d1)[node distance=\dist*.7, above of=q1, label=left:$d_1$]{$D$}; 
      \node(d2)[node distance=\dist*.7, above of=q2, label=right:$d_2$]{$D$}; 
      \node(u1)[node distance=\dist*.7, above of=d1, label=left:$u_1$]{$\delta_{\overline{x}_1}$}; 
      \node(u2)[node distance=\dist*.7, above of=d2, label=right:$u_2$]{$\delta_{\overline{x}_2}$}; 
      \path[every node/.style={transform shape}]
      (u1)edge node[right]{$x_1$}(d1)
      (u2)edge node[left]{$x_2$}(d2)
      (q1)edge node[right]{$y_1$}(d1)edge node[right]{$y'_1$}(f1)edge node[right]{$y''_1$}(f2)
      (q2)edge node[left]{$y_2$}(d2)edge node[left]{$y''_2$}(f2)edge node[left]{$y'_2$}(f3) ;
    \end{tikzpicture}
    \caption{}
    \label{fig:spa-diff}
  \end{figure}
Let $\G$ be as in Fig.~\ref{fig:spa-diff}, the SPA computes the following messages: \\
\begin{tabular}{ll}
$\mu_{u_{1} \rightarrow d_{1}}(x_1) = \delta_{\overline{x}_{1}}(x_1)$; &
$\mu_{u_{2} \rightarrow d_{2}}(x_2) = \delta_{\overline{x}_{2}}(x_2)$;  \\
$\mu_{d_1 \rightarrow q_1}(y_1) = \sum_{x_1}D(x_1,y_1) \mu_{u_1 \rightarrow d_1}(x_1)$; \hspace{1cm}  &
$\mu_{d_2 \rightarrow q_2}(y_2) = \sum_{x_2}D(x_2,y_2) \mu_{u_2 \rightarrow d_2}(x_2)$;  \\
$\mu_{f_1 \rightarrow q_1}(y'_1) = f_1(y'_1)$; & 
$\mu_{q_2 \rightarrow f_3}(y'_2) = \mu_{f_2 \rightarrow q_2}(y'_2) \mu_{d_2 \rightarrow q_2}(y'_2)$; \\
$\mu_{q_1 \rightarrow f_2}(y''_1) = \mu_{d_1 \rightarrow q_1}(y''_1) \mu_{f_1 \rightarrow q_1}(y''_1)$; &
$\mu_{q_2 \rightarrow f_2}(y''_2) = \mu_{d_2 \rightarrow q_2}(y''_2) \mu_{f_3 \rightarrow q_2}(y''_2)$; \\
$\mu_{f_2 \rightarrow q_2}(y''_2)=\sum_{y''_1}f_2(y''_1,y''_2) \mu_{q_1 \rightarrow f_2}(y''_1)$; &
$\mu_{f_2 \rightarrow q_1}(y''_1)=\sum_{y''_2}f_2(y''_1,y''_2) \mu_{q_2 \rightarrow f_2}(y''_2)$; \\
$\mu_{f_3 \rightarrow q_2}(y'_2) = f_3(y'_2)$; &
$\mu_{q_1 \rightarrow f_1}(y'_1) = \mu_{f_2 \rightarrow q_1}(y'_1) \mu_{d_1 \rightarrow q_1}(y'_1)$; \\
$\mu_{q_2 \rightarrow d_2}(y_2) = \mu_{f_2 \rightarrow q_2}(y_2) \mu_{f_3 \rightarrow q_2}(y_2)$; &
$\mu_{q_1 \rightarrow d_1}(y_1) = \mu_{f_1 \rightarrow q_1}(y_1) \mu_{f_2 \rightarrow q_1}(y_1)$; \\
$\mu_{d_2 \rightarrow u_2}(x_2) = \sum_{y_2} D(x_2,y_2) \mu_{q_2 \rightarrow d_2}(y_2)$; &
$\mu_{d_1 \rightarrow u_1}(x_1) = \sum_{y_1} D(x_1,y_1) \mu_{q_1 \rightarrow d_1}(y_1)$.
\end{tabular}

By noting that 
$\mu_{d_1 \rightarrow q_1}(y_1) = D(\overline{x}_1, y_1)$ and $\mu_{d_2 \rightarrow q_2}(y_2) = D(\overline{x}_2, y_2)$, one can see by direct substitution that
\[\mu_{d_1 \rightarrow u_1}(x_1) = \sum_{y_1}D(x_1,y_1)f_1(y_1) \sum_{y''_2}D(\overline{x}_2, y''_2)f_2(y_1,y''_2)f_3(y''_2),\] and
\[\mu_{d_2 \rightarrow u_2}(x_2) = \sum_{y_2}D(x_2,y_2)f_3(y_2)\sum_{y''_1}D(\overline{x}_1,y''_1)f_2(y''_1,y_2)f_1(y''_1).\]
If we use the notation $\partial_{x}f(x,y):=\sum_{z}D(x,z)f(x,y)$ to denote the difference transform of a function $f$ (with respect to
$x$), then the above two messages can equivalently be written as:
\[\mu_{d_1 \rightarrow u_1}(x_1) = \partial_{x_1}\big[f_1(x_1) \partial_{x_2}[f_2(x_1,x_2)f_3(x_2)]|_{x_2=\overline{x}_2} \big], \]
which is the difference transform of the product of all hidden functions with respect to $x_1, x_2$ evaluated at $x_2 =
\overline{x}_2$, and similarly
\[\mu_{d_2 \rightarrow u_2}(x_2) = \partial_{x_2}\big[f_3(x_2) \partial_{x_1}[f_2(x_1,x_2)f_1(x_1)]|_{x_1=\overline{x}_1}\big] \] 
is the difference transform evaluated at $x_1 = \overline{x}_1$. 
\label{ex:diff-spa}
\end{Example}

Let $\G$ be a tree NFG that is also a constrained NFG with a set $I$ of interface nodes consisting of equality indicators. 
Let $\G'$ be the
NFG obtained from $\G$ by gluing on each dangling edge a difference function, and $\G''$ be the NFG resulting from
$\G'$ by gluing an evaluation function on the other end of each such difference function, cf. Fig.~\ref{fig:spa-diff}.
Performing the SPA on $\G''$, it is clear that the discussion in the previous example
extends to any such $\G''$. That is, the message $\mu_{d_i \rightarrow u_i}(x_i)$ is the difference transform of the product
of the hidden functions with respect to $x_I$ evaluated at $\overline{x}_{I\backslash \left\{ i \right\}}$, where $d_i$ is the
difference node adjacent to interface node $i$ and $u_i$ is the evaluation node adjacent to $d_i$. This is the
``derivative-sum-product'' algorithm of \cite{Frey:CDN2011} in the case of finite alphabets. 

We close with two remarks: First, we emphasize that in performing the SPA over $\G''$, one may utilize the fast difference algorithm in computing the messages
emitted by the difference nodes, making the complexity of computing such messages linear in the alphabet size.
Second, as one may expect, marginalization by summation on $\G'$ is marginalization by evaluation on $\G$, and the difference
function guarantees the conversion between such notions of marginalization, as demonstrated in the example below.   
\begin{Example}[continues=ex:diff-spa]
  Assume the function of node $u_1$ in the NFG in Fig.~\ref{fig:spa-diff} is replaced with the constant-one indicator (this accounts to marginalizing
  $x_1$ by summing it out), and assume all variables take their values from a set $\X$. Then we have $\mu_{u_1 \rightarrow d_1}(x_1) = 1$, and
  from the definition of the difference function, it is clear that $\mu_{d_1 \rightarrow q_1}(y_1) = \delta_{|\X|}(y_1)$. By direct
  substitution, it follows that
  \[\mu_{d_2 \rightarrow u_2}(x_2) = \sum_{y_2}D(x_2,y_2)f_1(|\X|)f_2(|\X|,y_2)f_3(y_2).\]
  Or equivalently,
  \[\mu_{d_2 \rightarrow u_2}(x_2) = \partial_{x_2}\big[ f_1(|\X|)f_2(|\X|,x_2)f_3(x_2) \big] \] 
  That is, the message $\mu_{d_{2}\rightarrow u_{2}}$ is the difference transform with respect to $x_2$ of the multiplication of the hidden functions, with
  $x_1$ marginalized by evaluation at $|\X|$. One may verify that $\mu_{d_1 \rightarrow u_1}$ remains unchanged. 
\end{Example}

\subsection{Indirect evaluation of the exterior function}

Below, we discuss an indirect method for finding the exterior function of a bipartite NFG $\G$, where a holographic
transformation, a one that preserves the exterior function, is applied to the NFG \emph{a priori} in hope of facilitating the
computation.  The idea is to perform a transformation, if it exists, that replaces each function $f_i$ with an equality indicator,
and hence benefit from the low computational complexity of such nodes in the elimination or the SPA.  Of course
one might not be able to find a holographic transformation that converts each function $f_i$ into an equality indicator, in which
case, one may settle with one that produces such effect for a subset $I' \subseteq I$.  For this approach to work, it
is necessary that:
  1) There exists a set of transformers $\left\{g_{e}: e\in E \right\}$ such that
    $\big\langle f_{i}, g_{e}: e \in E(i) \big\rangle = \delta_{=}$, for all $i \in I'$ for some non-empty $I'$, and
    2) There exists efficient algorithms for computing the transformations induced by $g_{e}$.
The following example demonstrates this approach.
\begin{Example}
  Let $\X$ be a finite ordered set and consider the NFG in
  Fig.~\ref{fig:indirect-infer}~(a) where each edge variable assumes its values from the partially-ordered set 
  $\X^{n}$ for some integer $n$. Directly computing the exterior function requires $|\X|^{2n}$ operations. However, the
  exterior function may be computed using a number of operations of order $n|\X|^{n}$ by first computing the fast cumulus
  transforms of $f_1$ and $f_2$, multiply the resulting two functions, and then invert the result using the fast difference
  transform, Fig.~(c). The exterior function is invariant under such procedure since it simply accounts to the holographic 
  transformation in Fig.~(b), which is equivalent to the NFG in (a) by the GHT and to the one in (c) by Lemma~\ref{lem:cumulus}.

  The discussion above parallels the well-known fast Fourier transform approach to finding the convolution of functions. In this
  case, the fast Fourier transform is used to reduce the complexity of computing the convolution of two functions defined on
  $\X^{n}$ (assuming $\X$ is a finite abelian group) from order $|\X|^{2n}$ to $n\log(|\X|)|\X|^{n}$ by first computing the fast
  Fourier transforms of $f_1$ and $f_2$, multiply the resulting two functions, and then invert the result using the fast Fourier
  inverse transform. This is justified by the relation between the sum and parity indicators, and the duality of the equality and
  parity indicators under the Fourier transform, Figs~\ref{fig:indirect-infer}~(d)--(f). 
\end{Example}
\begin{figure}[ht]
  \center
  \setcounter{subfigure}{0}
  \subfigure[]{
  \begin{tikzpicture}[v/.style={node distance=2cm, draw, rectangle, minimum size=6mm}, 
    d/.style={node distance=1.5cm}]
    \node(max)[v]{$\max$}; \node(d1)[d, above of=max]{};
    \node(f)[v, below left of=max]{$f_1$}; \node(g)[v, below right of=max]{$f_2$};
    \path[every node/.style={}]
    (f)edge(max) (max)edge(g) (max)edge[->-out=.5](d1);
  \end{tikzpicture} }
  \hspace{.5cm}
  \subfigure[]{
  \begin{tikzpicture}[v/.style={node distance=3cm, draw, rectangle, minimum size=6mm}, 
     t/.style={node distance=1cm, draw, rectangle, minimum size=4mm},
     d/.style={node distance=3cm}]
    \node(max)[v]{$\max$}; 
    \node(d1)[d, above of=max]{};
    \node(Am)[t, above of=max]{$A$}; \node(Dm)[t, above of=Am]{$D$};
    \node(mDl)[t, below left of=max]{$D$}; \node(mDr)[t, below right of=max]{$D$};
    \node(f)[v, below left of=max]{$f_1$}; \node(g)[v, below right of=max]{$f_2$};
    \node(Af)[t, above right of=f]{$A$}; \node(Ag)[t, above left of=g]{$A$};
    \path[every node/.style={}]
    (f)edge(Af) (Af)edge[o--](mDl) (mDl)edge[o--](max)
    (max)edge[--o](mDr) (mDr)edge[--o](Ag) (Ag)edge(g) 
    (max)edge[->-out=.7](Am) (Am)edge[o--](Dm) (Dm)edge[o--](d1);
  \end{tikzpicture} }
  \hspace{.5cm}
  \subfigure[]{
  \begin{tikzpicture}[v/.style={node distance=2cm, draw, rectangle, minimum size=6mm}, 
    d/.style={node distance=2cm}]
    \node(max)[v]{$=$}; \node(d1)[d, above of=max]{}; \node(Dm)[v, node distance=1cm, above of=max]{$D$};
    \node(f)[v, below left of=max]{$F_1$}; \node(g)[v, below right of=max]{$F_2$};
    \path[every node/.style={}]
    (f)edge(max) (max)edge(g) (max)edge(Dm) (Dm)edge[o--](d1);
  \end{tikzpicture}	}
  \subfigure[]{
  \begin{tikzpicture}[v/.style={node distance=2cm, draw, rectangle, minimum size=6mm}, 
    d/.style={node distance=1.5cm}]
    \node(max)[v]{$\sum$}; \node(d1)[d, above of=max]{};
    \node(f)[v, below left of=max]{$f_1$}; \node(g)[v, below right of=max]{$f_2$};
    \path[every node/.style={}]
    (f)edge(max) (max)edge(g) (max)edge[->-out=.5](d1);
  \end{tikzpicture} }
  \hspace{.5cm}
  \subfigure[]{
  \begin{tikzpicture}[v/.style={node distance=2.2cm, draw, rectangle, minimum size=4mm}, 
     t/.style={node distance=.7cm, draw, rectangle, inner sep=.5mm, minimum size=4mm},
     d/.style={node distance=.8cm}]
    \node(max)[v]{$+$}; 
    \node(Am)[t, above of=max]{$\hat{\kappa}$};  
    \node(Dm)[t, above of=Am]{$\kappa$}; \node(s2)[t, above of=Dm]{$+$};
    \node(s2T1)[t, above of=s2]{$\kappa$}; \node(s2T2)[t, above of=s2T1]{$\hat{\kappa}$};
    \node(d1)[d, above of=s2T2]{};
    \node(mDl)[t, node distance=.75cm, below left of=max]{$\hat{\kappa}$}; \node(mDr)[t, node distance=.75cm, below right of=max]{$\hat{\kappa}$};
    \node(f)[v, below left of=max]{$f_1$}; \node(g)[v, below right of=max]{$f_2$};
    \node(Af)[t, node distance=.75cm, above right of=f]{$\kappa$}; \node(Ag)[t, node distance=.75cm, above left of=g]{$\kappa$};
    \path[every node/.style={}]
      (f)edge(Af) (Af)edge(mDl) (mDl)edge(max)
      (max)edge(mDr) (mDr)edge(Ag) (Ag)edge(g) 
      (max)edge(Am) (Am)edge(Dm) (Dm)edge(s2) (s2)edge(s2T1) (s2T1)edge(s2T2) (s2T2)edge(d1);
      \draw[dashed] (f) \clbox{-.4}{-.4}{.8}{.8};
      \draw[dashed] (g) \clbox{-.8}{-.4}{.4}{.8};
      \draw[dashed] (max) \clbox{-.8}{-.8}{.8}{1};
      \draw[dashed] (s2) \clbox{-.4}{-1}{.4}{1};
  \end{tikzpicture} }
  \hspace{.5cm}
  \subfigure[]{
  \begin{tikzpicture}[v/.style={node distance=2cm, draw, rectangle, minimum size=6mm}, 
    d/.style={node distance=2cm}]
    \node(max)[v]{$=$}; \node(d1)[d, above of=max]{}; \node(Dm)[v, node distance=1cm, above of=max]{$\hat{\kappa}$};
    \node(f)[v, below left of=max]{$\widehat{f}_1$}; \node(g)[v, below right of=max]{$\widehat{f}_2$};
    \path[every node/.style={}]
      (f)edge(max) (max)edge(g) (max)edge(Dm) (Dm)edge(d1);
  \end{tikzpicture}	}
  \caption{Indirect computation of the exterior function, where $F_j$, and $\widehat{f}_j$ are the cumulus and Fourier transforms of $f_j$, respectively.}
  \label{fig:indirect-infer}
\end{figure}
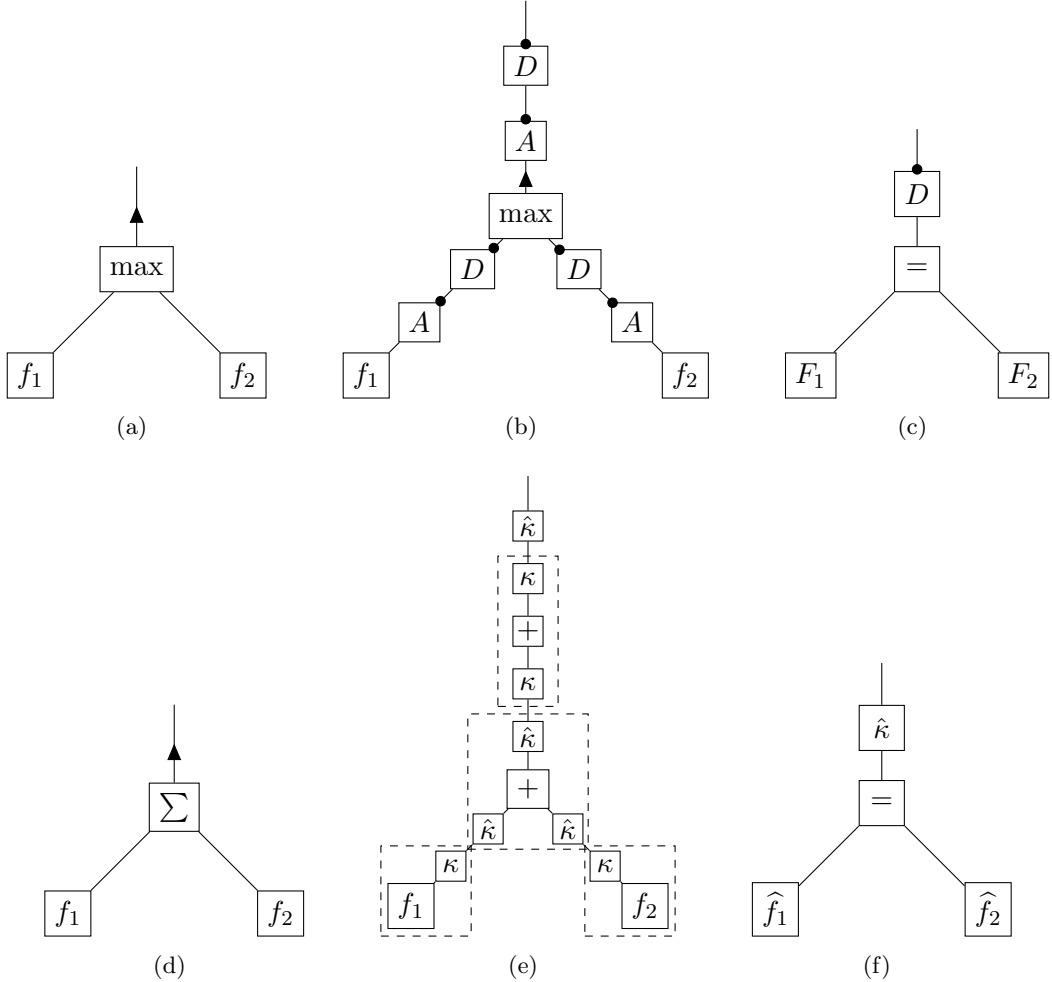

\section{The inference problem}
\label{sec:inference}

Given an NFG $\G(V,E, f_V)$ representing a set of RVs $X_{L}$.
(That is, the exterior function of $\G$ is the probability distribution $p_{X_{L}}$.)
The inference problem is 
to marginalize a set $M \subseteq L$ of the RVs and to evaluate $p_{X_{L}}$ at some observed values (evidence) of RVs $N \subseteq L$, where
$M$ and $N$ are disjoint. That is, the problem is to find
\[  p_{R}(x_R,\overline{x}_N) = \sum_{x_M} p_{X_{L}}(x_R,x_M,x_N) |_{x_N = \overline{x}_N},  \]
where $R = L \backslash (M \cup N)$.
We remark that, in general, $p_{R}$ is not a probability distribution over the RVs $X_{R}$.
It is rather an \emph{up to scale} distribution over $X_{R}$, namely, 
it is the conditional distribution $p_{X_{R}|x_{N}}(x_{R}|x_{N}=\overline{x}_{N})$ 
up to the scaling constant $p_{X_{N}}(\overline{x}_{N})$.
Evidently, the complexity of the inference problem depends primarily on the factorization structure of $p_{X_{L}}$, which
is reflected by the graphical structure of the NFG. In order to perform the desired inference, we define $\G^{*}$ as the NFG 
whose exterior function is the desired function $p_{R}$. That is, we define $\G^{*}$ as the NFG obtained from $\G$ by:
1) Converting each dangling edge $e \in M$ into a regular edge by gluing a new vertex $u_{e}$ to $e$, where $u_e$ is
associated the constant-one function. 
2) Convert each dangling edge $e \in N$ into a regular edge by gluing a new vertex $v_{e}$ to $e$, where $v_e$ is
associated the evaluation indicator $\delta_{\overline{x}_{e}}$.
An example is shown in Fig.~\ref{fig:gstar} where the original NFG is as in (a).
Assuming we are interested in 
$\sum_{x_3} p_{X_{1}X_2X_3}(x_1,x_2,x_3)|_{x_2 = \overline{x}_2}$, 
then $\G^{*}$ is as in (b). 

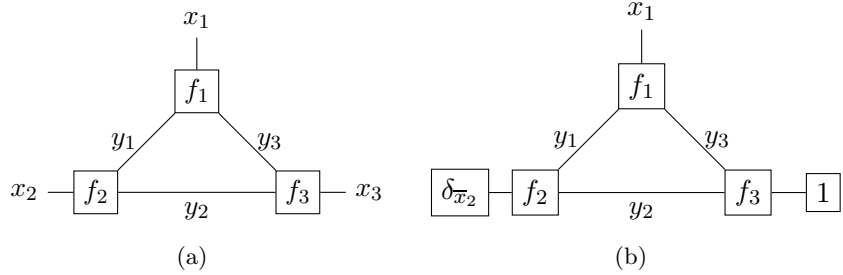
\begin{figure}[ht]
  \begin{center}
    \setcounter{subfigure}{0}
  \subfigure[]{
    \begin{tikzpicture}
      [scale=.95, v/.style={node distance=2cm, rectangle, draw}, d/.style={node distance=1cm}, 
       every node/.append style={transform shape}]
      \node(p1)[v]{$f_1$}; \node(p2)[v, below left of=p1]{$f_2$}; \node(p3)[v, below right of=p1]{$f_3$};
      \node(d1)[d, above of=p1]{$x_1$}; \node(d2)[d, left of=p2]{$x_2$};
      \node(d3)[d, right of=p3]{$x_3$}; 
      \path[every node/.style={transform shape}]
      (p1)edge node[left]{$y_1$}(p2) (p2)edge node[below]{$y_2$}(p3)edge(d2) (p3)edge node[right]{$y_3$}(p1)
      (p1)edge(d1) (p3)edge(d3);
  \end{tikzpicture} }
  \hspace{0cm}
  \subfigure[]{
  \begin{tikzpicture}
    [scale=.95, v/.style={node distance=2cm, rectangle, draw}, d/.style={node distance=1cm}
    every node/.append style={transform shape}]
      \node(p1)[v]{$f_1$}; \node(p2)[v, below left of=p1]{$f_2$}; \node(p3)[v, below right of=p1]{$f_3$};
      \node(d1)[d, above of=p1]{$x_1$}; \node(d2)[d, draw, left of=p2]{$\delta_{\overline{x}_{2}}$};
      \node(d3)[d, draw, right of=p3]{$1$}; 
      \path[every node/.style={transform shape}]
      (p1)edge node[left]{$y_1$}(p2) (p2)edge node[below]{$y_2$}(p3)edge(d2) (p3)edge node[right]{$y_3$}(p1)
      (p1)edge(d1) (p3)edge(d3);
  \end{tikzpicture} }
\end{center}
\caption{Inference: (a) An example NFG $\G$ representing $p_{X_{1}X_2X_3}(x_1,x_2,x_3)$, (b) the resulting $\G^{*}$ assuming we are interested
in $\sum_{x_3} p_{X_{1}X_2X_3}(x_1,x_2,x_3)|_{x_2 = \overline{x}_2}$.}
  \label{fig:gstar}
\end{figure}

Clearly the inference problem is encoded in $\G^{*}$, and hence reduces to computing the exterior
function of $\G^{*}$, which can be performed by invoking the elimination algorithm on $\G^{*}$. From this
equivalence between inference and the computation of the exterior function, one can always assume
that the given NFG represents the desired computation, i.e., one can assume that the NFG is already reduced to the
desired inference $(.)^{*}$ form.

We remark that
in a constrained model, by Proposition~\ref{prop:Const_Eq}, we may assume that each interface node is an equality indicator.
Hence, for each evaluated RV, i.e., for each
$i \in N \subseteq I$, we may 1) for each neighbor $j \in {\rm ne}(i)$ of $i$, connected to
$i$ by edge $e$, replace $f_{j}$ with $f_{j}(x_{E(j)})|_{x_e=\overline{x}_{e}}$ and delete $e$, and 2) delete node $i$.
Hence, the inference problem over constrained models is simply a marginalization one, and one may always assume
$N$ is empty.

On the other hand, for a conditional function of $x$ given $y$, we have $\sum_{x}f(x,y)$ is a constant $c$ independent of $y$.
Hence, in a generative model $\G$ with vertex set $I \cup J$, for each marginalized RV, i.e., for each $i \in M \subseteq I$, 
we may 1) absorb the constant $c_i$ into one of the neighbors of $i$ 
by replacing $f_j$ with $c_i f_j$ for some $j \in {\rm ne}(i)$,
2) for each neighbor $j \in \N(i)$ of $i$, connected to $i$ by edge $e$, replace $f_j$ with
$\sum_{x_e} f_{j}(x_{E(j)})$ and delete $e$, and 3) delete node $i$.
Hence, the inference problem over generative models is simply an evaluation one, and one may always assume $M$ is empty.

\section{Concluding Remarks}
\label{section:conc}
In this paper we presented NFGs as a new class of probabilistic graphical models. We showed that 
this framework and the transformation technique herein unify various previous models, 
including multiplicative models, convolutional factor graphs, and the known transform-domain models.
 We focused on two dual categories of NFG models, constrained and generative NFG models, and revealed an interesting duality in their implied dependence structure. 

We feel that approaching learning and inference problems in a transform domain is methodologically appealing. The generic and flexible transformation technique introduced in this framework may potentially demonstrate great power along that direction.



 It is our hope that the idea and modelling framework presented in this paper find new applications beyond the reach of conventional models.

\balance

\bibliography{C:/Dropbox/bibliographys/Holographic_revised}

\end{document}